\documentclass[11pt,a4paper]{article}

\RequirePackage[OT1]{fontenc}
\RequirePackage{amsthm,amsmath,amsfonts,amssymb}
\usepackage{graphicx}
\usepackage[textwidth=8em,textsize=small]{todonotes}
\usepackage{natbib}
\bibliographystyle{plainnat}
\usepackage{algorithm}
\usepackage{hyperref}
\usepackage{array,multirow}

\newtheorem{proposition}{Proposition}[section]
\newtheorem{theorem}{Theorem}[section]
\newtheorem{lemma}{Lemma}[section]
\newtheorem{condition}{Condition}[section]
\newtheorem{remark}{Remark}[section]
\newcommand{\I}{\mathbb{I}}
\newcommand{\R}{\mathbb{R}}

\newcommand{\comment}[1]{}
\newcommand{\mom}[1]{\kappa_{#1}}
\newcommand{\at}[1]{\Big|_{#1}}

\usepackage[margin = 1in]{geometry}
\usepackage{authblk}

\begin{document}
\title{The Barker proposal: combining robustness and efficiency in gradient-based MCMC}
\author[1]{Samuel Livingstone\footnote{samuel.livingstone@ucl.ac.uk}}
\author[2]{Giacomo Zanella\footnote{giacomo.zanella@unibocconi.it}}

\affil[1]{Department of Statistical Science, University College London.}
\affil[2]{Department of Decision Sciences, BIDSA and IGIER, Bocconi University.}
\maketitle

\begin{abstract}
There is a tension between robustness and efficiency when designing Markov chain Monte Carlo (MCMC) sampling algorithms. Here we focus on robustness with respect to tuning parameters, showing that more sophisticated algorithms tend to be more sensitive to the choice of step-size parameter and less robust to  heterogeneity of the distribution of interest. We characterise this phenomenon by studying the behaviour of spectral gaps as an increasingly poor step-size is chosen for the algorithm. Motivated by these considerations, we propose a novel and simple gradient-based MCMC algorithm, inspired by the classical Barker accept-reject rule, with improved robustness properties. Extensive theoretical results, dealing with robustness to tuning, geometric ergodicity and scaling with dimension, suggest that the novel scheme combines the robustness of simple schemes with the efficiency of gradient-based ones. We show numerically that this type of robustness is particularly beneficial in the context of adaptive MCMC, giving examples where our proposed scheme significantly outperforms state-of-the-art alternatives.
\end{abstract}

\section{Introduction}

The need to compute high-dimensional integrals is ubiquitous in modern statistical inference and beyond (e.g. \cite{brooks2011handbook, krauth2006statistical, stuart2010inverse}). Markov chain Monte Carlo (MCMC) is a popular solution, in which the central idea is to construct a Markov chain with a certain limiting distribution and use ergodic averages to approximate expectations of interest.
In the celebrated Metropolis--Hastings algorithm, the Markov chain transition is constructed using a combination of a `candidate' kernel, to suggest a possible move at each iteration, together with an accept-reject mechanism \citep{metropolis1953equation,hastings1970monte}. 
Many different flavours of Metropolis--Hastings exist, with the most common difference being in the construction of the candidate kernel. In the Random walk Metropolis, proposed moves are generated using a symmetric distribution centred at the current point.  Two more sophisticated methods are the Metropolis-adjusted Langevin algorithm \citep{roberts1996exponential} and Hamiltonian/hybrid Monte Carlo \citep{duane1987hybrid,neal2011mcmc}.  Both use gradient information about the distribution of interest (the \textit{target}) to inform proposals.  Gradient-based methods are widely considered to be state-of-the-art in MCMC, and much current work has been dedicated to their study and implementation (e.g. \cite{beskos2013optimal,durmus2017nonasymptotic,dalalyan2017theoretical}).

Several measures of performance have been developed to help choose a suitable candidate kernel for a given task.  
One of these is high-dimensional scaling arguments, which compare how the efficiency of the method decays with $d$, the dimension of the state space.  For the random walk algorithm this decay is of the order $d^{-1}$ \citep{roberts1997weak}, while for the Langevin algorithm the same figure is $d^{-1/3}$ \citep{roberts1998optimal} and for Hamiltonian Monte Carlo it is $d^{-1/4}$ \citep{beskos2013optimal}.  Another measure is to find general conditions under which a kernel will produce a geometrically ergodic Markov chain.  For the random walk algorithm this essentially occurs when the tails of the posterior decay at a faster than exponential rate and are suitably regular (more precise conditions are given in \citep{jarner2000geometric}).  The same is broadly true of the Langevin and Hamiltonian schemes \citep{roberts1996exponential,livingstone2016geometric, durmus2017convergence}, but here there is an additional restriction that the tails should not decay too quickly.  This limitation is caused by the way in which gradients are used to construct the candidate kernel, which can result in the algorithm generating unreasonable proposals that are nearly always rejected in certain regions \citep{roberts1996exponential, livingstone2016geometric}.

There is clearly some tension between the different results presented above.  According to the scaling arguments gradient information is preferable. The ergodicity results, however, imply that gradient-based schemes are typically less \emph{robust} than others, in the sense that there is a smaller class of limiting distributions for which the output will be a geometrically ergodic Markov chain.  It is natural to wonder whether it is possible to incorporate gradient information in such a way that this measure of robustness is not compromised. Simple approaches to modifying the Langevin algorithm for this purpose have been suggested (based on the idea of truncating gradients, e.g.  \cite{roberts1996exponential,atchade2006adaptive}), but these typically compromise the favourable scaling of the original method. 
In addition to this, it is often remarked that gradient-based methods can be difficult to tune.
Algorithm performance is often highly sensitive to the choice of scale within the proposal \citep[Fig.15]{neal2003slice}, and if this is chosen to be too large in certain directions then performance can degrade rapidly.  
Because of this, practitioners must spend a long time adjusting the tuning parameters to ensure that the algorithm is running well, or develop sophisticated adaptation schemes for this purpose (e.g. \cite{hoffman2014no}), which can nonetheless still
require a large number of iterations to find good tuning parameters
(see Sections \ref{sec:simulations} and \ref{sec:sim_adaptive}). 
We will refer to this issue as \emph{robustness to tuning}.

In this article we present a new gradient-based MCMC scheme, \emph{the Barker proposal}, which combines favourable high-dimensional scaling properties with favourable ergodicity and robustness to tuning properties.  
To motivate the new scheme, in Section \ref{sec:hetero} we present a direct argument showing how the spectral gaps for the random walk, Langevin and Hamiltonian algorithms behave as the tuning parameters are chosen to be increasingly unsuitable for the problem at hand.  In particular, we show that the spectral gaps for commonly used gradient-based algorithms decay to zero exponentially fast in the degree of mismatch between the scales of the proposal and target distributions, while for the random walk Metropolis (RWM) the decay is polynomial.  
In Section \ref{sec:barker} we derive the Barker proposal scheme beginning from a family of $\pi$-invariant continuous-time jump processes, and discuss its connections to the concept of `locally-balanced' proposals, introduced in \citep{zanella2019informed} for discrete state spaces. The name \emph{Barker} comes from the particular choice of `balancing function' used to uncover the scheme, which is inspired by the classical Barker accept-reject rule \citep{barker1965monte}.
In Section \ref{sec:barker_proofs} we conduct a detailed analysis of the ergodicity, scaling and robustness properties of this new method, establishing that it shares the favourable robustness to tuning of the random walk algorithm, can be geometrically ergodic in the presence of very light tails, and enjoys the $d^{-1/3}$ scaling with dimension of the Langevin scheme. The theory is then supported by an extensive simulation study in Sections \ref{sec:simulations} and \ref{sec:sim_adaptive}, including comparisons with state-of-the-art alternative sampling schemes, which highlights that this kind of robustness is particularly advantageous in the context of adaptive MCMC. 
The code to reproduce the experiments is available from the online repository at the link \url{https://github.com/gzanella/barker}.
Proofs and further numerical simulations are provided in the supplement.

\subsection{Basic setup and notation}

Throughout we work on the Borel space $(\R^d, \mathcal{B})$, with $d\geq 1$ indicating the dimension.
For $\lambda \in \R$, we write $\lambda \uparrow \infty$ and $\lambda \downarrow 0$ to emphasize the direction of convergence when this is important.
For two functions $f,g:\R \to \R$, we use the Bachmann--Landau notation $f(t) = \Theta\left(g(t)\right)$ if 
$
\liminf_{t\to\infty} f(t)/g(t) >0$ 
and $\limsup_{t \to\infty} f(t)/g(t) <\infty$.

The Markov chains we consider will be of the Metropolis--Hastings type, meaning that 
the $\pi$-invariant kernel $P$ is constructed as $P(x, dy) := \alpha(x,y)Q(x,dy) + r(x)\delta_x(dy)$, where $Q:\R^d \times \mathcal{B} \to [0,1]$ is a candidate kernel,
\begin{equation} \label{eq:arate_hastings}
\alpha(x,y) := \min \left( 1, \frac{\pi(dy)Q(y,dx)}{\pi(dx)Q(x,dy)} \right)
\end{equation}
is the \textit{acceptance rate} for a proposal $y$ given the current point $x$ (provided that the expression is well-defined, see \cite{tierney1998note} for details here), and $r(x) := 1- \int \alpha(x,y)Q(x,dy)$ is the average probability of rejection given that the current point is $x$.

\section{Robustness to tuning} \label{sec:hetero}

In this section, we seek to quantify the robustness of the random walk, Langevin and Hamiltonian schemes with respect to the mismatch between the scales of $\pi(\cdot)$ and $Q$ in a given direction.
Unlike other analyses in the MCMC literature (e.g. \cite{roberts2001optimal,beskos2018asymptotic}), we are interested in studying how MCMC algorithms perform when they are \emph{not} optimally tuned, in order to understand how crucially performance depends on such design choices (e.g.\ the choice of proposal step-size or pre-conditioning matrix).
The rationale for performing such an analysis is that achieving optimal or even close to optimal tuning can be extremely challenging in practice, especially when $\pi(\cdot)$ exhibits substantial heterogeneity.
This is typically done using past samples in the chain to compute online estimates of the average acceptance rate and the covariance of $\pi$ (or simply its diagonal terms for computational convenience), and then using those estimates to tune the proposal step-sizes in different directions \citep{andrieu2008tutorial}.
If the degree of heterogeneity is large, it can take a long time for certain directions to be well-explored, and hence for the estimated covariance to be representative and the tuning parameters to converge.

In such settings, algorithms that are more robust to tuning are not only easier to use when such tuning is done manually by the user, but can also greatly facilitate the process of learning the tuning parameters \textit{adaptively} within the algorithm.
We show in Sections \ref{sec:simulations} and \ref{sec:sim_adaptive} that if an algorithm is robust to tuning then this adaptation process can be orders of magnitude faster than in the alternative case, drastically reducing the overall computational cost for challenging targets. The intuition for this is that more robust algorithms will start performing well (i.e.\ sampling efficiently) earlier in the adaptation process (when tuning parameters are not yet optimally tuned), which in turn will speed up the exploration of the target and the learning of the tuning parameters.

\subsection{Analytical framework}\label{sec:framework}
The most general scenario we consider is a family of target densities $\pi^{(\lambda,k)}$
indexed by $\lambda >0$ and $k \in \{1,...,d\}$ defined as
\begin{equation} \label{eq:hetero_target}
\pi^{(\lambda,k)}(x):=\lambda^{-k}\pi(x_1/\lambda,\dots,x_k/\lambda,x_{k+1},\dots,x_d)\,,
\qquad
x=(x_1,\dots,x_d)\in\R^d\,,
\end{equation}
where $\pi$ is a density defined on $\R^d$ for which $\pi(x)>0$ for all $x \in \R^d$ and $\log\pi \in C^1(\R^d)$.  The set up allows modification of the scale of the first $k$ components of $\pi^{(\lambda,k)}$ through the parameter $\lambda$.  Our main results are presented for the case $k=1$, and we write $\pi^{(\lambda)} := \pi^{(\lambda,1)}$ for simplicity, before discussing extensions to the $k>1$ setting in Section \ref{sec:extensions}.
We consider targeting $\pi^{(\lambda)}$ using a  Metropolis--Hastings algorithm with fixed tuning parameters, and study performance as $\lambda$ varies.
Intuitively, we can think of $\lambda$ as a parameter quantifying the level of heterogeneity in the problem.
As a concrete example, consider a random walk Metropolis algorithm in which given the current state $x^{(t)}$ the candidate move is $y = x^{(t)} + \sigma\xi$, with $\sigma>0$ a fixed tuning parameter and $\xi \sim N(0, \I_d)$, where $\I_d$ is the $d\times d$ identity matrix. It is instructive to take $\sigma$ as the optimal choice of global scale for $\pi$, meaning when $\lambda$ is far from one then $\sigma$ is no longer a suitable choice for the first coordinate of $\pi^{(\lambda)}$.

In the context of the above, the $\lambda\downarrow 0$ regime is representative of distributions in which one component (in this case the first) has a very small scale compared to all others.  Conversely the $\lambda \uparrow \infty$ regime reflects the case in which one component has a much larger scale than its counterparts.
Studying robustness to tuning in the context of heterogeneity is particularly relevant, as highlighted above, as this is exactly the context in which tuning is more challenging.
The $\lambda\downarrow 0$ regime is particularly interesting and has been recently considered in \citet{beskos2018asymptotic}, where the authors study the behaviour of the random walk Metropolis for `ridged' densities for different values of $k$ using a diffusion limit approach. The focus in that work, however, was on the finding optimal tuning parameters for the algorithm as a function of $\lambda$, whereas the present paper is concerned with the regime in which the tuning parameters are fixed (as discussed above).

The above framework could be equivalently formulated by keeping the target distribution $\pi$ fixed and instead rescaling the first component of the candidate kernel by a factor $1/\lambda$.
This is indeed the formulation we mostly use in the proofs of our theoretical results. 
A proof of the mathematical equivalence between the two formulations can be found in the supplement.

\subsection{Measure of performance}
Our measure of performance for the various algorithms will be the spectral gap of the resulting Markov chains. Consider the space of functions
\[
L_{0,1}^2(\pi) = \{ f:\R^d \to\R ~|~ \mathbb{E}_\pi[f] = 0, \text{Var}_\pi[f] = 1  \}.
\]
Note that any function $g$ with $\mathbb{E}_\pi g^2<\infty$ can be associated with an $f \in L_{0,1}^2(\pi)$ through the map $f = (g - \mathbb{E}_\pi g)/\surd\text{Var}_\pi g$, and that if $X^{(t)} \sim \pi(\cdot)$ and $X^{(t+1)} |X^{(t)} \sim P(X^{(t)}, \cdot)$ then $\text{Corr}\{g(X^{(t)}),g(X^{(t+1)})\} = \text{Corr}\{f(X^{(t)}),f(X^{(t+1)})\}$.
The (right) spectral gap of a $\pi$-reversible Markov chain with transition kernel $P$ is 
\begin{equation} \label{eq:spectral_gap}
\text{Gap}(P) = \inf_{f \in L_{0,1}^2(\pi)} \frac{1}{2}\int \left( f(y) - f(x) \right)^2 \pi(dx)P(x,dy).
\end{equation}
The expression inside the infimum is called a \textit{Dirichlet form}, and can be thought of as the `expected squared jump distance' for the function $f$ provided the chain is stationary. This can in turn be re-written as $1- \text{Corr}\{f(X^{(t)}),f(X^{(t+1)})\}$.
Maximising the spectral gap of a reversible Markov chain can therefore be understood as minimising the \textit{worst-case} first-order auto-correlation among all possible square-integrable test functions.  

The spectral gap allows to bound the variances of ergodic averages (see Proposition 1 of \citealp{rosenthal2003}). 
Also, a direct connection between the spectral gap and mixing properties of the chain can be made if the operator $Pf(x) := \int f(y)P(x,dy)$ is positive on $L^2(\pi)$. This will always be the case if the chain is made lazy, which is the approach taken in \cite{woodard2009sufficient}, and the same adjustment can be made here if desired.

\subsection{The small $\lambda$ regime} \label{sub:h_to_inf}

In this section we assess the robustness to tuning of the random walk, Langevin and Hamiltonian schemes as $\lambda\downarrow 0$.
This corresponds to the case in which the proposal scale is chosen to be too large in the first component of $\pi^{(\lambda)}$.
The results in this section will support the idea that classical gradient-based schemes pay a very high price for any direction in which this tuning parameter is chosen to be too large, as already noted in the literature (e.g.\ \citealp[page 738]{neal2003slice}), while the random walk Metropolis is less severely affected by such issues. 

\subsubsection{Random walk Metropolis} \label{sub:RWM}

In the random walk Metropolis (RWM), given a current point $x \in \R^d$, a proposal $y$ is calculated using the equation
\begin{equation} \label{eq:rwm}
y = x + \sigma \xi,
\end{equation}
with $\sigma>0$ and $\xi \sim \mu(\cdot)$ for some centred symmetric distribution $\mu$. 
The resulting candidate kernel $Q^R$ is given by $Q^R(x,dy) = q^R(x,y)dy$ with $q^R(x,y) = \sigma^{-d}\mu((y-x)/\sigma)$, where $\mu(\xi)$ for $\xi\in\R^d$ denotes the density of $\mu$. 
Following Section \ref{sec:framework}, we consider Metropolis--Hastings algorithms with proposal $Q^R$ and target distribution $\pi^{(\lambda)}$ defined in \eqref{eq:hetero_target}, and denote the resulting transition kernels as $P_\lambda^R$.

We 
impose the following mild regularity conditions on the density $\mu(\xi)$. These are satisfied for most popular choices of $\mu$, as shown in the subsequent proposition.

\begin{condition} \label{cond:rwm}
There exists $\lambda_0>0$ such that for any $x,y \in \R^d$ and $\lambda<\lambda_0$we have $\mu\left( \delta_\lambda \right) \geq \mu( \delta )$, where $\delta = y - x $ and  
\begin{equation} \label{eqn:delta_h}
\delta_\lambda := \left( \lambda(y_1 - x_1), y_2 - x_2,...,y_d - x_d \right).
\end{equation}
In addition, $\sup_{\xi_1\in \R}\mu_1(\xi_1)<\infty$, where $\mu_1(\xi_1)=\int_{\R^{d-1}}\mu(\xi_1,\xi_2,\dots,\xi_d)d\xi_2\dots d\xi_d$ is the marginal distribution of $\xi_1$ under $\xi\sim\mu$.
\end{condition}

\begin{proposition} \label{prop:rwm_condition}
Denoting the usual $p$-norm as 
$\|x\|_p=(\sum_{i=1}^dx_i^p)^{1/p}$, Condition \ref{cond:rwm} holds in each of the below cases: \\
(i) $q^R(x,y) = (2\pi\sigma^2)^{-d/2}\exp\left( -\|x - y\|_2^2/(2\sigma^2) \right)$ (Gaussian)\\
(ii) $q^R(x,y) = 2^{-d}\exp \left( - \|x - y\|_1 \right)$ (Laplace)\\
(iii) $q^R(x,y) \propto (1 + \|y-x\|_2^2/\nu)^{-(\nu+d)/2} $ for $\nu \in\{1,2,...\}$ (Student's t)
\end{proposition}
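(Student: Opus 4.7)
My plan is to observe a common structure in the three densities and then verify the two parts of Condition~\ref{cond:rwm} by a single unified argument, followed by a case-by-case check that the marginal density of the first coordinate is bounded.

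The key observation I would open with is that in each of the three cases the density $\mu$ can be written in the form $\mu(\xi) = g(\|\xi\|_p)$ for a non-increasing function $g:[0,\infty)\to[0,\infty)$ and some $p\in\{1,2\}$: for (i) we have $p=2$ with $g(r)=(2\pi)^{-d/2}\exp(-r^2/2)$ (absorbing $\sigma$ into the definition of $\mu$), for (ii) $p=1$ with $g(r)=2^{-d}\exp(-r)$, and for (iii) $p=2$ with $g(r)\propto(1+r^2/\nu)^{-(\nu+d)/2}$. In each case monotonicity of $g$ is immediate from inspection.

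For the inequality $\mu(\delta_\lambda)\geq \mu(\delta)$, I would then simply note that for every $\lambda\in(0,1]$ and every $\delta\in\R^d$,
\begin{equation*}
\|\delta_\lambda\|_p^p \;=\; \lambda^p|\delta_1|^p + \sum_{i=2}^d |\delta_i|^p \;\leq\; \sum_{i=1}^d |\delta_i|^p \;=\; \|\delta\|_p^p,
\end{equation*}
so that $\|\delta_\lambda\|_p\leq \|\delta\|_p$ and hence $\mu(\delta_\lambda)=g(\|\delta_\lambda\|_p)\geq g(\|\delta\|_p)=\mu(\delta)$ by monotonicity of $g$. This yields the first part of the condition with $\lambda_0=1$ in all three cases.

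For the boundedness of the marginal $\mu_1$, I would argue case by case but each step is standard. In (i), $\mu_1$ is a univariate Gaussian density, bounded by $(2\pi)^{-1/2}$. In (ii), since $\|\xi\|_1=\sum_i|\xi_i|$ the joint $\mu$ factorises across coordinates and $\mu_1(\xi_1)=\tfrac12\exp(-|\xi_1|)$, bounded by $1/2$. In (iii), the marginal of a multivariate Student's $t$ with $\nu$ degrees of freedom is a univariate Student's $t$ with the same degrees of freedom (a standard fact obtainable by integrating out $\xi_2,\dots,\xi_d$ using the scale-mixture representation), and hence bounded. I expect no serious obstacle: the only minor subtlety is identifying the marginal in case (iii), which is resolved by the standard Gaussian scale-mixture representation of the multivariate $t$.
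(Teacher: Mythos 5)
Your proof is correct and follows essentially the same route as the paper's: monotonicity of the density in $\|\cdot\|_2$ for cases (i) and (iii) and in $\|\cdot\|_1$ for case (ii), the elementary inequality $\|\delta_\lambda\|_p\leq\|\delta\|_p$ for $\lambda\leq 1$, and the observation that the marginal $\mu_1$ is a Gaussian, Laplace, or Student's $t$ density respectively, all bounded. The unified $g(\|\xi\|_p)$ packaging is a mild cosmetic improvement but not a substantively different argument.
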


We conclude the section with a characterization of the rate of convergence to zero of the spectral gap for the Random Walk Metropolis as $\lambda\downarrow 0$.

\begin{theorem}\label{thm:RW_lambda_to_0} 
Assume Condition \ref{cond:rwm} and $\textup{Gap}(P^R_1)>0$.
Then it holds that
$$
\textup{Gap}(P^R_\lambda) = \Theta\left(\lambda\right)\,,
\quad \hbox{as }\lambda\downarrow 0\,.
$$
\end{theorem}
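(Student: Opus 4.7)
The plan is to prove matching upper and lower bounds on $\textup{Gap}(P^R_\lambda)$ separately, working in the equivalent formulation introduced in Section \ref{sec:framework} where $\pi$ is kept fixed and the proposal is rescaled: the candidate becomes $y = x + \sigma D_\lambda^{-1}\xi$ with $D_\lambda = \textup{diag}(\lambda, 1, \dots, 1)$ and $\xi \sim \mu$. A change of variables shows this proposal has density $\tilde{q}_\lambda(x, y) = (\lambda/\sigma^d)\mu(\delta_\lambda/\sigma)$ where $\delta = y - x$ and $\delta_\lambda$ is as in \eqref{eqn:delta_h}. Since $\mu$ is symmetric, the Metropolis--Hastings Dirichlet form takes the symmetric form $\mathcal{E}_\lambda(f) = \tfrac{1}{2}\int (f(y)-f(x))^2 \min(\pi(x),\pi(y))\,\tilde{q}_\lambda(x,y)\,dx\,dy$.

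The lower bound $\textup{Gap}(P^R_\lambda) \geq \lambda\,\textup{Gap}(P^R_1)$ is essentially immediate from Condition \ref{cond:rwm}. The condition states $\mu(\delta_\lambda/\sigma) \geq \mu(\delta/\sigma)$ for all $\lambda<\lambda_0$, hence $\tilde{q}_\lambda(x,y) \geq \lambda\,\tilde{q}_1(x,y)$ pointwise. Substituting into the symmetric Dirichlet-form expression yields $\mathcal{E}_\lambda(f) \geq \lambda\,\mathcal{E}_1(f)$ for every $f$, and since $\textup{Var}_\pi(f)$ does not depend on $\lambda$ in this formulation, infimising over $f \in L^2_{0,1}(\pi)$ gives the claim.

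For the upper bound $\textup{Gap}(P^R_\lambda) \leq C\lambda$ I would exhibit a single test function. Pick a bounded, Lipschitz, non-constant $g\colon\R\to\R$ and set $f(x) = g(x_1)$, so $\textup{Var}_\pi(f)$ is a positive constant independent of $\lambda$. Because $f$ changes value only when the chain moves in coordinate 1, the squared-jump bound gives $\mathcal{E}_\lambda(f) \leq 2\|g\|_\infty^2\,\bar{\alpha}_\lambda$ with average acceptance $\bar{\alpha}_\lambda := \int \min(\pi(x),\pi(y))\,\tilde{q}_\lambda(x,y)\,dx\,dy$. I would then show $\bar{\alpha}_\lambda = O(\lambda)$ by changing variables to $v = y-x$, writing $\bar{\alpha}_\lambda = (\lambda/\sigma^d)\int I(v)\,\mu(\lambda v_1/\sigma,v_{-1}/\sigma)\,dv$ with $I(v) = \int \min(\pi(x),\pi(x+v))\,dx \leq \int \sqrt{\pi(x)\pi(x+v)}\,dx$ (Hellinger bound), using the finiteness of $\|\mu_1\|_\infty$ from Condition \ref{cond:rwm} to conclude $\bar{\alpha}_\lambda \leq C\lambda$ via $\int I(v)\,dv \leq (\int\sqrt{\pi})^2$, which is finite under the tail regularity implicit in $\textup{Gap}(P^R_1)>0$.

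The main obstacle lies in this last step of the upper bound: converting the pointwise smallness of the acceptance ratio (caused by the badly oversized first-coordinate proposal) into the integrated estimate $\bar{\alpha}_\lambda = O(\lambda)$. The delicacy is ensuring sufficient tail integrability of $\pi$ using only the hypothesis $\textup{Gap}(P^R_1)>0$, which via the standard geometric-ergodicity conditions for the RWM guarantees the kind of super-exponential decay that makes $\int\sqrt{\pi}<\infty$. The lower bound is by contrast nearly automatic, being a direct consequence of the monotonicity already encoded into Condition \ref{cond:rwm}.
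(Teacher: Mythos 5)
Your reformulation (fixed target, proposal rescaled by $1/\lambda$ in the first coordinate) and your lower bound are exactly the paper's argument: Condition \ref{cond:rwm} gives $\tilde{q}_\lambda(x,y)\geq\lambda\,\tilde{q}_1(x,y)$ pointwise, and comparing Dirichlet forms (Lemma \ref{lemma:lowerbound} in the supplement) yields $\textup{Gap}(P^R_\lambda)\geq\lambda\,\textup{Gap}(P^R_1)$. That half is correct.

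The upper bound is where you diverge from the paper, and your route has a genuine gap. You need $\bar{\alpha}_\lambda=O(\lambda)$, and your mechanism for this is $\int I(v)\,dv\leq(\int\sqrt{\pi})^2<\infty$ together with a pointwise bound on the proposal density. Two problems. First, the integrability $\int\sqrt{\pi}<\infty$ does \emph{not} follow from $\textup{Gap}(P^R_1)>0$ in the generality of the theorem: the implication would have to pass through a necessary condition for geometric ergodicity of random-walk-type chains (Jarner--Tweedie), which itself requires moment conditions on $\mu$ that Condition \ref{cond:rwm} does not supply (Proposition \ref{prop:rwm_condition} explicitly allows Student's~t proposals, including Cauchy), and in any case gives exponential moments of the \emph{measure} $\pi$, not pointwise density decay, so further work is needed even when it applies. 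The paper deliberately advertises Theorem \ref{thm:RW_lambda_to_0} as requiring essentially nothing of $\pi$ beyond $\textup{Gap}(P^R_1)>0$; a proof that smuggles in tail integrability of $\sqrt{\pi}$ proves a weaker statement. Second, to bound $\int I(v)\,\mu(\lambda v_1/\sigma,v_{-1}/\sigma)\,dv$ by a constant times $\int I(v)\,dv$ you need $\sup\mu<\infty$ for the \emph{joint} density, whereas Condition \ref{cond:rwm} only gives boundedness of the marginal $\mu_1$; integrating out $v_1$ first instead produces a factor $\sigma/\lambda$ that cancels the $\lambda$ you are trying to keep.

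The paper's upper bound avoids all of this with a conductance argument: take $K=\{y:|y_1|>k\}$ with $0<\pi(K)<1/2$, and observe that to land in $K^c$ the first coordinate of the proposal, $X_1^{(t)}+\sigma\lambda^{-1}\xi_1$, must fall in an interval of length $2k$; conditional on $X^{(t)}$ this has probability at most $2k\sigma^{-1}\lambda\sup_{z_1}\mu_1(z_1)$, so $\Phi(K)\leq\Theta(\lambda)$ and \eqref{eq:conductance} gives $\textup{Gap}(P^R_\lambda)\leq\Theta(\lambda)$. This uses only the boundedness of the marginal $\mu_1$ --- precisely what Condition \ref{cond:rwm} provides --- and no tail control on $\pi$ whatsoever. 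If you want to salvage your test-function approach, note that your $f(x)=g(x_1)$ with $g$ a (normalised) indicator of $\{|x_1|\leq k\}$ essentially reproduces the conductance bound; the key is then to bound the probability of the \emph{proposal} crossing the boundary, not the average acceptance rate over the whole space.
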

Note that Theorem \ref{thm:RW_lambda_to_0} requires very few assumptions on the target $\pi$ other than $\textup{Gap}(P^R_1)>0$.
Note also that the lower bound is of the form $\text{Gap}(P_\lambda^R) \geq \lambda \text{Gap}(P_1^R)$, see proof of Theorem \ref{thm:RW_lambda_to_0} for details. 
No dependence on the dimension of the problem other than that intrinsic to $\text{Gap}(P_1^R)$ is therefore introduced.

\subsubsection{The Langevin algorithm} \label{sub:MALA}

In the Langevin algorithm (or more specifically the Metropolis-adjusted Langevin algorithm, MALA), given the current point $x \in \R^d$, a proposal $y$ is generated by setting
\begin{equation}\label{eq:MALA_proposal}
y  = x + \frac{\sigma^2}{2}\nabla \log\pi^{(\lambda)}(x) + \sigma \xi,
\end{equation}
for some $\sigma>0$ and  $\xi \sim N(0,\I_d)$.  In this case the proposal is no longer symmetric and so the full Hastings ratio \eqref{eq:arate_hastings} must be used. 
The proposal mechanism is based on the overdamped Langevin stochastic differential equation $dX_t = \nabla\log\pi^{(\lambda)}(X_t)dt + \sqrt{2}dB_t$.  
We write $Q^M_\lambda$ for the corresponding candidate distribution and $P^M_\lambda$ for the Metropolis--Hastings kernel with proposal $Q^M_\lambda$ and target $\pi^{(\lambda)}$.

We present results for the Langevin algorithm in two settings.
Initially we consider more restrictive conditions under which our upper bound on the spectral gap depends on the tail behaviour of $\pi$ in a particularly explicit manner, and then give a broader result.

\begin{condition}
\label{cond:regularity_conditions}Assume the following:

(i) $\pi$ has a density of the form $\pi(x)=\pi_{1}(x_{1})\pi_{2:n}(x_{2},...,x_{d})$, for some densities $\pi_{1}$ and $\pi_{2:n}$ on $\R$ and $\R^{d-1}$, respectively.

(ii) For some $q \in [0,1)$, it
holds that

\begin{equation}
\left|\frac{d}{d x_{1}}\log\pi_1(x_1)\right|=\Theta\left(|x_{1}|^{q}\right)\qquad\text{as }|x_{1}|\uparrow\infty.\label{eq:condition_1}
\end{equation}
\end{condition}
\begin{theorem}\label{thm:MALA_product}
If Condition \ref{cond:regularity_conditions} holds, then there is a $\gamma >0$ such that
\[
\textup{Gap}(P_\lambda^M) \leq \Theta \left( e^{-\gamma \lambda^{-(1+q)} + q\log(\lambda)} \right)
\qquad \hbox{ as }\lambda\downarrow0\,.
\]
\end{theorem}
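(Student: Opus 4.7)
The plan is to use the equivalence remarked in Section~\ref{sec:framework}: under the change of variables $u_1 = x_1/\lambda$, running MALA with step-size $\sigma$ on $\pi^{(\lambda)}$ is equivalent to running MALA on $\pi$ with an inflated step-size $\tilde\sigma := \sigma/\lambda$ in the first coordinate and $\sigma$ in the remaining ones. By Condition~\ref{cond:regularity_conditions}(i) the target factorises as $\pi_1(u_1)\pi_{2:n}(u_2,\dots,u_d)$, only the first-coordinate dynamics changes, and so $\textup{Gap}(P_\lambda^M)$ equals the spectral gap of a one-dimensional MALA on $\pi_1$ with step-size $\tilde\sigma\uparrow\infty$ as $\lambda\downarrow 0$.

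I would then upper bound this one-dimensional gap through the variational characterisation~\eqref{eq:spectral_gap} using the bounded test function $f(u) = 2\mathbf{1}\{u>m\} - 1$, where $m$ is the median of $\pi_1$. Since $\|f\|_\infty = 1$, $\mathbb{E}_{\pi_1}f = 0$, $\textup{Var}_{\pi_1}(f) = 1$, and $(f(v)-f(u))^2$ vanishes on the rejection event of the Metropolis--Hastings chain, the Dirichlet form evaluates to
\[
\textup{Gap}(P_\lambda^M) \;\le\; 2\int \pi_1(du)\int \alpha(u,v)\,q_{\tilde\sigma}(u,v)\,dv,
\]
reducing the theorem to upper-bounding this stationary average acceptance rate by a quantity of order $\lambda^q\, e^{-\gamma\lambda^{-(1+q)}}$.

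The key estimate is that for $u$ in the $\pi_1$-typical region, say $|u|\le R$, the MALA drift $\tfrac{\tilde\sigma^2}{2}g(u)$, with $g := (\log\pi_1)'$, has magnitude of order $\tilde\sigma^2$, dwarfing the Gaussian noise of scale $\tilde\sigma$. Hence with overwhelming probability the proposal satisfies $|v| \gtrsim c\tilde\sigma^2$, and integrating the asymptotic bound $|g(v)| = \Theta(|v|^q)$ from Condition~\ref{cond:regularity_conditions}(ii) yields the tail estimate $\log\pi_1(v) \le -c_0|v|^{1+q} + C$. The crude $\pi$-ratio bound $\alpha(u,v) \le \pi_1(v)/\pi_1(u)$ then gives an acceptance factor that decays at least as fast as $\exp(-\gamma\tilde\sigma^{1+q})$ once integrated against the Gaussian noise, with the $\lambda^q$ prefactor arising from the polynomial dependence of $|g|$ on $u$ in the typical region. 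The atypical contribution $|u|>R$ is controlled by the super-exponential tail of $\pi_1$, with $R$ chosen so as not to worsen the rate.

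The main technical obstacle will be handling the Metropolis--Hastings proposal ratio $q(v,u)/q(u,v)$, whose logarithm contains the term $\tfrac{\tilde\sigma^2}{8}(g(u)^2 - g(v)^2)$; since $|g(v)| = \Theta(|v|^q) \gg |g(u)|$ in the overshoot regime, this quantity is large and negative and hence helpful in tightening the acceptance bound, but a careful case split is needed for atypical noise realisations that nearly cancel the drift and for which the tail bound on $\pi_1(v)$ is no longer available. The product structure imposed by Condition~\ref{cond:regularity_conditions}(i) is crucial: the remaining coordinates decouple and contribute only an $O(1)$ factor to the full-dimensional acceptance ratio, so that the one-dimensional bound translates cleanly into the stated rate in $\lambda$.
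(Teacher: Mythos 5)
Your opening reduction is fine and matches the paper: by the isomorphism argument of Section~\ref{sec:equiv} and Condition~\ref{cond:regularity_conditions}(i) everything collapses to a one-dimensional MALA on $\pi_1$ with step-size $\tilde\sigma=\sigma/\lambda\uparrow\infty$. The fatal problem is the quantity you reduce the theorem to. The stationary average acceptance rate $\int\pi_1(du)\int\alpha(u,v)q_{\tilde\sigma}(u,v)\,dv$ is only \emph{polynomially} small in $\lambda$, so the inequality $\textup{Gap}\leq 2\bar\alpha$ can never yield $e^{-\gamma\lambda^{-(1+q)}}$. Your key estimate ``with overwhelming probability $|v|\gtrsim c\tilde\sigma^2$'' fails exactly where the $\pi_1$-mass sits: at a mode $u_0$ of $\pi_1$ one has $g(u_0)=0$ for $g=(\log\pi_1)'$. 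Take $u,v$ both within $c/\tilde\sigma$ of $u_0$ (and $\log\pi_1$ twice differentiable there): then $|g(u)|,|g(v)|=O(1/\tilde\sigma)$, so the drift is only $O(\tilde\sigma)$, the term $\tfrac{\tilde\sigma^2}{8}\bigl(g(u)^2-g(v)^2\bigr)$ in the log proposal ratio is $O(1)$, $\pi_1(v)/\pi_1(u)=\Theta(1)$, and the move is accepted with probability bounded away from zero. The event $U\in(u_0-c/\tilde\sigma,u_0+c/\tilde\sigma)$ has probability $\Theta(\tilde\sigma^{-1})$ and, conditionally, the proposal falls back into that window with probability $\Theta(\tilde\sigma^{-2})$, so $\bar\alpha\geq\Theta(\tilde\sigma^{-3})=\Theta(\lambda^{3})$. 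The sharper version with your test function $f=2\mathbf{1}\{u>m\}-1$ does not escape this: for unimodal symmetric $\pi_1$ (allowed by Condition~\ref{cond:regularity_conditions}) the median $m$ \emph{is} the mode, a constant fraction of these small accepted moves cross $m$, and the Dirichlet form of $f$ is itself $\Theta(\lambda^{3})$. The boundary of your level set sits in the bulk, where the gradient vanishes and MALA still makes small accepted moves.

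The paper's proof (Theorem~\ref{thm:MALA_product} is obtained from the proof of Theorem~\ref{thm:HMC} with $L=1$) avoids both the acceptance ratio and this obstruction by bounding the conductance~\eqref{eq:conductance} of the \emph{two-sided tail} set $K=\{y:|y_1|>k\}$, with $k$ large enough that $|g|\geq ck^{q}>0$ on $K$. There the drift really is $\Theta(\tilde\sigma^{2})$, and Lemma~\ref{lemma:polynomial_tail} shows that for every starting point with $|x_1|\in[k,k+\tilde\sigma)$ and every noise realisation $|\xi_1|\leq\tilde\sigma^{1-\delta}$ the \emph{proposal itself} lands back in $K$ --- overshooting from the right tail into the far left tail or deeper into the right tail, which is why the set must be two-sided --- so acceptance versus rejection is irrelevant and the proposal-ratio analysis you flag as the ``main technical obstacle'' never arises. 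The escape probability is then at most $\mathbb{P}(|\xi_1|>\tilde\sigma^{1-\delta})+\pi_1(|X_1|>k+\tilde\sigma)$, and it is the second, stationary-tail term (Lemma~\ref{lemma:density_tails}) that produces the exponent $-\gamma\lambda^{-(1+q)}+q\log\lambda$, the $q\log\lambda$ correction coming from incomplete-Gamma asymptotics rather than from the behaviour of $|g|$ in the bulk. Note also that your own arithmetic does not give the stated exponent: $\pi_1(v)\leq e^{-c_0|v|^{1+q}}$ at $|v|\sim\tilde\sigma^{2}$ would yield $e^{-\gamma\tilde\sigma^{2(1+q)}}$, not $e^{-\gamma\tilde\sigma^{1+q}}$, a further sign that the acceptance probability is not the mechanism controlling the rate.
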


When compared with the random walk algorithm, Theorem \ref{thm:MALA_product} shows that the Langevin scheme is much less robust to heterogeneity.
Indeed, the spectral gap decays \textit{exponentially fast} in $\lambda^{-(1+q)}$,
meaning that even small errors in the choice of step-size can have a large impact on algorithm efficiency, and so
practitioners must invest considerable effort tuning the algorithm for good performance, as shown through simulations in Sections \ref{sec:simulations} and \ref{sec:sim_adaptive}.
Theorem \ref{thm:MALA_product} also illustrates that the Langevin algorithm is more sensitive to $\lambda$ when the tails of $\pi(\cdot)$ are lighter.  This is intuitive, as in this setting gradient terms can become very large in certain regions of the state space.  

\begin{remark}
Theorem \ref{thm:MALA_product} (and Theorem \ref{thm:HMC} below) could be extended to the case $q\geq 1$ in \eqref{eq:condition_1}, however in these cases  samplers typically fail to be geometrically ergodic when $\lambda$ is small \citep{roberts1996exponential,livingstone2016geometric} meaning the spectral gap is typically 0 and the theorem becomes trivial.
\end{remark}
\begin{remark}
Condition \ref{cond:regularity_conditions} (ii) could be replaced with the simpler requirement that $|\nabla \log\pi_1(x_1)|\uparrow \infty$, with the corresponding bound $\textup{Gap}(P_\lambda^M) \leq \Theta(e^{-1/\lambda})$. 
\end{remark}

A different set of conditions, which hold much more generally, and corresponding upper bound are presented below.

\begin{condition}\label{cond:gap_MALA}
Assume the following:

(i) There is a $\gamma > 0$ such that
\begin{equation}\label{eq:gap_MALA_gradient}
\liminf_{|x_1|\to\infty} \left(\inf_{(x_2,\dots,x_d)\in\R^{d-1}}\left|\frac{\partial \log \pi(x)}{\partial x_1}\right|\|x\|_2^\gamma \right)>0\,,
\end{equation}

(ii) Given $X\sim\pi$ there is a $\beta > 0$ such that
\begin{equation}\label{eq:gap_MALA_tail}
\mathbb{P}\left( \|X\|_2>t\right)\leq \Theta\left(e^{-t^\beta}\right)
\qquad
\hbox{as }t\to\infty\,.
\end{equation}
\end{condition}

\begin{theorem}\label{thm:MALA_general}
If Condition \ref{cond:gap_MALA} holds, then 
\[
\textup{Gap}(P_\lambda^M) \leq \Theta(e^{- \lambda^{-\alpha}})
\qquad \hbox{ as }\lambda\downarrow 0\,.
\]
for some $\alpha>0$, which can be taken as $\alpha=\min\{\beta/2,\beta/\gamma,2/3\}$.
\end{theorem}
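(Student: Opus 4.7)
The strategy is to apply the Rayleigh-quotient upper bound $\textup{Gap}(P^M_\lambda) \leq \mathcal{E}(f)/\textup{Var}_{\pi^{(\lambda)}}(f)$ to the indicator test function $f = \mathbf{1}_A$ with $A = \{x \in \R^d : x_1 > \lambda \bar M\}$, where $\bar M$ is a constant chosen large enough that condition \eqref{eq:gap_MALA_gradient} holds pointwise for $|\tilde x_1|\geq\bar M$ (with $\tilde x = (x_1/\lambda, x_{2:d})$) and $p_0 := \pi(\tilde X_1 > \bar M) \in (0,1)$. Under the change of variable $\tilde x$, $\pi^{(\lambda)}(A) = p_0$ is independent of $\lambda$, so $\textup{Var}_{\pi^{(\lambda)}}(\mathbf{1}_A) = p_0(1-p_0)$ is uniformly bounded below in $\lambda$, and the problem reduces to showing that the capacity $\Phi_\lambda := \int_A \pi^{(\lambda)}(dx) P^M_\lambda(x, A^c)$ is at most $\Theta(e^{-\lambda^{-\alpha}})$.

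Decompose $A = A_{\textup{bulk}} \cup A_{\textup{tail}}$ where $A_{\textup{tail}} = A \cap \{\|x\|_2 > R(\lambda)\}$ for a cutoff $R(\lambda)$ to be calibrated. The tail contribution is controlled by condition \eqref{eq:gap_MALA_tail}: since $\|(\lambda\tilde x_1, x_{2:d})\|_2 \leq \|\tilde x\|_2$ for $\lambda\leq 1$, we get $\pi^{(\lambda)}(A_{\textup{tail}}) \leq \mathbb{P}_{\pi}(\|\tilde X\|_2 > R(\lambda)) \leq \Theta(e^{-R(\lambda)^\beta})$. For the bulk, I would use the elementary symmetrisation $\alpha(x,y)\pi^{(\lambda)}(x) q^M_\lambda(x,y) = \min\{\pi^{(\lambda)}(x)q^M_\lambda(x,y),\pi^{(\lambda)}(y)q^M_\lambda(y,x)\} \leq \sqrt{\pi^{(\lambda)}(x)\pi^{(\lambda)}(y)\,q^M_\lambda(x,y) q^M_\lambda(y,x)}$ and pass to rescaled coordinates, in which the chain becomes an anisotropic MALA on $\pi$ with step size $\sigma_1 = \sigma/\lambda$ in the first direction and $\sigma$ elsewhere. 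A direct expansion shows that $\sqrt{q^M_\lambda(x,y) q^M_\lambda(y,x)}$ produces a Gaussian factor in $\tilde y - \tilde x$ (with the anisotropic covariance) together with a quadratic penalty of order $\exp(-\sigma^2(g_1(\tilde x)^2 + g_1(\tilde y)^2)/(16\lambda^2))$. Combined with the $\sqrt{\pi(\tilde y)}$ factor, this forces strong rejection whenever either the gradient of $\log\pi$ at $\tilde x$ or $\tilde y$ is appreciable (controlled below by condition \eqref{eq:gap_MALA_gradient}), or $\tilde y$ lands in the $\pi$-tail (controlled above by condition \eqref{eq:gap_MALA_tail}).

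Three competing rates then emerge from optimising $R(\lambda)$ and partitioning $A_{\textup{bulk}}$ according to which mechanism dominates the bound. The exponent $\beta/2$ arises when the Gaussian proposal noise of scale $\sigma/\lambda$ itself carries $\tilde y$ into the $\pi$-tail at a distance of order $\sigma/\lambda$, giving $\sqrt{\pi(\tilde y)} \leq \exp(-c(\sigma/\lambda)^\beta/2)$. The exponent $\beta/\gamma$ arises when the MALA drift drives the proposal into the $\pi$-tail, its magnitude being controlled from below by the gradient condition. Finally, the universal cap $2/3$ reflects an optimal-scaling-type bookkeeping for MALA in which, irrespective of how favourable $\beta$ or $\gamma$ may be, the interplay of the quadratic penalty $\exp(-\sigma^2 g_1^2/(16\lambda^2))$, the $\sqrt{\pi(\tilde y)}$ factor, and the $\lambda/\sigma$ normalisation of the rescaled first-coordinate Gaussian cannot produce an exponent better than this. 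The theorem's rate is the slowest of the three, $\alpha = \min\{\beta/2,\beta/\gamma,2/3\}$.

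The main obstacle I expect is controlling the bulk bound uniformly when $\tilde x_1$ is only slightly above $\bar M$: there $g_1(\tilde x)$ can be small and the quadratic penalty is weak, so one must argue separately that either the proposal noise alone drives $\tilde y$ into the $\pi$-tail (recovering the $\beta/2$ bound) or that such near-boundary states contribute only a negligible fraction of $\Phi_\lambda$. Balancing these two mechanisms against the tail cutoff $R(\lambda)$ at its optimal value is precisely what forces the minimum structure in the final exponent.
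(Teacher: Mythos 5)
Your overall architecture (a conductance/Dirichlet-form upper bound on a first-coordinate set, plus a bulk/tail split calibrated against Condition \ref{cond:gap_MALA}(ii)) is in the right family, but the mechanism you rely on for the bulk is genuinely different from the paper's and, as sketched, has a gap. The paper first passes to the isomorphic chain with fixed target $\pi$ and anisotropic proposal $N(y+\tfrac{\sigma^2}{2}\Sigma_\lambda\nabla\log\pi(y),\sigma^2\Sigma_\lambda)$ with $\Sigma_\lambda=\mathrm{diag}(\lambda^{-2},\I_{d-1})$, and crucially takes the \emph{two-sided} set $K=\{|y_1|>k\}$. On $K\cap A_\lambda$ (with $A_\lambda=\{|y_1|\le\lambda^{-1/(2\tilde\gamma)}\}$, $\tilde\gamma=\max\{1,\gamma\}$) the gradient lower bound \eqref{eq:gap_MALA_gradient} forces the drift $\lambda^{-2}\partial_1\log\pi$ to have magnitude at least $\epsilon\lambda^{-3/2}$, so the proposal overshoots the bounded slab $K^c=\{|y_1|\le k\}$ in \emph{whichever} direction the gradient points; landing in $K^c$ is then a pure Gaussian tail event for the innovation $\xi_1$, of probability $\Theta(e^{-\lambda^{-2/3}})$, and no acceptance-probability analysis is needed at all. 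The other exponent comes from $\mathbb{P}(X^{(t)}\in A_\lambda^c)\le\Theta(e^{-\lambda^{-\beta/(2\tilde\gamma)}})$ via \eqref{eq:gap_MALA_tail}.

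Your one-sided set $A=\{x_1>\lambda\bar M\}$ destroys this shortcut: when the drift points in the negative direction (as it typically does on the right tail), the overshoot lands the proposal deep inside $A^c$ with probability close to one, so your entire bound must come from rejection, i.e.\ from the $\sqrt{\pi^{(\lambda)}(x)\pi^{(\lambda)}(y)\,q(x,y)q(y,x)}$ symmetrisation. Two problems then arise. First, your step $\sqrt{\pi(\tilde y)}\le\exp(-c(\sigma/\lambda)^\beta/2)$ requires \emph{pointwise} decay of the density, which Condition \ref{cond:gap_MALA}(ii) does not supply --- it only bounds tail \emph{probabilities} $\mathbb{P}(\|X\|_2>t)$ --- and no pointwise bound follows from \eqref{eq:gap_MALA_gradient} alone, since that lower bound $\epsilon\|x\|^{-\gamma}$ on the gradient decays in $\|x\|$. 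Second, the ``quadratic penalty'' $\exp(-\sigma^2 g_1(\tilde x)^2/(16\lambda^2))$ is weakest exactly on the near-boundary region you flag at the end, and your sketch neither closes that case nor carries out the integration showing the three mechanisms actually combine to give $\alpha=\min\{\beta/2,\beta/\gamma,2/3\}$. The cleanest repair is to replace $A$ by the two-sided set $K$ and the cutoff $\{\|x\|_2>R(\lambda)\}$ by the paper's first-coordinate cutoff $A_\lambda^c$, at which point the acceptance step becomes unnecessary and your argument collapses into the paper's.
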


We expect Condition \ref{cond:gap_MALA} to be satisfied in many commonly encountered scenarios, with the exception of particularly heavy-tailed models.  In the exponential family class $\pi(x) \propto \exp \{ -\alpha \|x\|_2^\beta\}$, for example, Condition \ref{cond:gap_MALA} holds for any $\alpha$ and $\beta >0$ (see proof in the supplement).

\subsubsection{Hamiltonian Monte Carlo} \label{sub:HMC}

In Hamiltonian Monte Carlo (HMC) we write the current point $x \in \R^d$ as $x(0)$, and construct the proposal $y := x(L)$ for some prescribed integer $L$ using the update 
\begin{equation}
x(L) = x(0)+\sigma^{2}\left( \frac{L}{2}\nabla \log\pi^{(\lambda)}(x(0))+\sum_{j=1}^{L-1} (L-j)\nabla\log\pi^{(\lambda)}\left(x(j)\right)\right) +L\sigma \xi(0),\label{eq:transition}
\end{equation}
where each $x(j)$ is defined recursively in the same manner, and $\xi(0)\sim N(0,\I_d)$.  The transition is based on numerically solving Hamilton's equations for the Hamiltonian system $H(x,\xi) = -\log\pi^{(\lambda)}(x) + \xi^T\xi/2$ for $L\sigma$ units of time. 
The decision of whether or not the proposal is accepted is taken using the acceptance probability $\min(1, \pi^{(\lambda)}(y)/\pi^{(\lambda)}(x) \times \ e^{-\xi(L)^T\xi(L)/2+\xi(0)^T\xi(0)/2})$, where 
\[
\xi(L) = \xi(0) + \frac{\sigma}{2}\left( \nabla \log\pi^{(\lambda)}(x(0)) + \nabla\log\pi^{(\lambda)}(x(L)) \right) + \sigma \sum_{j=1}^{L-1} \nabla \log\pi^{(\lambda)}(x(j)).
\]
A more detailed description is given in \cite{neal2011mcmc}.
We write $P^H_\lambda$ for the corresponding 
Metropolis--Hastings kernel with proposal mechanism as above and target $\pi^{(\lambda)}$.
Here we present a heterogeneity result under Condition \ref{cond:regularity_conditions} of the previous subsection.

\begin{theorem}\label{thm:HMC}
If Condition \ref{cond:regularity_conditions} holds, then there is a $\gamma >0$ such that
\[
\textup{Gap}(P_\lambda^H) \leq \Theta\left(e^{-\gamma \lambda^{-(1+q)} + q\log(\lambda)}\right)
\qquad\hbox{ as }\lambda\downarrow 0\,.
\]
\end{theorem}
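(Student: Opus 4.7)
The plan is to show that the expected HMC acceptance rate under $\pi^{(\lambda)}\otimes N(0,I_d)$ is exponentially small in $\lambda^{-(1+q)}$, and then invoke a standard spectral gap bound. First, take the bi-partition test function $f=2\mathbb{1}_A-1$ with $A=\{x:x_1>m\}$ for $m$ the median of $\pi_1^{(\lambda)}$. Then $\mathbb{E}_{\pi^{(\lambda)}}[f]=0$, $\mathrm{Var}(f)=1$, and the Dirichlet form reduces to $4\int_A P^H_\lambda(x,A^c)\,\pi^{(\lambda)}(dx)\le 4\,\mathbb{E}_{\pi^{(\lambda)}\otimes N(0,I_d)}[\min(1,e^{-\Delta H})]$, where $\Delta H$ is the Hamiltonian change along the $L$-step leapfrog trajectory. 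So it suffices to bound this expected acceptance rate.

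Next, I would exploit the product structure. By Condition \ref{cond:regularity_conditions}(i), $\pi^{(\lambda)}=\pi_1^{(\lambda)}\pi_{2:n}$, and leapfrog with identity mass matrix is separable under a product potential. Hence the first-coordinate trajectory depends only on $(x_1(0),\xi_1(0))$ and $\Delta H=\Delta H_1+\Delta H_{2:d}$ with $\Delta H_1\perp\Delta H_{2:d}$ under $\pi^{(\lambda)}\otimes N(0,I_d)$. The standard identity $\mathbb{E}_{\pi_{2:n}\otimes N(0,I_{d-1})}[e^{-\Delta H_{2:d}}]=1$, a consequence of the volume-preserving involutive nature of leapfrog on the $(x_{2:d},\xi_{2:d})$ subsystem, combined with the elementary inequality $\min(1,ab)\le\max(1,b)\min(1,a)$, yields $\mathbb{E}[\min(1,e^{-\Delta H})]\le 2\,\mathbb{E}[\min(1,e^{-\Delta H_1})]$. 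The task thus reduces to bounding the expected acceptance rate of one-dimensional HMC targeting $\pi_1^{(\lambda)}$.

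For this 1D analysis I would change variables $\tilde x=x_1/\lambda$, under which the dynamics become HMC on the fixed target $\pi_1$ with inflated step size $\tilde\sigma=\sigma/\lambda\uparrow\infty$. On a typical bounded set $E$ for $(\tilde x(0),\xi(0))\sim\pi_1\otimes N(0,1)$, the first leapfrog step gives $\tilde x(1)=\tilde x(0)+\tilde\sigma\xi(0)+\tfrac12\tilde\sigma^2(\log\pi_1)'(\tilde x(0))$, which has magnitude at least $\Theta(\tilde\sigma)$, making $-\log\pi_1(\tilde x(1))\gtrsim\tilde\sigma^{q+1}$ by Condition \ref{cond:regularity_conditions}(ii). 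For $L>1$ one argues that subsequent leapfrog steps cannot simultaneously keep both $|\tilde x(L)|$ and $|\xi(L)|$ small, so $\Delta H_1\gtrsim\tilde\sigma^{q+1}$ on $E$. Combining this pointwise bound on $E$ with a careful control of $\mathbb{P}(E^c)$, exploiting the identity $\mathbb{E}[e^{-\Delta H_1}]=1$ to suppress contributions from the complement, gives an expected acceptance rate of order $\tilde\sigma^{-q}e^{-\gamma\tilde\sigma^{q+1}}$, which back-translates to $\lambda^qe^{-\gamma\lambda^{-(1+q)}}$ in the original parametrisation.

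The main obstacle is this last step: precisely controlling the $L$-step leapfrog trajectory in the highly unstable regime where $\tilde\sigma$ lies well outside the stability window of the local harmonic approximation of $-\log\pi_1$. The argument should closely parallel the proof of Theorem \ref{thm:MALA_product} for MALA (the $L=1$ analogue, which attains the same bound), but additional care is required to rule out the possibility that the $L$ leapfrog steps fold back on themselves and nullify the energy growth obtained in the first step, and to extract the sharper $\lambda^q$ prefactor rather than a crude $O(1)$ constant.
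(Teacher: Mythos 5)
The reduction in your first paragraph is where the argument breaks. Bounding $P^H_\lambda(x,A^c)$ by $\int \alpha(x,y)Q(x,dy)$ discards the indicator of $\{y\in A^c\}$, and the resulting quantity --- the expected acceptance rate under $\pi^{(\lambda)}\otimes N(0,\I_d)$ --- is only \emph{polynomially} small in $\lambda$, not exponentially small. To see this, note that $\log\pi_1$ has a critical point $x^*$, and consider the event that $x_1(0)/\lambda$ lies within $O(\lambda^3)$ of $x^*$ and $|\xi_1(0)|=O(\lambda^2)$: this event has probability $\Theta(\lambda^c)$ for a finite $c$, and on it the first-coordinate leapfrog trajectory never leaves an $O(\lambda)$-neighbourhood of $\lambda x^*$, the potential and kinetic energies stay bounded, and hence $\Delta H=O(1)$ and the proposal is accepted with probability bounded away from zero. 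Consequently $\mathbb{E}[\min(1,e^{-\Delta H})]\geq\Theta(\lambda^{c})\gg e^{-\gamma\lambda^{-(1+q)}}$, so the bound you set out to prove in the remaining paragraphs is false and the programme cannot be completed as stated. (Relatedly, the $q\log\lambda$ correction in the theorem does not originate from the acceptance rate.) A secondary difficulty is the choice of test set: the boundary of the median-split set lies in the bulk of $\pi_1^{(\lambda)}$, precisely where the gradient can vanish and small accepted crossings of the kind above occur, so even retaining the indicator of $\{y \in A^c\}$ the analysis near critical points of $\log\pi_1$ is delicate.

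The paper sidesteps both issues. After the change of variables of Lemmas \ref{lemma:isomorphic_MC}--\ref{lemma:MH_isomorphic} (fixed target $\pi$, proposal rescaled by $h=\lambda^{-1}$ in the first coordinate), it applies the conductance bound \eqref{eq:conductance} to a \emph{far-tail} set $K=\{y:|y_1|>k\}$ with $k$ large enough that the gradient is bounded away from zero on $K$. Lemma \ref{lemma:polynomial_tail} then shows that from any starting point in $K\cap\{|y_1|<k+h\}$ with moderate noise $|\xi_1|\le h^{1-\delta}$, the $L$-step leapfrog iterate satisfies $|x_1(L)|\ge k$, so the chain remains in $K$ \emph{whether or not the proposal is accepted}; no acceptance-rate estimate is needed. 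Leaving $K$ therefore requires either a Gaussian tail event for $\xi_1$ or starting in $\{|y_1|>k+h\}$, whose stationary mass is $\Theta(h^{-q}e^{-\gamma h^{1+q}})$ by Lemmas \ref{lem:tail_bound} and \ref{lemma:density_tails}; it is this tail mass that produces both the exponential rate and the $q\log\lambda$ prefactor. The hard control of the unstable $L$-step trajectory that you correctly flag at the end is indeed the technical core, but it must be used to show that the proposal stays in the tail set, not to lower-bound $\Delta H$ over typical configurations.
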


It is no surprise that Theorem \ref{thm:HMC} is comparable to Theorem \ref{thm:MALA_product}, since setting $L=1$ equates the Langevin and Hamiltonian methods.  

\subsection{The large $\lambda$ regime} \label{sec:h_to_zero}
In this section we briefly discuss the $\lambda\uparrow \infty$ regime, where $\sigma$ is chosen to be too small for the first component of $\pi^{(\lambda)}$, arguing that all samplers under consideration behave similarly in this regime and pay a similar price for too small tuning parameters in a given direction. 
The intuition for this is that as $\lambda\uparrow \infty$ the gradient-based proposal mechanisms discussed here all tend towards that of the random walk sampler in the first coordinate. 
For example, if we consider one-dimensional models, for any $x \in \R$ we can write $\nabla\log\pi^{(\lambda)}(x) = \lambda^{-1}\nabla\log\pi(x/\lambda)$, meaning as $\lambda \uparrow \infty$ the amount of gradient information in the proposal is reduced provided $\pi$ is suitably regular. The following result makes this intuition precise. To avoid repetitions, we state here the result for both the Langevin and the Barker proposal that we will introduce in the next section. 
\begin{proposition} \label{prop:h_to_zero}
Fix $x \in \R$ and let the density $\pi$ be such that $\nabla\log\pi$ is bounded in some neighbourhood of zero.  Then the Langevin and Barker candidate kernels $Q^M_{\lambda}$ and $Q^B_{\lambda}$, defined in \eqref{eq:MALA_proposal} and \eqref{eq:barker_prop} respectively, both satisfy
\[
\|Q^{M/B}_\lambda(x,\cdot)-Q^R(x,\cdot)\|_{TV} \leq \Theta \left( 1/\lambda \right),
\]
where $Q^R$ is the (Gaussian) random walk candidate kernel.
\end{proposition}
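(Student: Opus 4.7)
The plan is to handle the Langevin and Barker cases separately, exploiting the common fact that both candidate kernels collapse onto the symmetric random walk kernel $Q^R(x,\cdot)=N(x,\sigma^2)$ when the local score of the target vanishes. The starting observation, used in both cases, is that in the one-dimensional setting $\pi^{(\lambda)}(x)=\lambda^{-1}\pi(x/\lambda)$, so
\[
\nabla\log\pi^{(\lambda)}(x)=\lambda^{-1}\nabla\log\pi(x/\lambda).
\]
Since $x$ is fixed, $x/\lambda$ lies inside any prescribed neighbourhood of zero once $\lambda$ is large enough, and by hypothesis there is therefore a constant $C$ with $|\nabla\log\pi^{(\lambda)}(x)|\leq C/\lambda$ for all sufficiently large $\lambda$. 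This is the only property of $\pi$ that will enter the remainder of the argument.

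For the Langevin case, $Q^M_\lambda(x,\cdot)$ and $Q^R(x,\cdot)$ are both $N(\cdot,\sigma^2)$ distributions whose means differ by $(\sigma^2/2)\nabla\log\pi^{(\lambda)}(x)=O(1/\lambda)$. The total variation between two Gaussians sharing a variance is bounded (either by Pinsker's inequality or by a direct computation of $\int|\phi_\sigma(z)-\phi_\sigma(z-m)|dz$) by a constant multiple of the gap between the means divided by $\sigma$, and this immediately gives the $O(1/\lambda)$ rate.

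For the Barker case, the density defining $Q^B_\lambda(x,\cdot)$ in \eqref{eq:barker_prop} has the form
\[
q^B_\lambda(x,y)=2\phi_\sigma(y-x)\,\omega\bigl((y-x)\nabla\log\pi^{(\lambda)}(x)\bigr),
\qquad \omega(u)=\frac{1}{1+e^{-u}},
\]
with $\phi_\sigma$ denoting the $N(0,\sigma^2)$ density, and reduces to $\phi_\sigma(y-x)=q^R(x,y)$ whenever the score vanishes. Using the elementary identity $2\omega(u)-1=\tanh(u/2)$ together with $|\tanh(u/2)|\leq |u|/2$, we get
\[
\|Q^B_\lambda(x,\cdot)-Q^R(x,\cdot)\|_{TV}=\frac{1}{2}\int\phi_\sigma(z)\bigl|2\omega(z\nabla\log\pi^{(\lambda)}(x))-1\bigr|dz\leq \frac{|\nabla\log\pi^{(\lambda)}(x)|}{4}\int |z|\phi_\sigma(z)dz,
\]
which is $O(1/\lambda)$ since the last integral is a finite constant (the mean absolute value of a centred Gaussian).

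No step is a real obstacle: the proof is essentially a direct computation once the two densities are written down. The main tasks are identifying the right elementary inequalities — the Gaussian TV bound in the Langevin case and the logistic bound $|2\omega(u)-1|\leq|u|/2$ in the Barker case — and using $|\nabla\log\pi(x/\lambda)|\leq C$ to upgrade the trivial observation $\nabla\log\pi^{(\lambda)}(x)=\lambda^{-1}\nabla\log\pi(x/\lambda)$ into an $O(1/\lambda)$ bound uniform over $\lambda$ large.
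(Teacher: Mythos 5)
Your proof is correct and follows essentially the same route as the paper's: both arguments reduce to the observation that $\nabla\log\pi^{(\lambda)}(x)=\lambda^{-1}\nabla\log\pi(x/\lambda)=O(1/\lambda)$ by the boundedness hypothesis, bound the Langevin case via the total variation between equal-variance Gaussians with means differing by $O(1/\lambda)$, and bound the Barker case by writing the total variation as $\tfrac{1}{2}\int\phi_\sigma(z)\lvert 2\omega(z\nabla\log\pi^{(\lambda)}(x))-1\rvert\,dz$. The only (cosmetic) difference is that you control the Barker integrand with the one-line inequality $\lvert 2\omega(u)-1\rvert=\lvert\tanh(u/2)\rvert\le\lvert u\rvert/2$, whereas the paper uses a second-order Taylor expansion of $2g(u)$ with a Lagrange remainder and a bound on $\sup_{\xi}\lvert g''(\xi)\rvert$; your version is slightly cleaner but mathematically equivalent in content.
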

The same intuition applies to the Hamiltonian case provided $L$ is fixed, since each gradient term in the proposal is also $\Theta(1/\lambda)$.
While there are many well-known measures of distance between two distributions, we argue that total variation is an appropriate choice here, since it has an explicit focus on how much the two kernels overlap and is invariant under bijective transformations of the state space (including re-scaling coordinates).  

While the 
above statements provide
useful heuristic arguments, in order to obtain more rigorous results one should prove that the spectral gaps decay to 0 at the same rate as $\lambda\uparrow \infty$, which we leave to future work.
We note, however, that the conjecture that the algorithms behave similarly for large values of $\lambda$ is supported by the simulations of Section \ref{sec:sim_tuning}.

\subsection{Extensions}\label{sec:extensions} 

The lower bound of Theorem \ref{thm:RW_lambda_to_0} extends naturally to the $k>1$ setting, becoming instead $\geq \Theta(\lambda^k)$, and so the rate of decay remains polynomial in $\lambda$ for any $k$.  Analogously, we expect the corresponding upper bound for gradient-based schemes to remain exponential and become $\leq\Theta\left(e^{-k(\gamma \lambda^{-(1+q)} + q\log(\lambda))}\right)$, although the details of this are left for future work.  We explore examples of this nature through simulations in Section \ref{sec:simulations} and find empirically that the single component case is informative also of more general cases.  Further extensions in which a different $\lambda_i$ is chosen in each of the $k$ directions can also be considered, with each $\lambda_i \downarrow 0$ at a different rate. We conjecture that in this setting the $\lambda_i$ that decays most rapidly will dictate the behaviour of spectral gaps, though such an analysis is beyond the scope of the present work.

\section{Combining robustness and efficiency} \label{sec:barker}
The results of Section \ref{sec:hetero} show that the two gradient-based samplers considered there are much less robust to heterogeneity than the random walk algorithm.  In this section, we introduce a novel and simple to implement gradient-based scheme that shares the superior scaling properties of the Langevin and Hamiltonian schemes, but also retains the robustness of the random walk sampler, both in terms of geometric ergodicity and robustness to tuning.

\subsection{Locally-balanced Metropolis--Hastings}
Consider a continuous-time Markov jump process on $\R^d$ with associated generator
\begin{equation} \label{eqn:zprocess}
    \mathcal{L} f(x) = \int [f(y)-f(x)]g\left(\frac{\pi(y)q(y,x)}{\pi(x)q(x,y)}\right) Q(x,dy),
\end{equation}
for some suitable function $f:\R^d \to \R$, where $\pi(x)$ is a probability density, $Q(x,dy):=q(x,y)dy$ is a transition kernel and the \emph{balancing} function $g:(0,\infty) \to (0,\infty)$ satisfies
\begin{equation} \label{eqn:balance}
g(t) = t g\left(1/t\right).
\end{equation}
A discrete state-space version of this process with symmetric $Q$ was introduced in \cite{power2019accelerated}. The dynamics of the process are such that if the current state $X_t = x$, the next jump will be determined by a Poisson process with intensity
\begin{equation} \label{eqn:z_intensity}
    Z(x) := \int g\left(\frac{\pi(y)q(y,x)}{\pi(x)q(x,y)}\right) Q(x,dy),
\end{equation}
and the next state is drawn from the kernel
\begin{equation*}
Q^{(g)}(x,dy) := Z(x)^{-1} g\left(\frac{\pi(y)q(y,x)}{\pi(x)q(x,y)}\right) Q(x,dy).
\end{equation*}
The $L^2(\R^d)$ adjoint, or \emph{forward operator} $\mathcal{A}$ (e.g. \cite{fearnhead2018piecewise}) is given by
\begin{equation*}
\mathcal{A} h(x) = \int h(y) g\left(\frac{\pi(x)q(x,y)}{\pi(y)q(y,x)}\right) q(y,x)dy - h(x)Z(x).
\end{equation*}
Note that in the case $h(x) = \pi(x)$ using \eqref{eqn:balance} the first expression on the right-hand side can be written
\begin{equation*}
\int  g\left(\frac{\pi(y)q(y,x)}{\pi(x)q(x,y)}\right) \pi(x)q(x,y)dy = \pi(x)Z(x),
\end{equation*}
meaning $\mathcal{A} \pi = 0$, suggesting $\pi$ is invariant. It can therefore serve as a starting point for designing Markov chain Monte Carlo algorithms.  

In the `locally-balanced' framework for discrete state-space Metropolis--Hastings introduced in \cite{zanella2019informed}, candidate kernels are of the form
\begin{equation}\label{eq:balanced_proposal}
\tilde{Q}(x,dy) = \tilde{Z}(x)^{-1}g\left(\frac{\pi(y)}{\pi(x)}\right)\mu_{\sigma}(y-x)dy,
\end{equation}
meaning the \textit{embedded Markov chain} of \eqref{eqn:zprocess} with the choice $Q(x,dy):=\mu_\sigma(y-x)dy$, where $\mu_\sigma(y-x) := \sigma^{-d}\mu((y-x)/\sigma)$ for some symmetric density $\mu$. It is well-known that the invariant density of the embedded chain does not coincide with that of the process when jumps are not of constant intensity, in this case becoming proportional to $Z(x)\pi(x)$, as shown in \cite{zanella2019informed}. As a result a Metropolis--Hastings step is employed to correct for the discrepancy.  In \cite{power2019accelerated} it is suggested that as an alternative the jump process can be simulated exactly.

The challenge with employing either of these strategies on a continuous state space is that the integral \eqref{eqn:z_intensity} will typically be intractable. To overcome this issue we take two steps, and for simplicity we first describe these on $\R$ (there are two options on $\R^d$ for $d > 1$, which are discussed in Section \ref{sec:Barker_multi_d}).  The first step is to consider a first-order Taylor series expansion of $\log\pi$ within $g$ (again with a symmetric choice of $Q$), leading to the family of processes with generator
\begin{equation*}
L f(x) = \int [f(y)-f(x)]g\left(e^{\nabla\log\pi(x)(y-x)}\right) \mu_\sigma(y-x)dy.
\end{equation*}
We refer to candidate kernels in Metropolis--Hastings algorithms that are constructed using the embedded Markov chain of this new process as \textit{first-order} locally-balanced proposals, taking the form
\begin{equation}\label{eq:gradient_based}
Q^{(g)}(x,dy) = Z(x)^{-1} g\left( e^{\nabla\log\pi(x)(y-x)} \right) \mu_\sigma(y-x)dy,
\end{equation}
where $Z(x) :=  \int g(e^{\nabla\log\pi(x)(y-x)})\mu_\sigma(y-x)dy$. This second step is to note that, 
if particular choices of $g$ are made, then $Z(x)$ becomes tractable. In fact, if the balancing function $g(t) = \sqrt{t}$ and a Gaussian kernel $\mu_\sigma$ are chosen, then the result is the Langevin proposal
\[
Q^M(x,dy) \propto e^{\nabla \log\pi(x)(y-x)/2}\mu_\sigma(y-x)dy.
\]
Thus, MALA can be viewed as a particular instance of this class.  Other choices of $g$ are, however, also possible, and give rise to different gradient-based algorithms.  In the next section we explore what a sensible choice of $g$ might look like.

\begin{remark}
One can also think at \eqref{eqn:balance} as a requirement to ensure that the proposals in \eqref{eq:gradient_based} are exact (i.e.\ $\pi$-reversible) at the first order.
In particular, in the supplement it is shown that a proposal $Q^{(g)}$ defined as in \eqref{eq:gradient_based} is $\pi$-reversible with respect to log-linear density functions 
if and only if \eqref{eqn:balance} holds.
\end{remark}

\subsection{The Barker proposal on $\R$}
The requirement for the balancing function $g$ to satisfy $g(t) = t g(1/t)$ is in fact also imposed on the acceptance rate of a Metropolis--Hastings algorithm to produce a $\pi$-reversible Markov chain.  Indeed, setting $t := \pi(y)q(y,x)/(\pi(x)q(x,y))$ and assuming $\alpha(x,y) := \alpha(t)$, then the detailed balance equations can be written $\alpha(t) = t\alpha(1/t)$.  Possible choices of $g$ can therefore be found by considering suggestions for $\alpha$ in the literature.  One choice proposed in \cite{barker1965monte} is
\[
g(t) = \frac{t}{1+t}.
\]
The work of \cite{peskun1973optimum} and \cite{tierney1998note} showed that this choice of $\alpha$ is inferior to the more familiar Metropolis--Hasting rule $\alpha(t) = \min(1, t)$ in terms of asymptotic variance.  The same conclusion cannot, however, be drawn when considering the choice of balancing function $g$.

In fact, the choice $g(t) = t/(1+t)$ was shown to minimize asymptotic variances of Markov chain estimators in some discrete settings in \cite{zanella2019informed}.  In addition, as shown below, this particular choice of $g$ leads to a fully tractable candidate kernel that can be easily sampled from.

\begin{proposition} \label{prop:barker_normalising}
If $g(t) = t/(1+t)$, then the normalising constant $Z(x)$ in \eqref{eq:gradient_based} is $1/2$.
\end{proposition}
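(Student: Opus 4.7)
The plan is to compute $Z(x)$ directly by exploiting the symmetry of $\mu_\sigma$ together with the identity that defines the Barker balancing function. Writing $c:=\nabla\log\pi(x)$ and substituting $u=y-x$, the normaliser becomes
\[
Z(x) = \int_{\R} \frac{e^{cu}}{1+e^{cu}}\, \mu_\sigma(u)\, du.
\]
The key observation is the elementary algebraic identity
\[
\frac{e^{cu}}{1+e^{cu}} + \frac{e^{-cu}}{1+e^{-cu}} = \frac{e^{cu}}{1+e^{cu}} + \frac{1}{1+e^{cu}} = 1,
\]
which is precisely the $g(t) + g(1/t)\cdot \text{(something)}$ structure encoded in \eqref{eqn:balance} specialized to $g(t)=t/(1+t)$.

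Next I would use the symmetry $\mu_\sigma(u)=\mu_\sigma(-u)$ to rewrite $Z(x)$ in a second equivalent form. Performing the change of variables $u\mapsto -u$ in the integral gives
\[
Z(x) = \int_{\R} \frac{e^{-cu}}{1+e^{-cu}}\, \mu_\sigma(u)\, du.
\]
Adding the two expressions for $Z(x)$ and applying the identity above yields
\[
2Z(x) = \int_{\R}\left[\frac{e^{cu}}{1+e^{cu}}+\frac{e^{-cu}}{1+e^{-cu}}\right]\mu_\sigma(u)\,du = \int_{\R}\mu_\sigma(u)\,du =1,
\]
so $Z(x)=1/2$, independently of $x$, $c$, and $\sigma$.

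There is essentially no obstacle: the proof reduces to one algebraic identity and one symmetry argument, and the conclusion holds for \emph{any} symmetric base density $\mu_\sigma$, not just Gaussian. I would note explicitly in the write-up that this is the feature that makes the Barker choice $g(t)=t/(1+t)$ so attractive, since by Proposition \ref{prop:barker_normalising} the proposal in \eqref{eq:gradient_based} becomes $Q^{(g)}(x,dy)=2\,g(e^{c(y-x)})\mu_\sigma(y-x)dy$, which is fully tractable and trivial to sample from via the decomposition $u = z\cdot b(z)$ where $z\sim\mu_\sigma$ and $b(z)\in\{-1,+1\}$ is drawn with probability $g(e^{cz})$, as will be exploited in the subsequent sections.
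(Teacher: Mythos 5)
Your proof is correct and uses essentially the same argument as the paper: both exploit the identity $g(t)+g(1/t)=1$ for $g(t)=t/(1+t)$ together with the symmetry of $\mu_\sigma$, the only cosmetic difference being that you average the two equivalent whole-line integrals while the paper folds the integral over $(0,\infty)$. No gaps.
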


The  resulting proposal distribution is
\begin{equation}\label{eq:barker_prop}
Q^B(x,dy) = 2 
\frac{\mu_\sigma(y-x)}{1+e^{-\nabla\log\pi(x)(y-x)}}
dy.
\end{equation}
We refer to $Q^B$ as the \emph{the Barker proposal}.
A simple sampling strategy to generate $y\sim Q^B(x,\cdot)$
is given in Algorithm \ref{alg:barker_1d}.

\begin{algorithm}
\caption{Generating a Barker proposal on $\R$}
\label{alg:barker_1d}
\raggedright \textbf{Require:} the current point $x \in \R$.
\begin{enumerate}
\item Draw $z \sim \mu_\sigma(\cdot)$
\item Calculate $p(x,z) = 1/(1+e^{-z\nabla\log\pi(x)})$
\item Set $b(x,z) = 1$ with probability $p(x,z)$, and $b(x,z) = -1$ otherwise
\item Set $y = x + b(x,z) \times z$
\end{enumerate}
\textbf{Output:} the resulting proposal $y$.
\end{algorithm}

\begin{proposition} \label{prop:barker_sample}
Algorithm \ref{alg:barker_1d} produces a sample from $Q^B$ on $\R$.
\end{proposition}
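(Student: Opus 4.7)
The proof should be a direct marginalization. Let $w := y-x$ and $c := \nabla\log\pi(x)$. The algorithm produces a pair $(z, b)$ and then sets $y = x + bz$, so for any measurable $A\subset\R$,
$$\mathbb{P}(y-x\in A) = \int_\R \bigl[\mathbf{1}_{A}(z)\,p(x,z) + \mathbf{1}_{A}(-z)(1-p(x,z))\bigr]\mu_\sigma(z)\,dz.$$
The plan is to change variables in the second term ($z \mapsto -z$, using $\mu_\sigma(-z)=\mu_\sigma(z)$ from symmetry of $\mu$) to put both contributions on the same footing, and then collapse them.

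After the change of variables the integrand at $w$ becomes $\mu_\sigma(w)\bigl[p(x,w) + 1-p(x,-w)\bigr]$. The key identity is
$$1 - p(x,-w) = 1 - \frac{1}{1+e^{wc}} = \frac{e^{wc}}{1+e^{wc}} = \frac{1}{1+e^{-wc}} = p(x,w),$$
so the bracket equals $2p(x,w) = 2/(1+e^{-wc})$. Therefore the density of $w$ is
$$f_w(w) = \frac{2\mu_\sigma(w)}{1+e^{-wc}},$$
which is exactly the density of $Q^B(x,\cdot)$ given in \eqref{eq:barker_prop}, upon translating back to $y$.

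There is essentially no obstacle: the whole argument is the observation that the symmetry of $\mu$ together with the Fermi--Dirac identity $1-\mathrm{sigmoid}(-t)=\mathrm{sigmoid}(t)$ forces the two branches ($b=+1$ and $b=-1$) to contribute equally to the density of $w$, doubling $\mu_\sigma(w)p(x,w)$ and matching the explicit form of $Q^B$. The only care needed is keeping signs straight when swapping $z$ and $-z$ in the $b=-1$ branch, and noting that $Z(x)=1/2$ (Proposition \ref{prop:barker_normalising}) already explains why the factor $2$ appears.
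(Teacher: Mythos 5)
Your proof is correct and follows essentially the same route as the paper: split the density of $y-x$ over the two branches $b=\pm 1$, reflect the $b=-1$ branch using the symmetry of $\mu_\sigma$, and collapse the two contributions via the identity $1-p(x,-w)=p(x,w)$ to recover $2\mu_\sigma(w)p(x,w)$, i.e.\ the density in \eqref{eq:barker_prop}.
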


Algorithm \ref{alg:barker_1d} shows that the magnitude $|y - x| = |z|$ of the proposed move does not depend on the gradient $\nabla\log\pi(x)$ here, it is instead dictated only by the choice of symmetric kernel $\mu_\sigma$.  The \emph{direction} of the proposed move is, however, informed by both the magnitude and direction of the gradient.  Examining the form of $p(x,z)$, it becomes clear that if the signs of $z$ and $\nabla\log\pi(x)$ are in agreement, then $p(x,z) >1/2$, and indeed as $z\nabla\log\pi(x) \uparrow \infty$ then $e^{-z\nabla\log\pi(x)} \downarrow 0$ and so $p(x,z) \uparrow 1$. Hence, if the indications from $\nabla \log\pi(x)$ are that $\pi(x+z) \gg \pi(x)$, then it is highly likely that $b(x,z)$ will be set to $1$ and $y = x+z$ will be the proposed move. Conversely, if $z\nabla\log\pi(x)<0$, then there is a larger than 50\% chance that the proposal will instead be $y = x - z$. As $\nabla\log\pi(x) \uparrow \infty$ the Barker proposal converges to $\mu_\sigma$ truncated on the right, and similarly to $\mu_\sigma$ truncated on the left as $\nabla\log\pi(x) \downarrow -\infty$.  See Figure \ref{fig:Barker_illustration} for an illustration.

The multiplicative term $1/(1+e^{-\nabla\log\pi(x)(y-x)})$ in \eqref{eq:barker_prop}, which incorporates the gradient information, injects skewness into the base kernel $\mu_\sigma$ (as can be clearly seen in the left-hand plot of Figure \ref{fig:Barker_illustration}).
Indeed, the resulting distribution $Q^B$ is an example of a \emph{skew-symmetric} distribution \citep[eq.(1.3)]{azzalini2013skew}.
Skew-symmetric distributions are a tractable family of (skewed) probability density functions that are obtained by multiplying a symmetric base density function with the cumulative distribution function (cdf) of a symmetric random variable.
We refer to \citet[Ch.1]{azzalini2013skew} for more details, including a more general version of Propositions \ref{prop:barker_normalising} and \ref{prop:barker_sample}.
In the context of skewed distributions the Gaussian cdf is often used, leading to the skew-normal distribution introduced in \cite{azzalini1985class}.
In our context, however, the Barker proposal (which leads to the logistic cdf $p(x,z)$ in Algorithm \ref{alg:barker_1d}) is the only skew-symmetric distribution that can be obtained from \eqref{eq:gradient_based} using a balancing function $g$ satisfying $g(t)=tg(1/t)$. See the supplement for more detail.

\begin{figure}[t]
\centering
\includegraphics[width=\linewidth]{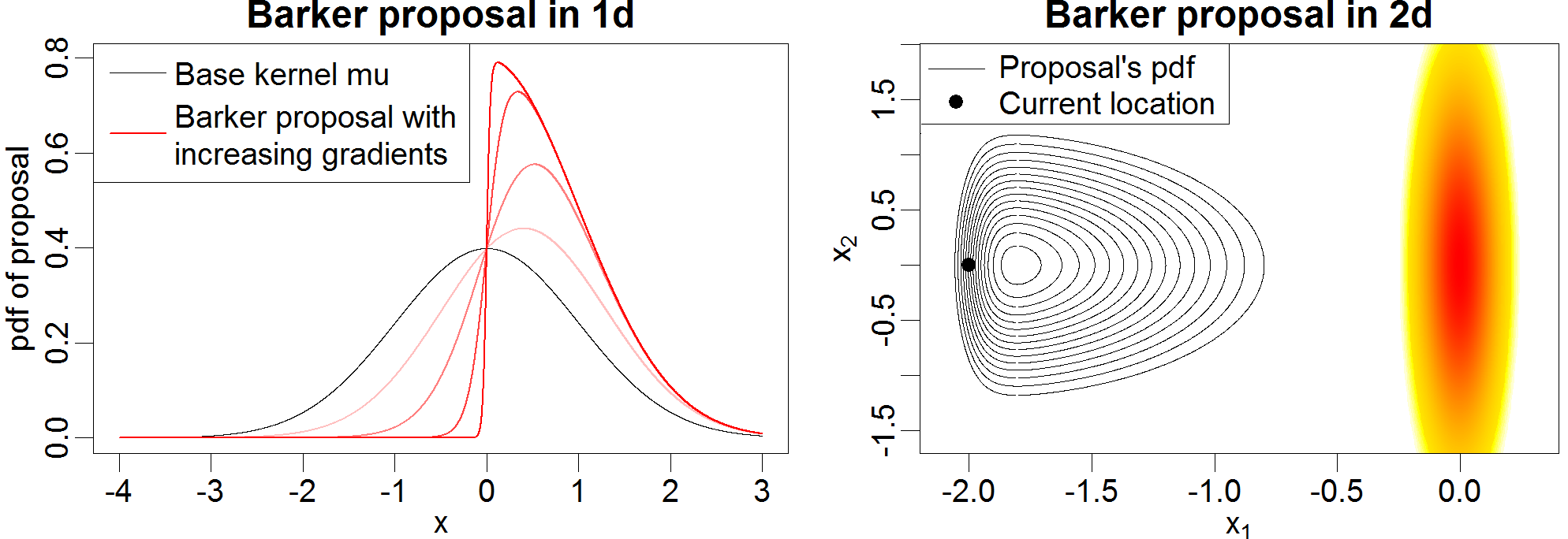}
\caption{Left: density of the Barker proposal in one dimension. Current location is $x=0$ and the four lines with increasing red intensity correspond to $\nabla\log\pi(x)$ equal to $1$, $3$, $10$ and $50$.
Right: density of the Barker proposal in two dimensions. 
Solid lines display the proposal density contours, heat colours refer to the target density, and the current location is $x=(-2,0)$.}
\label{fig:Barker_illustration}
\end{figure}

\subsection{The Barker proposal on $\R^d$}\label{sec:Barker_multi_d}
There are two natural ways to extend the Barker proposal to $\R^d$, for $d > 1$.
The first is to treat each coordinate separately, and generate the proposal $y = (y_1,...,y_d)$ by applying Algorithm \ref{alg:barker_1d} independently to each coordinate.
This corresponds to generating a $z_i$ and $b_i(x,z_i)$ for each $i \in \{1,...,d\}$, and choosing the sign of each $b_i$ using
\[
p_i(x,z_i) = \frac{1}{1+e^{-z_i\partial_i \log\pi(x)}},
\]
where $\partial_i \log\pi(x)$ denotes the partial derivative of $\log\pi(x)$ with respect to $x_i$.  Writing $Q^B_i(x,dy_i)$ to denote the resulting Barker proposal candidate kernel for the $i$th coordinate, the full candidate kernel $Q^B$ can then be written
\begin{equation} \label{eq:barker_multi}
Q^B(x,dy) = \prod_{i=1}^d Q^B_i(x,dy_i).
\end{equation}
The full Metropolis--Hastings scheme using the Barker proposal mechanism for a target distribution is given in Algorithm \ref{alg:barkerMH} (see the supplement for more details and variations of the algorithm, such as a pre-conditioned version).
Note that the computational cost of each iteration of the algorithm is essentially equivalent to that of MALA and will be typically dominated by the cost of computing the gradient and density of the target.

\begin{algorithm}
\caption{Metropolis--Hastings with the Barker proposal on $\R^d$}
\label{alg:barkerMH}
\raggedright \textbf{Require:} starting point for the chain $x^{(0)} \in \R^d$, and scale $\sigma>0$.  

Set $t=0$ and do the following:
\begin{enumerate}
\item Given $x^{(t)}=x$, draw $y_i$ using Algorithm \ref{alg:barker_1d} independently for $i \in \{1,...,d\}$
\item Set $x^{(t+1)} = y$ with probability $\alpha^B(x,y)$, where
\begin{equation}\label{eq:Bark_accept_diag}
\alpha^B(x,y) = \min\left( 1, \frac{\pi(y)}{\pi(x)}\times 
\prod_i \frac{1+e^{(x_i-y_i)\partial_i\log\pi(x)}}
{1+e^{(y_i-x_i)\partial_i\log\pi(y)}} \right).    
\end{equation}
Otherwise set $x^{(t+1)} = x$
\item If $t+1<N$, set $t \leftarrow t+1$ and return to step 1, otherwise stop.
\end{enumerate}
\textbf{Output:} the Markov chain $\{x^{(0)},\dots,x^{(N)}\}$.
\end{algorithm}

The second approach to deriving a multivariate Barker proposal consists of sampling $z \in \R^d$ from a $d$-dimensional symmetric distribution, and then choosing whether or not to flip the sign of \emph{every} coordinate at the same time, using a single global $\check{b}(x,z) \in \{-1,1\}$, to produce the global proposal $y = x + \check{b}(x,z) \times z$.  In this case the probability that $\check{b}(x,z) = 1$ will be
\begin{equation} \label{eqn:barkermod}
\check{p}(x,z) = \frac{1}{1+e^{-z^T \nabla\log\pi(x)}}.
\end{equation}
This second approach doesn't allow gradient information to feed into the proposal as effectively as in the first case.
Specifically, only the global inner product $z^T \nabla \log\pi(x)$ is considered, and the decision to alter the sign of every component of $z$ is taken based solely on this value.
In other words, once $z \sim \mu_\sigma$ has been sampled, gradient information is only used to make a single binary decision of choosing between the two possible proposals $x+z$ and $x-z$, while in the first strategy gradient information is used to choose between $2^d$ possible proposals $\{x+b \cdot z: b \in \{-1,1\}^d \}$ (where $b \cdot z := (b_1z_1,...,b_dz_d)$).
Indeed, the following proposition shows that the second  strategy cannot improve over the random walk Metropolis by more than a factor of two.

\begin{proposition} \label{prop:barkermod}
Let $\check{P}^B$ denote the modified Barker proposal on $\mathbb{R}^d$ using \eqref{eqn:barkermod}. Then
$\text{Gap}(P^R) \geq \text{Gap}(\check{P}^B)/2$.
\end{proposition}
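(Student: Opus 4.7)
The plan is to prove the stronger statement that $\mathcal{E}_{\check{P}^B}(f,f) \leq 2\,\mathcal{E}_{P^R}(f,f)$ holds pointwise in the test function $f$, and then invoke the variational characterization of the spectral gap in \eqref{eq:spectral_gap}. The key observation is a simple pointwise bound between the candidate densities, namely $\check{q}^B(x,y) \leq 2 q^R(x,y)$, arising because the sign-flip probability $\check{p}(x,z)$ is bounded above by $1$.

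First I would identify the Lebesgue density of the modified Barker candidate kernel. A point $y \neq x$ can be reached in two ways: by sampling $z = y-x \sim \mu_\sigma$ and then setting $\check{b}(x,z) = +1$ (with probability $\check{p}(x,y-x)$), or by sampling $z = -(y-x)$ and setting $\check{b}(x,z) = -1$. Using the symmetry $\mu_\sigma(-z) = \mu_\sigma(z)$ together with the logistic identity $\check{p}(x,-z) = 1-\check{p}(x,z)$, both contributions equal $\mu_\sigma(y-x)\check{p}(x,y-x)$, so
$$\check{q}^B(x,y) = \frac{2\,\mu_\sigma(y-x)}{1+\exp\bigl(-(y-x)^\top \nabla\log\pi(x)\bigr)} \leq 2\,\mu_\sigma(y-x) = 2\, q^R(x,y).$$

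Next I would use the standard rewriting of the Dirichlet form of a Metropolis--Hastings kernel with candidate density $q$: since the holding part $r(x)\delta_x(dy)$ contributes zero to the quadratic form and detailed balance gives $\pi(x)\alpha(x,y)q(x,y) = \min\{\pi(x)q(x,y),\pi(y)q(y,x)\}$, we have
$$\mathcal{E}(f,f) = \tfrac{1}{2}\iint \bigl(f(y)-f(x)\bigr)^2 \min\bigl\{\pi(x)q(x,y),\,\pi(y)q(y,x)\bigr\}\,dx\,dy.$$
Combining this with the pointwise bound from Step~1 applied in both arguments of the minimum (the bound is symmetric in $(x,y)$ since $q^R(x,y) = q^R(y,x)$) yields
$$\min\{\pi(x)\check{q}^B(x,y),\, \pi(y)\check{q}^B(y,x)\} \leq 2\min\{\pi(x) q^R(x,y),\, \pi(y) q^R(y,x)\},$$
and therefore $\mathcal{E}_{\check{P}^B}(f,f) \leq 2\,\mathcal{E}_{P^R}(f,f)$ for every $f \in L^2_{0,1}(\pi)$. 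Taking the infimum over such $f$ gives $\textup{Gap}(\check{P}^B) \leq 2\,\textup{Gap}(P^R)$, which is the stated inequality.

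There is no genuine obstacle in this argument; the only point requiring a little care is Step~1, where one must carefully aggregate the two sign choices that map to the same proposed state to obtain the factor of $2$ in $\check{q}^B$, and implicitly one needs the comparison to be made with $P^R$ built from the \emph{same} symmetric base kernel $\mu_\sigma$ (so that the factor of $2$ is the sole discrepancy). Once the pointwise candidate bound is in hand, the Dirichlet form comparison and the variational argument are standard.
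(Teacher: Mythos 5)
Your proof is correct and follows essentially the same route as the paper: you derive the pointwise bound $\check{q}^B(x,x+z) = 2\check{p}(x,z)\mu_\sigma(z) \leq 2\mu_\sigma(z) = 2q^R(x,x+z)$ by aggregating the two sign choices, exactly as in the paper's proof. The only cosmetic difference is that you re-derive the Dirichlet form comparison inline, whereas the paper packages that step as its Lemma on comparing Metropolis--Hastings kernels via dominated candidate densities.
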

One can also make a stronger statement than the above proposition, namely that if this strategy is employed, only a constant factor improvement over the Random Walk Metropolis can be achieved in terms of asymptotic variance, for any $L^2(\pi)$ function of interest.  Given Proposition \ref{prop:barkermod} we choose to use the first strategy described to produce Barker proposals on $\R^d$, and the multi-dimensional candidate kernel given in \eqref{eq:barker_multi}.
In the following sections we will show both theoretically and empirically that this choice does indeed have favourable robustness and efficiency properties.

\section{Robustness, scaling and ergodicity results for the Barker proposal}\label{sec:barker_proofs}
In this section we establish results concerning robustness to tuning, scaling with dimension and geometric ergodicity for the Barker proposal scheme.
As we will see, the method enjoys the superior efficiency of gradient-based algorithms in terms of scaling with dimension, but also shares the favourable robustness properties of the random walk Metropolis when considering both robustness to tuning and geometric ergodicity.

\subsection{Robustness to tuning}\label{sec:barker_heterog}
We now examine the robustness to tuning of the Barker proposal using the framework introduced in Section \ref{sec:hetero}.  We write $Q_\lambda^B$ and $P_\lambda^B$ to denote the candidate and Metropolis--Hastings kernels for the Barker proposal targeting the distribution $\pi^{(\lambda)}$ defined therein, and $P^B$ for the case $\lambda = 1$.  The following result characterizes the behaviour of the spectral gap of $P_\lambda^B$ as $\lambda \downarrow 0$. 
\begin{theorem}\label{thm:Barker_lambda_to_0} Assume Condition \ref{cond:rwm} and $\textup{Gap}(P^B)>0$.
Then it holds that
$$
\textup{Gap}(P^B_\lambda) = \Theta\left(\lambda\right)\,,
\quad \hbox{as }\lambda\downarrow 0\,.
$$
\end{theorem}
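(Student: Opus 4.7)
The plan is to work in the equivalent formulation described in Section~\ref{sec:framework} (and established in the supplement), where the target is the fixed density $\pi$ and the base kernel is replaced by $\tilde\mu^{(\lambda)}$ with $\tilde\mu^{(\lambda),1}(u)=\lambda\,\mu_\sigma^{(1)}(\lambda u)$ and $\tilde\mu^{(\lambda),i}=\mu_\sigma^{(i)}$ for $i\ge 2$. Writing $\tilde P^B_\lambda$ for the Metropolis--Hastings kernel in these coordinates, the change of variables is a bijection and preserves spectral gaps, so $\textup{Gap}(P^B_\lambda)=\textup{Gap}(\tilde P^B_\lambda)$. The key observation is that, in this formulation, the Barker logistic factor $\prod_i 2/(1+e^{-(\tilde y_i-\tilde x_i)\,\partial_i\log\pi(\tilde x)})$ appearing in the product candidate kernel \eqref{eq:barker_multi} depends only on $\nabla\log\pi$ (the target does not change with $\lambda$) and is therefore $\lambda$-independent. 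Hence, relative to the Barker kernel $Q^B$ built on $\mu_\sigma$, all $\lambda$-dependence of $Q^B_\lambda$ is concentrated in the base density.

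For the lower bound, Condition~\ref{cond:rwm} applied to $\mu$ gives $\tilde\mu^{(\lambda)}(\tilde z)\ge\lambda\,\mu_\sigma(\tilde z)$ pointwise for $\lambda<\lambda_0$. By the observation above this lifts to $Q^B_\lambda(\tilde x,\tilde y)\ge\lambda\,Q^B(\tilde x,\tilde y)$ and, by the same argument with $\tilde x$ and $\tilde y$ swapped, $Q^B_\lambda(\tilde y,\tilde x)\ge\lambda\,Q^B(\tilde y,\tilde x)$. Rewriting the Dirichlet measure in its reversible form
\[
\pi(\tilde x)\,\alpha^B_\lambda(\tilde x,\tilde y)\,Q^B_\lambda(\tilde x,\tilde y)=\min\!\bigl\{\pi(\tilde x)Q^B_\lambda(\tilde x,\tilde y),\,\pi(\tilde y)Q^B_\lambda(\tilde y,\tilde x)\bigr\},
\]
the two pointwise bounds combine through the $\min$ to yield $\mathcal E^{B}_\lambda(f)\ge\lambda\,\mathcal E^{B}(f)$ for every $f\in L^2_{0,1}(\pi)$. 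Taking the infimum and using the hypothesis $\textup{Gap}(P^B)>0$ gives $\textup{Gap}(\tilde P^B_\lambda)\ge\lambda\,\textup{Gap}(P^B)=\Omega(\lambda)$.

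For the upper bound I would reduce to Theorem~\ref{thm:RW_lambda_to_0} by domination in the opposite direction. Each Barker factor $2/(1+e^{-u})$ is bounded above by $2$, so $Q^B_\lambda(\tilde x,\tilde y)\le 2^d\,\tilde\mu^{(\lambda)}(\tilde y-\tilde x)$ pointwise, and the analogous bound holds with $\tilde x$ and $\tilde y$ swapped by symmetry of $\tilde\mu^{(\lambda)}$. Applying the same min-form manipulation as above yields $\mathcal E^{B}_\lambda(f)\le 2^d\,\mathcal E^{R}_\lambda(f)$ for all $f$, where $\mathcal E^{R}_\lambda$ is the Dirichlet form of the RWM with target $\pi$ and proposal $\tilde\mu^{(\lambda)}$, which is exactly the equivalent formulation of $P^R_\lambda$. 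The same two-sided comparison at $\lambda=1$ gives $\textup{Gap}(P^R_1)\ge\textup{Gap}(P^B)/2^d>0$, so the hypotheses of Theorem~\ref{thm:RW_lambda_to_0} are met and $\textup{Gap}(P^R_\lambda)=\Theta(\lambda)$. Combining, $\textup{Gap}(P^B_\lambda)=\textup{Gap}(\tilde P^B_\lambda)\le 2^d\,\textup{Gap}(P^R_\lambda)=O(\lambda)$, which together with the previous paragraph yields the desired $\Theta(\lambda)$.

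The only nontrivial point, and the one to check carefully, is that the $\lambda$-independence of the Barker logistic factor really does let the one-sided pointwise bounds on the base density pass through both the $\min$ in the reversible Dirichlet measure and the change of variables; everything else is a direct two-sided sandwich between $\lambda\,\mu_\sigma$ and $\tilde\mu^{(\lambda)}$, wrapping around the already-established RWM result. No ergodicity or tail condition on $\pi$ beyond $\textup{Gap}(P^B)>0$ enters the argument.
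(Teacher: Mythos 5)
Your proof is correct and follows essentially the same route as the paper's: pass to the isomorphic rescaled formulation, observe that the Barker logistic factor is $\lambda$-independent so the candidate densities satisfy the two-sided sandwich $\lambda\,\tilde q^B_1\le\tilde q^B_\lambda\le 2^d\,\tilde q^R_\lambda$, and then apply the pointwise-domination lemma for Dirichlet forms together with Theorem \ref{thm:RW_lambda_to_0}. Your explicit check that $\textup{Gap}(P^R_1)\ge 2^{-d}\textup{Gap}(P^B)>0$ before invoking Theorem \ref{thm:RW_lambda_to_0} is a small tidying-up of a hypothesis the paper leaves implicit, but it does not change the argument.
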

Comparing Theorem \ref{thm:Barker_lambda_to_0} with Theorems \ref{thm:RW_lambda_to_0}-\ref{thm:HMC} from Section \ref{sub:h_to_inf} we see that the Barker proposal inherits the robustness to tuning of random walk schemes and is significantly more robust than the Langevin and Hamiltonian algorithms. In the next section we establish general conditions under which $\text{Gap}(P^B)>0$.

\subsection{Geometric ergodicity}\label{sec:geom_erg_barker}
In this section we study the class of target distributions for which the Barker proposal produces a geometrically ergodic Markov chain.  We show that geometric ergodicity can be obtained even when the gradient term in the proposal grows faster than linearly, which is typically not the case for MALA and HMC.

Recall that a Markov chain is called \textit{geometrically ergodic} if 
\begin{equation}\label{eq:geom_erg_defi}
\|P^t(x,\cdot) - \pi(\cdot)\|_{TV} \leq CV(x)\rho^t,
\qquad t\geq 1\,,
\end{equation}
for some $C<\infty$, Lyapunov function $V:\R^d \to [1,\infty)$, and $\rho<1$, where $\|\mu(\cdot) - \nu(\cdot)\|_{TV}:=\sup_{A\in\mathcal{B}}|\mu(A) - \nu(A)|$ for probability measures $\mu$ and $\nu$. When such a condition can be established for a reversible Markov chain, then a Central Limit Theorem exists for any square-integrable function \citep{roberts2004general}. 

We prove geometric ergodicity results for generic proposals 
as in \eqref{eq:gradient_based},
assuming $g$ to be bounded and monotone, and $\mu_\sigma$ to have lighter than exponential tails.
Following the discussion in Section \ref{sec:Barker_multi_d} we consider proposals that are independent across components, leading to
\begin{equation} \label{eq:balanced_multi}
Q^{(g)}(x,dy) = 
\prod_{i=1}^d Q^{(g)}_i(x,dy_i)
=
\prod_{i=1}^d
\frac{g\left( e^{\partial_i\log\pi(x)(y_i-x_i)} \right) \mu_\sigma(y_i-x_i)dy_i}{Z_i(x)}\,,
\end{equation}
where $Z_i(x) :=  \int_{\mathbb{R}} g(e^{\partial_i\log\pi(x)(y_i-x_i)})\mu_\sigma(y_i-x_i)dy_i$.
With a slight abuse of notation, we use $\mu_\sigma$ to represent one and $d$-dimensional densities.
The Barker proposal in \eqref{eq:barker_multi} is the special case obtained by taking $g(t)=t/(1+t)$.


For the results of this section, we make the simplifying assumption that $\pi$ is spherically symmetric outside a ball of radius $R<\infty$.

\begin{condition}\label{assumption:spherical}
There exists $R<\infty$ and a differentiable function $f:(0,\infty)\to (0,\infty)$ with $\lim_{r\to\infty}f'(r)= -\infty$ and $f'(r)$ non-increasing for $r>R$ such that $\log\pi(x)=f(\|x\|)$ for $r>R$.
\end{condition}

\begin{theorem}\label{thm:barker_ergodicity_multi_d}
Let $g:(0,\infty)\to(0,\infty)$ be a bounded and non-decreasing function, $\int_{\R} \exp(sw)\mu_{\sigma}(w)dw<\infty$ for every $s>0$, and $\inf_{w\in(-\delta,\delta)}\mu_\sigma(w)>0$ for some $\delta>0$.
If the target density $\pi$ satisfies Condition \ref{assumption:spherical}, then the Metropolis--Hastings chain with proposal $Q^{(g)}$ is $\pi$-a.e.\ geometrically ergodic.
\end{theorem}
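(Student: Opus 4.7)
The plan is to verify a Foster--Lyapunov drift inequality together with a petite-set minorization and then invoke the standard Meyn--Tweedie framework. For the minorization, positivity of $g$ and the assumption $\inf_{|w|<\delta}\mu_\sigma(w)>0$ yield a strictly positive continuous lower bound on the proposal density $Q^{(g)}(x,y)$ whenever $\|y-x\|_\infty<\delta$; together with positivity and continuity of $\pi$, the Hastings ratio is continuous and every compact set is small, so the chain is Lebesgue-$\psi$-irreducible and aperiodic. These facts are routine but set the stage.

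For the drift I would take $V(x)=e^{s\|x\|}$ with $s>0$ small and fixed at the end. A preliminary step is to establish uniform sandwich bounds on the marginal normalizers: setting $M=\sup g<\infty$, monotonicity of $g$ gives $g(e^{aw})\ge g(1)$ whenever $aw\ge0$, so $Z_i(x)\in[z_\star,M]$ for a fixed $z_\star>0$ depending only on $g$, $\mu_\sigma$, and $\delta$. Consequently the proposal is dominated by the base kernel, $Q^{(g)}(x,dy)\le(M/z_\star)^d\prod_i\mu_\sigma(y_i-x_i)\,dy_i$, and combining this with $|\|y\|-\|x\||\le\|y-x\|$ and the exponential-moment assumption on $\mu_\sigma$ controls the positive part of the increment uniformly in $x$:
\begin{equation*}
\int\alpha(x,y)\bigl(e^{s(\|y\|-\|x\|)}-1\bigr)_{+}\,Q^{(g)}(x,dy)\le C(s),\qquad C(s)\downarrow 0\text{ as }s\downarrow 0.
\end{equation*}

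Turning to the negative part, spherical symmetry outside the ball of radius $R$ gives $\nabla\log\pi(x)=f'(\|x\|)x/\|x\|$, so $\partial_i\log\pi(x)$ has sign opposite to $x_i$ with magnitude $|f'(\|x\|)|\,|x_i|/\|x\|$. For the coordinate $i^\star$ with $|x_{i^\star}|\ge\|x\|/\sqrt{d}$, this magnitude diverges as $\|x\|\to\infty$, and by monotonicity of $g$ the $i^\star$-marginal proposal concentrates on the sign side opposite to $x_{i^\star}$. I would then construct a good set $G_x$ on which $\|y\|\le\|x\|-\eta^\star$ for some fixed $\eta^\star>0$ and $Q^{(g)}(x,G_x)\ge c_1>0$ uniformly in $\|x\|$ large, e.g.\ by requiring that the dominant coordinate moves toward zero by at least $\eta$ while the remaining $d-1$ coordinates do not move by more than $\eta^\star/(2\sqrt{d})$; such an event has proposal mass bounded below because each $Z_i$ is bounded and the $\mu_\sigma$ base density is locally positive.

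The main obstacle is a uniform lower bound $\alpha(x,y)\ge c_2>0$ on $G_x$. Writing the Hastings ratio as
\begin{equation*}
\frac{\pi(y)}{\pi(x)}\prod_{i=1}^d\frac{g(e^{\partial_i\log\pi(y)(x_i-y_i)})}{g(e^{\partial_i\log\pi(x)(y_i-x_i)})}\cdot\frac{Z_i(x)}{Z_i(y)},
\end{equation*}
the $Z$-ratios are sandwiched by the earlier bound, and a Taylor expansion of $f$ around $\|x\|$ (using $f'$ non-increasing) gives $\log[\pi(y)/\pi(x)]\ge|f'(\|x\|)|\eta^\star+o(|f'(\|x\|)|)$. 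The delicate piece is the product of $g$-ratios: on $G_x$ the denominator sits near $M$ whereas the numerator may be small, since $\partial_i\log\pi(y)(x_i-y_i)$ can be large and negative. Here I would exploit that for $\|y-x\|$ bounded, $y/\|y\|$ is close to $x/\|x\|$ and $|f'(\|y\|)|$ is comparable to $|f'(\|x\|)|$, so a careful bookkeeping shows the logarithm of the $g$-product grows at most linearly in $|f'(\|x\|)|$ with coefficient strictly smaller than that of $\log[\pi(y)/\pi(x)]$, and the $\pi$-ratio dominates. Once $\alpha\ge c_2$ is in hand, the negative contribution is at least $c_1c_2(1-e^{-s\eta^\star})$; taking $s$ small enough that $C(s)$ is below half of this delivers $PV\le\rho V$ outside a compact set with $\rho<1$, which combined with the minorization of Step 1 yields geometric ergodicity.
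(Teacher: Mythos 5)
Your overall architecture is the same as the paper's: establish irreducibility/small sets, then a Foster--Lyapunov drift for $V(x)=e^{s\|x\|}$ by splitting the increment into an outward part controlled by domination of $Q^{(g)}$ by the base kernel and an inward part coming from a ``good set'' of proposals on which the acceptance probability is uniformly bounded below. Your treatment of the good set (dominant coordinate forced inward, sign condition on the increments, $Z_i$ sandwiched in $[g(1)/2,M]$) and of the acceptance lower bound (the $\pi$-ratio, which grows like $|f'(\|x\|)|\epsilon$, dominating the logarithm of the $g$-ratios, whose unfavourable part is $O(\epsilon^2|f'|/\|x\|)$) matches Lemma A.10 of the paper in substance.

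The gap is in the final step, where you close the drift inequality by sending $s\downarrow 0$. Your bound $C(s)$ on the outward term is obtained by dominating $Q^{(g)}(x,dy)\le (M/z_\star)^d\prod_i\mu_\sigma(y_i-x_i)\,dy_i$ uniformly in $x$, so $C(s)=\Theta(s)$ as $s\downarrow0$ with a constant of order $(M/z_\star)^d\,\mathbb{E}\|w\|$. But the gain term $c_1c_2(1-e^{-s\eta^\star})$ is also $\Theta(s)$, with the small constant $c_1c_2\eta^\star$ coming from your good-set construction. The inequality $C(s)\le \tfrac12 c_1c_2(1-e^{-s\eta^\star})$ therefore amounts to a comparison of these two constants, which fails in general: nothing in your argument makes the proposal's mean positive radial increment smaller than $c_1c_2\eta^\star$. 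To rescue the small-$s$ route you would have to prove a genuine first-moment radial drift, i.e.\ that $\int\alpha(x,y)(\|y\|-\|x\|)_+\,Q^{(g)}(x,dy)\to0$ (or is suitably small) as $\|x\|\to\infty$ — which requires showing that outward proposals in the dominant coordinates carry vanishing mass and that the remaining coordinates contribute only $O(1/\|x\|)$ to the radius; none of this is in your sketch. The paper sidesteps the issue entirely by using $V_s(x)=e^{s\|x\|_\infty}$ and taking $s$ \emph{large}: it shows $\limsup_{\|x\|\to\infty}QV_s(x)/V_s(x)\le 2d\int_{-\infty}^0e^{sw}\mu_\sigma(w)\,dw$, which tends to $0$ as $s\uparrow\infty$ (this is why the hypothesis demands finite exponential moments of every order), so it only needs to be beaten by a fixed positive lower bound on $\liminf_{\|x\|\to\infty}\int q(x,y)\alpha(x,y)\,dy$, with no first-moment comparison at all.
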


We note that tail regularity assumptions such as Condition \ref{assumption:spherical} are common in this type of analysis (e.g. \cite{jarner2000geometric,durmus2017convergence}).  As an intuitive example, the condition is satisfied in the exponential family $\pi(x)\propto \exp(-\alpha\|x\|^\beta)$ for all $\beta> 1$.  As a contrast, for MALA and HMC it is known that for $\beta > 2$ the sampler fails to be geometrically ergodic \citep{roberts1996exponential,livingstone2016geometric}.
We expect the Barker proposal to be geometrically ergodic also for the case $\beta=1$, although we do not prove it in this work.

\subsection{Scaling with dimensionality}\label{sec:scaling_barker}
In this section we provide preliminary results suggesting that the Barker proposal enjoys scaling behaviour analogous to that of MALA in high-dimensional setings, meaning that under appropriate assumptions it requires the number of iterations per effective sample to grow as $\Theta(d^{1/3})$ with the number of dimensions $d$ as $d\to\infty$.
Similarly to Section \ref{sec:geom_erg_barker}, we prove results for general proposals $Q^{(g)}$ as in \eqref{eq:balanced_multi} with balancing functions $g$ satisfying $g(t)=t\,g(1/t)$. The Barker proposal is a special case of the latter family.

We perform an asymptotic analysis for $d\to\infty$ using the framework introduced in \cite{roberts1997weak}.
The main idea is to study the rate at which the proposal step size $\sigma$ needs to decrease as $d\to\infty$ to obtain well-behaved limiting behaviour for the MCMC algorithm under consideration (such as a $\Theta(1)$ acceptance rate and convergence 
to a non-trivial diffusion process after appropriate time re-scaling).
Based on the rate of decrease of $\sigma$ one can infer how the number of MCMC iterations required for each effective sample increases as $d\to\infty$.
For example, in the case of the random walk Metropolis $\sigma^2$ must be scaled as $\Theta(d^{-1})$ as $d\to\infty$ to have a well-behaved limit \citep{roberts1997weak}, which leads to RWM requiring $\Theta(d)$ iterations for each effective sample.
By contrast, for MALA it is sufficient to take $\sigma^2=\Theta(d^{-1/3})$ as $d\to\infty$, which leads to only $\Theta(d^{1/3})$ iterations for each effective sample \citep{roberts1998optimal}.
While these analyses are typically performed under simplifying assumptions, such as having a target distribution with i.i.d.\ components, the results have been extended in many ways (e.g.\ removing the product-form assumption, see \citet{mattingly2012diffusion}) obtaining analogous conclusions. 
See also \citet{beskos2013optimal} for optimal scaling analysis of HMC and 
\citet{roberts2016complexity} for rigorous connections between optimal scaling results and computational complexity statements.

In this section we focus on the scaling behaviour of Metropolis--Hastings algorithms with proposal $Q^{(g)}$ as in \eqref{eq:balanced_multi}, when targeting distributions of the form $\pi(x)=\prod_{i=1}^d f(x_i)$, where $f$ is a one-dimensional smooth density function.
Given the structure of $Q^{(g)}$ and $\pi(\cdot)$, the acceptance rate takes the form $\alpha(x,y)=\min\left\{1,\prod_{i=1}^d\alpha_i(x_i,y_i)\right\}$, where
\begin{align}\label{eq:alpha_i}
\alpha_i(x_i,y_i)=&
\frac{f(y_i)}{f(x_i)}
\frac{
g\left( e^{\phi'(y_i)(x_i-y_i)} \right)
}{
g\left( e^{\phi'(x_i)(y_i-x_i)} \right)
}
\frac{Z_i(x_i)}{Z_i(y_i)}\,,
\end{align}
and $\phi=\log f$.
In such a context, the scaling properties of the MCMC algorithms under consideration are typically governed by the behaviour of $\log(\alpha_i(x_i,y_i))$ as $y_i$ gets close to $x_i$, or more precisely by degree of the leading term in the Taylor series expansion of $\log(\alpha_i(x_i,x_i+\sigma u_i))$ in powers of $\sigma$ as $\sigma\to 0$ for fixed $x_i$ and $u_i$.
For example, in the case of the random walk Metropolis one has $\log(\alpha_i(x_i,x_i+\sigma u_i))=\Theta(\sigma)$ as $\sigma\to 0$, which in fact implies the proposal variance $\sigma^2$ must decrease at a rate $\Theta(d^{-1})$ to obtain a non-trivial limit.
By contrast, when the MALA proposal is used, one has $\log(\alpha_i(x_i,x_i+\sigma u_i))=\Theta(\sigma^3)$ as $\sigma\to 0$, which in turn leads to $\sigma^2=\Theta(d^{-1/3})$. 
See Sections 2.1-2.2 of \citet{durmus2017fast} for a more detailed and rigorous discussion on the connection between the Taylor series expansion of $\log(\alpha_i(x_i,y_i))$ and MCMC scaling results.
The following proposition shows that the condition $g(t)=t\,g(1/t)$, when combined with some smoothness assumptions, is sufficient to ensure that the proposals $Q^{(g)}$ lead to $\log(\alpha_i(x_i,x_i+\sigma u_i)) \leq \Theta(\sigma^3)$ as $\sigma\to 0$.

\begin{proposition}\label{prop:scaling_heuristic}
Let $g:(0,\infty)\to(0,\infty)$ and $g(t)=t\,g(1/t)$ for all $t$.
If $g$ is three times continuously differentiable and $\int_{\R}g^{(j)}(e^{sw})\mu(w)dw<\infty$ for all $s>0$ and  $j\in\{0,1,2,3\}$, where $g^{(j)}:(0,\infty)\to(0,\infty)$ is the $j$-th derivative of $g$, then
\begin{align}
\log(\alpha_i(x_i,x_i+\sigma u_i)) &\leq \Theta(\sigma^3) &\hbox{as }\sigma\to 0\,,
\end{align}
 for any $x_i$ and $u_i$ in $\R$.
\end{proposition}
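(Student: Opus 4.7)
The plan is to Taylor expand $\log\alpha_i(x,x+\sigma u)$ in $\sigma$ at $\sigma=0$ (dropping the subscript $i$, writing $y = x+\sigma u$, $\phi=\log f$) and to show that the balancing identity $g(t)=tg(1/t)$ forces the coefficients of $\sigma^0$, $\sigma^1$ and $\sigma^2$ to vanish. The algebraic heart of the argument is the observation that if $\psi(s):=\log g(e^s)$, then $g(t)=tg(1/t)$ is equivalent to $\psi(s)=s+\psi(-s)$, i.e.\ to the decomposition
\[
\psi(s) \;=\; \tfrac{s}{2} + \eta(s), \qquad \eta \text{ even}.
\]
In particular $\eta'(0)=\eta'''(0)=0$. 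I would record this lemma first.

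Next, I would split $\log\alpha_i$ into three pieces,
\[
\log\alpha_i = \underbrace{[\phi(y)-\phi(x)]}_{(A)} + \underbrace{[\psi(\phi'(y)(x-y))-\psi(\phi'(x)(y-x))]}_{(B)} + \underbrace{[\log Z(x)-\log Z(y)]}_{(C)},
\]
where $Z(x)=\int g(e^{\sigma\phi'(x)w})\mu(w)\,dw=:h(\sigma\phi'(x))$. Substituting $\psi(s)=s/2+\eta(s)$ in (B) and using the evenness of $\eta$, the linear part of (B) collapses to $-\tfrac{1}{2}(y-x)[\phi'(x)+\phi'(y)]$ and the remainder becomes $\eta(\phi'(y)(y-x))-\eta(\phi'(x)(y-x))$.

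The three resulting contributions are then controlled as follows. First, $(A)-\tfrac12(y-x)[\phi'(x)+\phi'(y)]$ is exactly the trapezoidal-rule error for $\int_x^y\phi'(t)\,dt$, so it equals $-\tfrac{(y-x)^3}{12}\phi'''(\xi)=O(\sigma^3)$. Second, Taylor expanding $\eta$ around $0$ and using $\eta'(0)=0$ gives $\eta(\phi'(y)(y-x))-\eta(\phi'(x)(y-x))=\tfrac{\eta''(0)}{2}[(\phi'(y))^2-(\phi'(x))^2](y-x)^2+O((y-x)^4)$, and since $(\phi'(y))^2-(\phi'(x))^2=O(\sigma)$ this is $O(\sigma^3)$. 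Third, differentiating $h$ under the integral sign and using symmetry of $\mu$ to kill its odd derivatives at the origin yields $h(s)=h(0)+\tfrac{h''(0)}{2}s^2+O(s^3)$, so $\log h(s)=\log h(0)+\tfrac{h''(0)}{2h(0)}s^2+O(s^3)$; combined with $(\phi'(x))^2-(\phi'(y))^2=O(\sigma)$, (C) is $O(\sigma^3)$. Summing the three estimates gives $\log\alpha_i\le\Theta(\sigma^3)$.

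The main obstacle I anticipate is not any single hard step but keeping the bookkeeping clean so that the two sources of cancellation (evenness of $\eta$ from the balancing condition, and symmetry of $\mu$ that forces $h'(0)=h'''(0)=0$) are both exploited in the right places. A subsidiary but necessary piece is the justification of differentiation under the integral sign for $h$ up to order three in a neighbourhood of $s=0$, which is precisely what the hypothesis $\int g^{(j)}(e^{sw})\mu(w)dw<\infty$ for $j\in\{0,1,2,3\}$ and all $s>0$ supplies via dominated convergence; everything else reduces to Taylor expansion and the trapezoidal-rule identity.
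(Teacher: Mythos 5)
Your proof is correct, and it follows the same skeleton as the paper's: the same three-way split of $\log\alpha_i$ into the density ratio, the balancing-function ratio and the normalizing constants (the paper's Lemmas on the acceptance-ratio expansion and on $Z_i$), followed by Taylor expansion in $\sigma$. Where you differ is in how the cancellation of the $\sigma$ and $\sigma^2$ terms is exhibited. The paper computes the first two Taylor coefficients of $\phi(x+\sigma u)-\phi(x)+b(-\sigma u\,\phi'(x+\sigma u))-b(\sigma u\,\phi'(x))$ explicitly, finds they are proportional to $1-2b'(0)$, and then derives $b'(0)=1/2$ from $g(t)=t\,g(1/t)$ by a difference quotient at $t=1$. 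You instead note that the balancing identity is \emph{equivalent} to $\psi(s)=s/2+\eta(s)$ with $\eta$ even, which packages $b'(0)=1/2$ (and also $\eta'''(0)=0$) once and for all; the linear part of $\psi$ then combines with $\phi(y)-\phi(x)$ into a trapezoidal-rule error, which is $O(\sigma^3)$ immediately, while the even remainder contributes $\tfrac{\eta''(0)}{2}\bigl[(\phi'(y))^2-(\phi'(x))^2\bigr](y-x)^2=O(\sigma^3)$. This is a more transparent and arguably more informative use of the balancing condition, though computationally equivalent. Your treatment of the normalizing constants via $Z(x)=h(\sigma\phi'(x))$ with $h'(0)=0$ by symmetry of $\mu$ is the same calculation as the paper's second lemma, just phrased as a composition, and correctly handles the fact that $\sigma$ enters $Z(x+\sigma u)$ both through the proposal scale and through the argument of $\phi'$. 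Both arguments implicitly require $\phi\in C^3$ (the paper assumes $f$ smooth) and both justify differentiation under the integral sign at the same, slightly informal, level of rigor, so neither point is a gap relative to the paper.
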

Proposition \ref{prop:scaling_heuristic} suggests that Metropolis--Hastings algorithms with proposals $Q^{(g)}$ such that $g(t)=t\,g(1/t)$ have scaling behaviour analogous to MALA, meaning that $\sigma^2=\Theta(d^{-1/3})$ is sufficient to ensure a non-trivial limit and thus $\Theta(d^{1/3})$ iterations are required for each effective sample.  To make these arguments rigorous one should prove weak convergence results for $d\to\infty$, as in \cite{roberts1998optimal}.
Proving such a result for a general $g$ would require a significant amount of technical work, thus going beyond the scope of this section.
In this paper we rather support the conjecture of $\Theta(d^{1/3})$ scaling for $Q^{(g)}$ by means of simulations (see Section \ref{sec:sim_scaling}).
While Proposition \ref{prop:scaling_heuristic} only shows $\log(\alpha_i(x_i,x_i+\sigma u_i)) \leq \Theta(\sigma^3)$, it is possible to show that $\log(\alpha_i(x_i,x_i+\sigma u_i)) = \Theta(\sigma^3)$ with some extra assumptions on $\phi$ to exclude exceptional cases (see the supplement for more detail).

\section{Simulations with fixed tuning parameters}\label{sec:simulations}
Throughout Sections \ref{sec:simulations} and \ref{sec:sim_adaptive}, we choose the symmetric density  $\mu_\sigma$ within the random walk  and Barker proposals to be $N(0,\sigma^2\I_d)$ for simplicity. 
Note, however, that any symmetric density $\mu_\sigma$ could in principle be used.
It would be interesting to explore the impact of different choices of $\mu_\sigma$ to the performances of the Barker algorithm, and we leave such a comparison to future work.

\subsection{Illustrations of robustness to tuning}\label{sec:sim_tuning}
We first provide an illustration of the robustness to tuning of the random walk, Langevin and Barker algorithms in three simple one-dimensional settings. 
In each case we approximate the expected squared jump distance (ESJD) using $10^4$ Monte Carlo samples and standard Rao--Blackwellisation techniques, across of range of different proposal step-sizes between 0.01 and 100.  As is clearly shown in Figure \ref{fig:1d}, all algorithms perform similarly when the step-size is smaller than optimal, as suggested in Section \ref{sec:h_to_zero}. As the step-size increases beyond this optimum, however, behaviours begin to differ.  In particular the ESJD for MALA rapidly decays to zero, whereas in the random walk and Barker cases the reduction is much less pronounced.  In fact, the rate of decay is similar for the two schemes, which is to be expected following the results of Sections \ref{sec:barker_heterog} and \ref{sub:h_to_inf}.
See the supplement for a similar illustration on a 20-dimensional example.
\begin{figure}[h!]
\centering
\includegraphics[width=\linewidth]{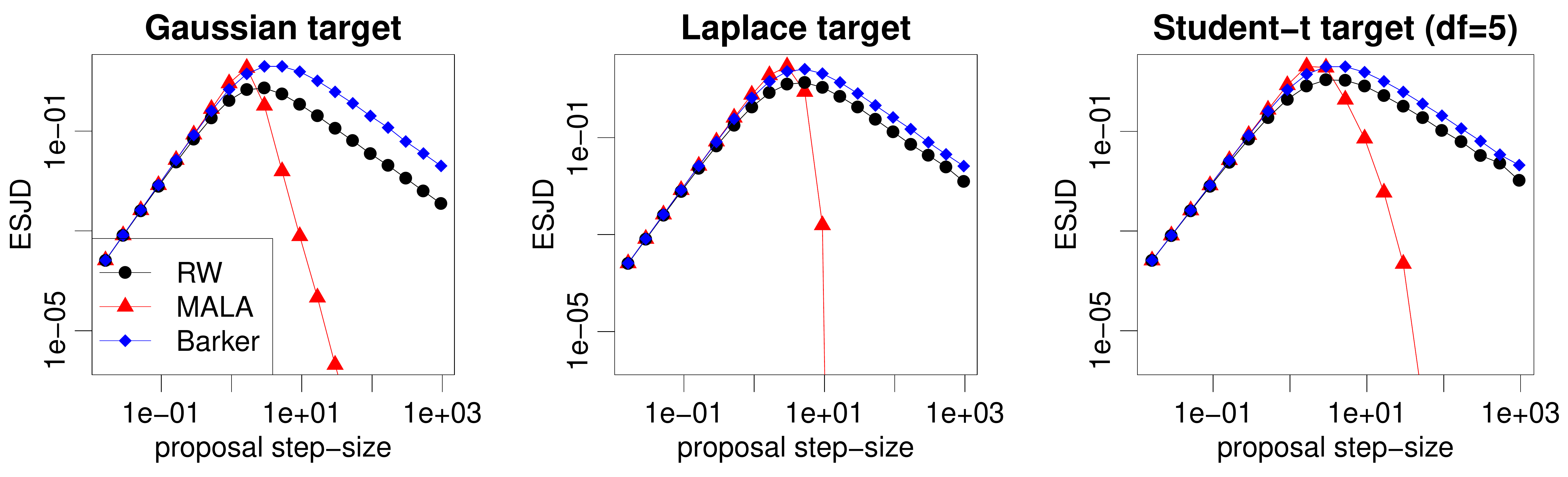}
\caption{Expected squared jump distance (ESJD) against proposal step-size for RWM, MALA and Barker on different 1-dimensional targets.}
\label{fig:1d}
\end{figure}

\subsection{Comparison of efficiency on isotropic targets}\label{sec:sim_scaling}
Next we compare the expected squared jump distance of the random walk, Langevin and Barker schemes when sampling from isotropic distributions of increasing dimension, with optimised proposal scale (chosen to maximise expected squared jumping distance).
This setup is favourable to MALA, which is the least robust scheme among the three, as the target distribution is homogeneous and the proposal step-size optimally-chosen.
We consider target distributions with independent and identically distributed (i.i.d.) components, corresponding to the scenario studied theoretically in Section \ref{sec:scaling_barker}.
We set the distribution of each coordinate to be either a standard normal distribution or a hyperbolic distribution, corresponding to $\log\pi(x)=-\sum_{i=1}^d x_i^2/2+const$ and $\log\pi(x)=-\sum_{i=1}^d (0.1+x_i^2)^{1/2}+const$, respectively.
\begin{figure}[h!]
\centering
\includegraphics[width=\linewidth]{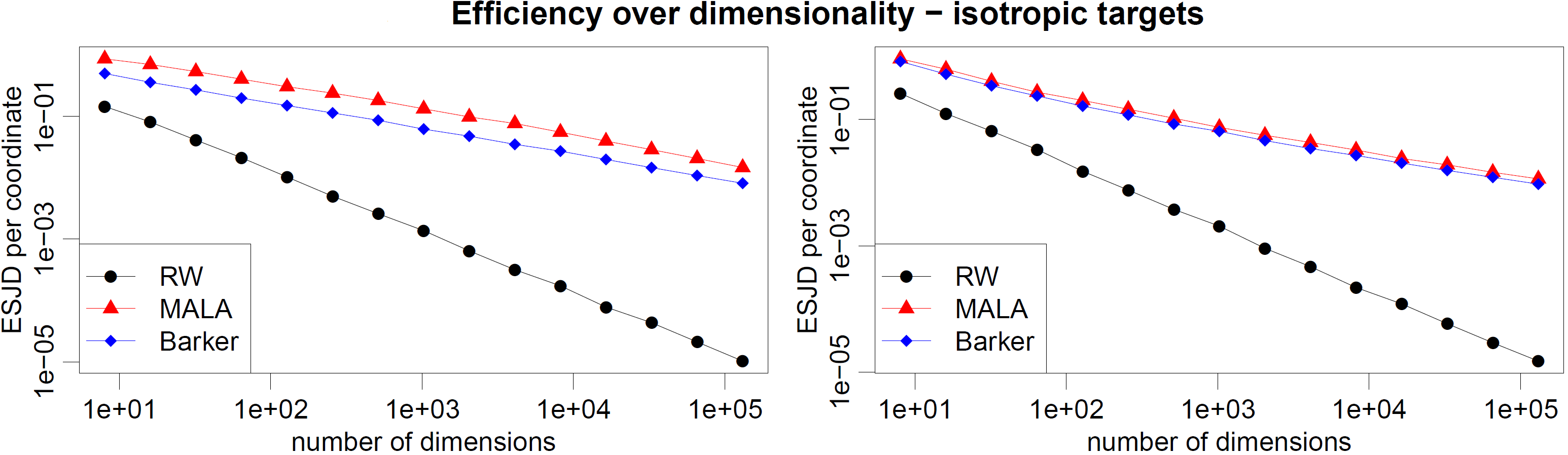}
\caption{ESJD against dimensionality for RWM, MALA and Barker schemes with optimally-tuned step size. 
The target distribution has i.i.d.\ coordinates following either a Gaussian distribution (left plot) or a hyperbolic one (right plot).
}
\label{fig:efficiency_isotropic_Gaussian}
\end{figure}
Figure \ref{fig:efficiency_isotropic_Gaussian} shows how the ESJD per coordinate decays as dimension increases for the three algorithms. 
For MALA and Barker the ESJD appears to decrease at the same rate as $d$ increases, which is in accordance with the preliminary results in Section \ref{sec:scaling_barker}.
In the Gaussian case, MALA outperforms Barker roughly by a factor of 2 regardless of dimension (more precisely, the ESJD ratio lies between $1.7$ and $2.5$ for all values of $d$ in Figure \ref{fig:efficiency_isotropic_Gaussian}), while in the hyperbolic case the same factor is around 1.2, again independently of dimension (ESJD ratio between $1.1$ and $1.25$ for all values of $d$ in Figure \ref{fig:efficiency_isotropic_Gaussian}).
The rate of decay for the random walk Metropolis is faster, as predicted by the theory.

\section{Simulations with Adaptive Markov chain Monte Carlo}\label{sec:sim_adaptive}
In this section we illustrate how robustness to tuning affects the performance of adaptive MCMC methods. 

\subsection{Adaptation strategy and algorithmic set-up}
We use Algorithm 4 in Section 5 of \cite{andrieu2008tutorial} to adapt the tuning parameters within each scheme.  Specifically, in each case a Markov chain is initialised using a chosen global proposal scale $\sigma_0$ and an identity pre-conditioning matrix $\Sigma_0 = \I_d$, 
and at each iteration the global scale and pre-conditioning matrix are updated using the equations
\begin{align}
\log(\sigma_t) 
&= 
\log(\sigma_{t-1}) + \gamma_t \times (\alpha(X^{(t)},Y^{(t)}) - \bar{\alpha}_*)
\label{eq:adapt_sigma_t}
\\
\mu_t
&=
\mu_{t-1} + \gamma_t \times (X^{(t)} - \mu_{t-1})
\\
\Sigma_t
&=
\Sigma_{t-1} + \gamma_t \times ((X^{(t)} - \mu_t)(X^{(t)}  -\mu_t)^T - \Sigma_{t-1}).\label{eq:adapt_Sigma}
\end{align}
Here $X^{(t)}$ denotes the current point in the Markov chain, $Y^{(t)}$ is the proposed move, $\mu_0 = 0$, $\bar{\alpha}_*$ denotes some ideal acceptance rate for the algorithm and the parameter $\gamma_t$ is known as the learning rate. 
We set $\bar{\alpha}_*$ to be $0.23$ for RWM, $0.57$ for MALA and $0.40$ for Barker.
We tried changing the value of $\bar{\alpha}_*$ for Barker in the range $[0.2,0.6]$ without observing major differences.
In our simulations we constrain $\Sigma_t$ to be diagonal (i.e.\ all off-diagonal terms in \eqref{eq:adapt_Sigma} are set to 0).
This is often done in practice to avoid having to learn a dense pre-conditioning matrix, which has both a high computational cost and would require a large number of MCMC samples.
See the supplement for full details on the pre-conditioned Barker schemes obtained with both diagonal and dense matrix $\Sigma_t$, including pseudo-code of the resulting algorithms. 

We set the learning rate to $\gamma_t := t^{-\kappa}$ with $\kappa\in(0.5,1)$, as for example suggested in \citep{shaby2010exploring}. 
Small values of $\kappa$ correspond to more aggressive adaptation, and for example  \cite{shaby2010exploring} suggest using $\kappa=0.8$.
In the simulations of Section \ref{sec:diag_adapt} we use $\kappa = 0.6$ as this turned out to be a good balance between fast adaptation and stability for MALA ($\kappa = 0.8$ resulted in too slow adaptation, while values of $\kappa$ lower than $0.6$ led to instability).
The adaptation of RWM and Barker was not very sensitive to the value of $\kappa$. 
Unless specified otherwise, all algorithms are randomly initialized with each coordinate sampled independently from a normal distribution with standard deviation $10$. 
Following the results from the optimal scaling theory \citep{roberts2001optimal}, we set the starting value for the global scale as $\sigma_0^2=2.4^2/d$ for RWM and $\sigma_0^2=2.4^2/d^{1/3}$ for MALA. For Barker we initialize $\sigma_0$ to the same values as MALA.

\subsection{Performance on target distributions with heterogeneous scales}\label{sec:diag_adapt}
In this section we compare the adaptive algorithms described above when sampling from target distributions with significant heterogeneity of scales across their components.
We consider 100-dimensional target distributions with different types of heterogeneity, tail behaviour and degree of skewness according to the 
following four scenarios:
\begin{enumerate}
    \item[(1)] \emph{(One coordinate with small scale; Gaussian target)} In the first scenario, we consider a Gaussian target with zero mean and diagonal covariance matrix. 
    We set the standard deviation of the first coordinate to $0.01$ and that of the other coordinates to 1. 
    This scenario mirrors the theoretical framework of Sections \ref{sec:hetero} and \ref{sec:barker_heterog} in which a single coordinate is the source of heterogeneity.
    \item[(2)] \emph{(Coordinates with random scales; Gaussian target)}
    Here we modify scenario 1 by generating the standard deviations of each coordinate randomly, sampling them independently from a log-normal distribution.
    More precisely, we sample $\log(\eta_i)\sim N(0,1)$ independently for $i=1,\dots,100$, where $\eta_i$ is the standard deviation of the $i$-th component.
    \item[(3)] \emph{(Coordinates with random scales; Hyperbolic target)} In the third scenario we change the tail behaviour of the target distribution, replacing the Gaussian with a hyperbolic distribution (a smoothed version of the Laplace distribution to ensure $\log\pi\in C^1(\R^d)$). 
    In particular, we set $\log\pi(x)=-\sum_{i=1}^d (\epsilon+(x_i/\eta_i)^2)^{1/2}+c$, with $\epsilon=0.1$ and $c$ being a normalizing constant. 
    The scale parameters $(\eta_i)_{i}$ are generated randomly as in scenario 2.
    \item[(4)] \emph{(Coordinates with random scales; Skew-normal target)} Finally, we consider a non-symmetric target distribution, which represents a more challenging and realistic situation.
    We assume that the $i$-th coordinate follows a skew-normal distribution with scale $\eta_i$ and skewness parameter $\alpha$, meaning that $\log\pi(x)=-\frac{1}{2}\sum_{i=1}^d (x_i/\eta_i)^2+\sum_{i=1}^d\log\Phi(\alpha x_i/\eta_i)+c$, with $c$ being a normalizing constant. 
    We set $\alpha=4$ and generate the $\eta_i$'s randomly as in scenario 2.
\end{enumerate}

First we provide an illustration of the behaviour of the three algorithms by plotting the trace plots of tuning parameters and MCMC trajectories - see Figure \ref{fig:k=1_learn06} for the results in scenario 1. 
\begin{figure}[t!]
\centering
\includegraphics[width=\linewidth]{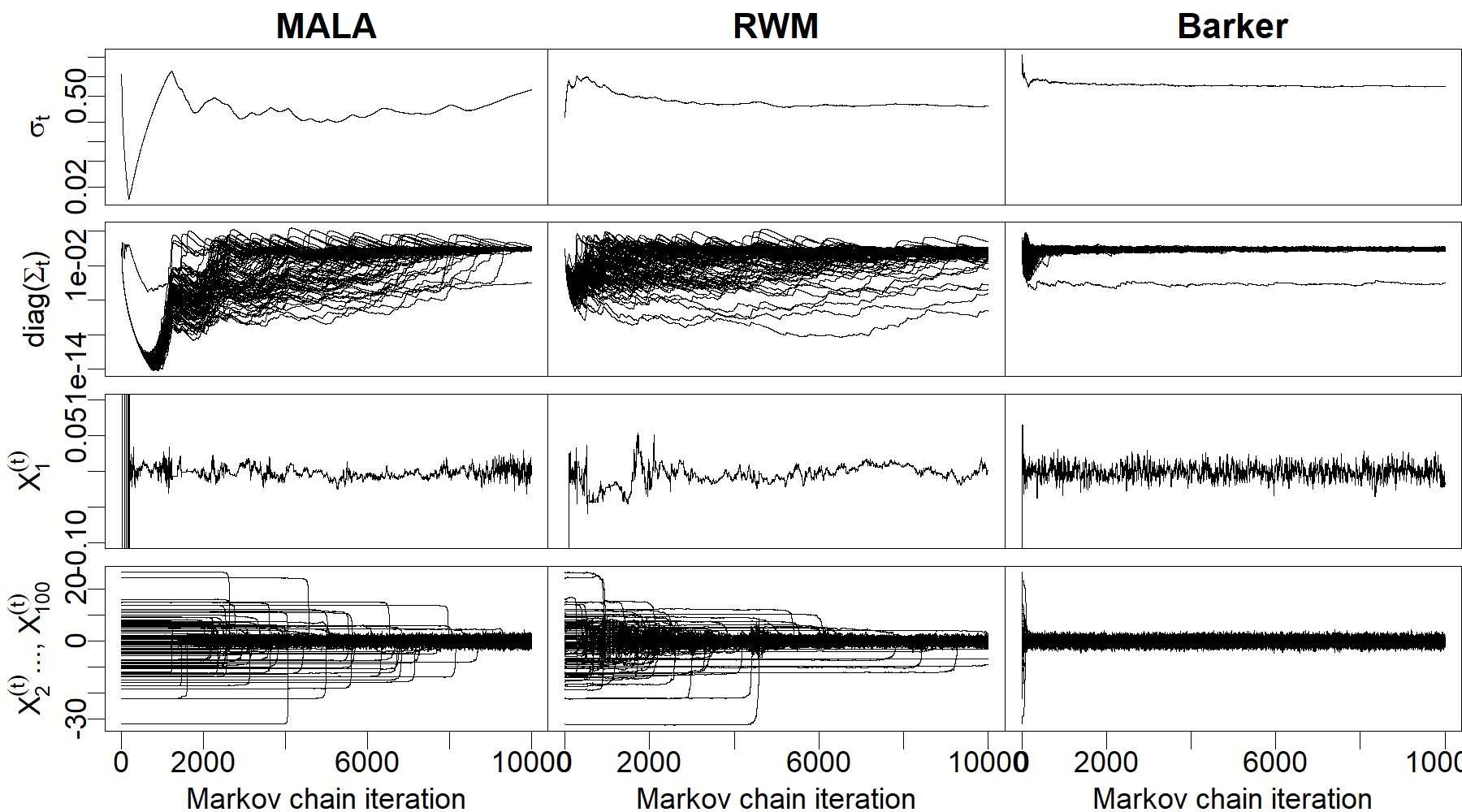}
\caption{
RWM, MALA and Barker schemes with adaptive tuning as in \eqref{eq:adapt_sigma_t}-\eqref{eq:adapt_Sigma} and learning rate set to $\gamma_t=t^{-\kappa}$ with $\kappa=0.6$.
The target distribution is a 100-dimensional Gaussian in which the first component has standard deviation $0.01$ and all others have unit scale. 
First row: adaptation of the global scale $\sigma_t$; 
second row: adaptation of the local scales $diag(\Sigma_t)=(\Sigma_{t,ii})_{i=1}^{100}$;
third row: trace plot of first coordinate;
fourth row: trace plots of coordinates from 2 to 100 (superposed).
}
\label{fig:k=1_learn06}
\end{figure}
The adaptation of tuning parameters for the Barker scheme stabilises within a few hundred iterations, after which the algorithm performance appears to be stable and efficient.
On the contrary both RWM and MALA struggle to learn the heterogeneous scales and the adaptation process has either just stabilized or not yet stabilized after $10^4$ iterations.
Looking at the behaviour of MALA in Figure \ref{fig:k=1_learn06} we see that, in order for the algorithm to achieve a non-zero acceptance rate, the global scale parameter $\sigma_t$ must first be reduced considerably to accommodate the smallest scale of $\pi(\cdot)$. At this point the algorithm can slowly begin to learn the components of the pre-conditioning matrix $\Sigma_t$, but this learning occurs very slowly because the comparatively small value for $\sigma_t$ results in poor mixing across all other dimensions than the first. 
Analogous plots for Scenarios 2, 3 and 4 are given in the supplement and display comparable behaviour.

We then compare algorithms in a more quantitative way, by looking at the average mean squared error (MSE) of MCMC estimators of the first moment of each coordinate, which is a standard metric in MCMC.
For any $h:\mathbb{R}^d\to\mathbb{R}$, define the corresponding MSE as $\mathbb{E}\big[(\hat{h}^{(t)}-\mathbb{E}_\pi[h])^2\big]$ where $\hat{h}^{(t)}=(t-t_{burn})^{-1}\sum_{i=t_{burn}+1}^t h(X^{(i)})$ is the MCMC estimator of $\mathbb{E}_\pi[h]$ after $t$ iterations of the algorithm. Here $t_{burn}$ is a burn-in period, which we set to $t_{burn}=\lfloor t/2 \rfloor$, where $\lfloor \cdot \rfloor$ denotes the floor function.
Below, we report the average MSE for the collection of test functions given by $h(x)=x_i/\eta_i$ for $i=1,\dots,d$ after $t$ MCMC iterations (rescaling by $\eta_i$ is done to give equal importance to each coordinate). 

In addition, we also monitor the rate at which the pre-conditioning matrix $\Sigma_t$ converges to the covariance of $\pi$, denoted as $\Sigma$, in order to measure how quickly the adaptation mechanism learns suitable local tuning parameters. 
We consider the $l^2$-distance between the diagonal elements of $\Sigma_t$ and $\Sigma$ on the log scale.
This leads to the following measure of convergence of the tuning parameters 
after $t$ MCMC iterations:
\begin{equation}\label{eq:d_t}
d_t
=\mathbb{E}\left[\frac{1}{\sqrt{d}}\left(\sum_{i=1}^d(\log(\Sigma_{t,ii})-\log(\Sigma_{ii}))^2\right)^{1/2}\right]\,,
\end{equation}
where the expectation is with respect the Markov chain $(X^{(t)})_{t\geq 1}$.
We use the log scale as it is arguably more appropriate than the natural one when comparing step-size parameters, and we focus on diagonal terms as both $\Sigma_t$ and $\Sigma$ are diagonal here.
Monitoring the convergence of $d_t$ to 0 we can compare the speed at which good tuning parameters are found during the adaptation process for different schemes. 

\begin{figure}[t!]
\centering
\includegraphics[width=\linewidth]{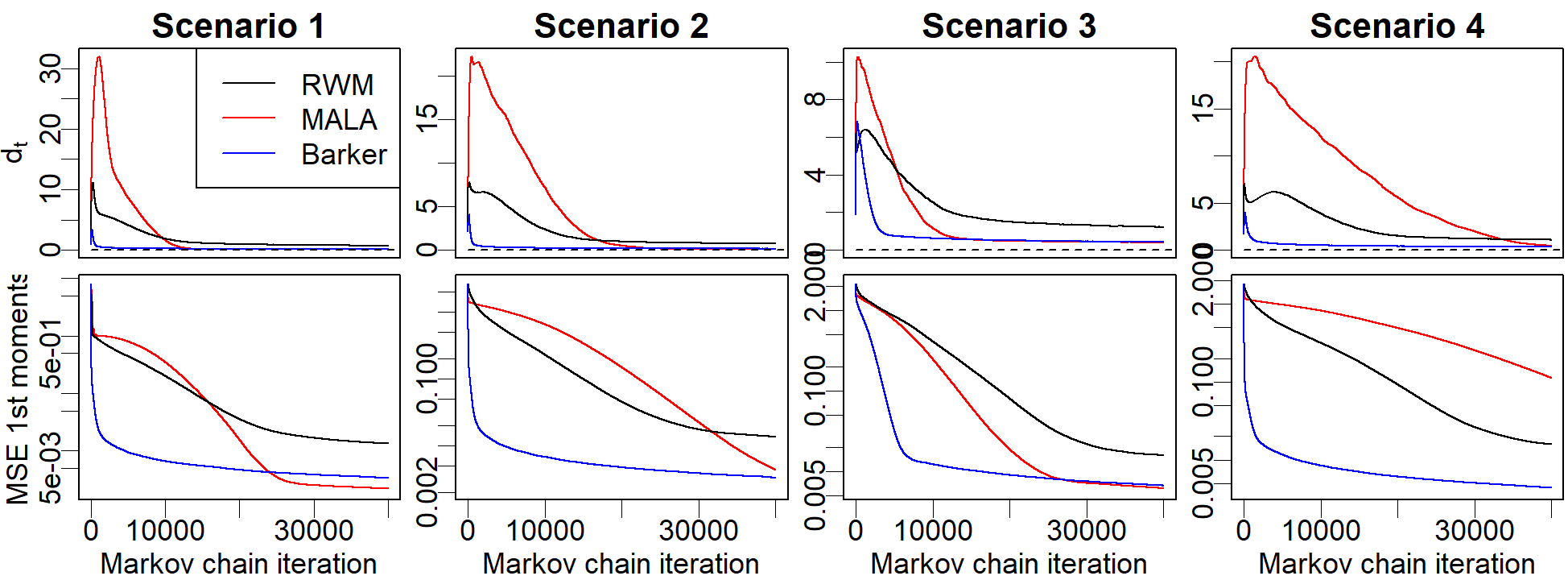}
\caption{Comparison of RWM, MALA and Barker on the four target distributions (Scenarios 1 to 4) described in Section \ref{sec:diag_adapt}, averaging over ten repetitions of each algorithm.
First row: convergence of tuning parameters, measured by $d_t$ defined in \eqref{eq:d_t}.
Second row: Mean Square Error (MSE) of MCMC estimators of first moments averaged over all coordinates.}
\label{fig:tuning_MSE_06}
\end{figure}

Figure \ref{fig:tuning_MSE_06} displays the evolution of $d_t$ and the MSE defined above over $4\times 10^4$ iterations of each algorithms, where $d_t$ and the MSE are estimated by averaging over 100 independent runs of each algorithm.
The results are in accordance with the illustration in Figure \ref{fig:k=1_learn06}, and suggest that the Barker scheme is robust to different types of targets and heterogeneity and results in very fast adaptation, while both MALA and RWM require significantly more iterations to find good tuning parameters.
The tuning parameters of MALA appear to exhibit more unstable behaviour than RWM in the first few thousands iterations (larger $d_t$), while after that they converge more quickly, which again is in accordance with the behaviour observed in Figure \ref{fig:tuning_MSE_06} and with the theoretical considerations of Sections \ref{sec:hetero} and \ref{sec:barker_heterog}.
To further quantify the tuning period, we define the time to reach a stable level of tuning as
$\tau_{adapt}(\epsilon)=\inf\{t\geq 1\,:\,d_t\leq \epsilon\}$ for some $\epsilon>0$. 
We take $\epsilon=1$ and report the resulting values in Table \ref{table:scenarios_table}, denoting $\tau_{adapt}(1)$ simply as $\tau_{adapt}$.
The results show that in these examples Barker always has the smallest adaptation time, with a speed-up compared to RWM of at least 34x in all four scenarios, and a speed-up compared to MALA ranging between 3x (scenario 3) and 30x (scenario 2).  The adaptation times $\tau_{adapt}$ tend to increase from scenario 1 to scenario 4, suggesting that the target distribution becomes more challenging as we move from scenario 1 to 4.  The hardest case for Barker seems to be the hyperbolic target, although even there the tuning stabilized in roughly 3,000 iterations, while the hardest case for MALA is the skew-normal, in which tuning stabilized in roughly 30,000 iterations.

\begin{table}\label{table:scenarios_table}
\caption{Adaptation times ($\tau_{adapt}$) and mean squared errors (\emph{MSE}) from 10k, 20k and 40k iterations of the RWM, MALA and Barker algorithms under each of the four heterogeneous scenarios described in Section \ref{sec:diag_adapt}.}
\centering
\fbox{%
\begin{tabular}{c*{5}{p{1.9cm}}}
 & \em Method & $\tau_{adapt}$ & \em MSE$_{10k}$ & \em MSE$_{20k}$ & \em MSE$_{40k}$   
\\ \hline
\parbox[t]{5mm}{\multirow{3}{*}{\em 1}}  
 & RWM    & 18,757 & 0.200 & 0.036 & 0.013 \\
 & MALA   & 10,785 & 0.348 & 0.016 & 0.002 \\
 & Barker & 524    & 0.007 & 0.005 & 0.003 \\
\hline
\parbox[t]{5mm}{\multirow{3}{*}{\em 2}}  
 & RWM    & 19,163 & 0.228 & 0.045 & 0.013 \\
 & MALA   & 17,298 & 0.644 & 0.147 & 0.004 \\
 & Barker & 542    & 0.007 & 0.005 & 0.003 \\
 \hline
\parbox[t]{5mm}{\multirow{3}{*}{\em 3}}   
 & RWM    & $>$40k & 0.409 & 0.080 & 0.016 \\
 & MALA   & 10,630 & 0.248 & 0.019 & 0.006 \\
 & Barker & 3,294  & 0.012 & 0.009 & 0.007 \\
 \hline
\parbox[t]{5mm}{\multirow{3}{*}{\em 4}}   
 & RWM    & $>$40k & 0.315 & 0.092 & 0.016 \\
 & MALA   & 34,340 & 0.813 & 0.488 & 0.112 \\
 & Barker & 1,427  & 0.008 & 0.006 & 0.004
\end{tabular}}
\end{table}

The differences in the adaptation times have a direct implication on the resulting MSE of MCMC estimators, which is intuitive because the Markov chain will typically start sampling efficiently from $\pi$ only once good tuning parameters are found.
As we see from the second row of Figure \ref{fig:tuning_MSE_06} and the second part of Table \ref{table:scenarios_table}, the MSE of Barker is already quite low (between $0.007$ and $0.012$) after $10^4$ iterations in all scenarios, while RWM and MALA need significantly more iterations to achieve the same MSE. 
After finding good tuning parameters and having sampled enough, MALA is slightly more efficient than Barker for the Gaussian target in Scenario 1 and equally efficient in the hyperbolic target of Scenario 3, which is consistent with the simulations of Section \ref{sec:sim_scaling} under optimal tuning.

\subsection{Comparison on a Poisson random effects model}\label{sec:hier_poisson}
In this section we consider 
a Poisson hierarchical model of the form
\begin{align}\nonumber
y_{ij}|\eta_i&\stackrel{ind}\sim \hbox{Poisson}(\exp(\eta_i))
&
j=1,\dots,n_i\,,\\\label{eq:Poisson_model}
\eta_i|\mu&\stackrel{ind}\sim \hbox{N}(\mu,\sigma_{\eta}^2)
&i=1,\dots,I\,,\\\nonumber
\mu&\sim \hbox{N}(0,10^2)\,,&
\end{align}
and test the algorithms on the task of sampling from the resulting posterior distribution $p(\mu,\eta_1,\dots,\eta_I|\textbf{y})$, where $\textbf{y}=(y_{ij})_{ij}$ denotes the observed data.
In our simulations we set $I=50$ and $n_i=5$ for all $i$, leading to $51$ unkown parameters and $250$ observations.

The model in \eqref{eq:Poisson_model} is an example of a generalized linear model that induces a posterior distribution with light tails and potentially large gradients of $\log\pi$, which creates a challenge for gradient-based algorithms.
In particular, the task of sampling from the posterior becomes harder when either the observations $(y_{ij})_{ij}$ contain large values or they are heterogeneous across values of $i\in\{1,\dots,I\}$.
The former case results in a more peaked posterior distribution with larger gradients, while the latter induces heterogeneity across the posterior distributions of the parameters $\eta_i$.

In our simulations we consider three scenarios, corresponding to increasingly challenging target distributions:
\begin{enumerate}
    \item[(1)] In the first scenario we take $\sigma_\eta=1$ and generate the data $\textbf{y}$ from the model in \eqref{eq:Poisson_model} assuming the data-generating value of $\mu$ to be $\mu^*=5$ and sampling the data-generating values of $\eta_1,\dots,\eta_I$ from their prior distribution.
    \item[(2)] In the second scenario we increase the value of $\sigma_\eta$ to $3$, which induces more heterogeneity across the parameters $\eta_1,\dots,\eta_I$.
    \item[(3)] In the third scenario we keep $\sigma_\eta=3$ and increase the values of $\mu^*$ to $10$, thus inducing larger gradients.
\end{enumerate}

\begin{figure}[h!]
\centering
\includegraphics[width=\linewidth]{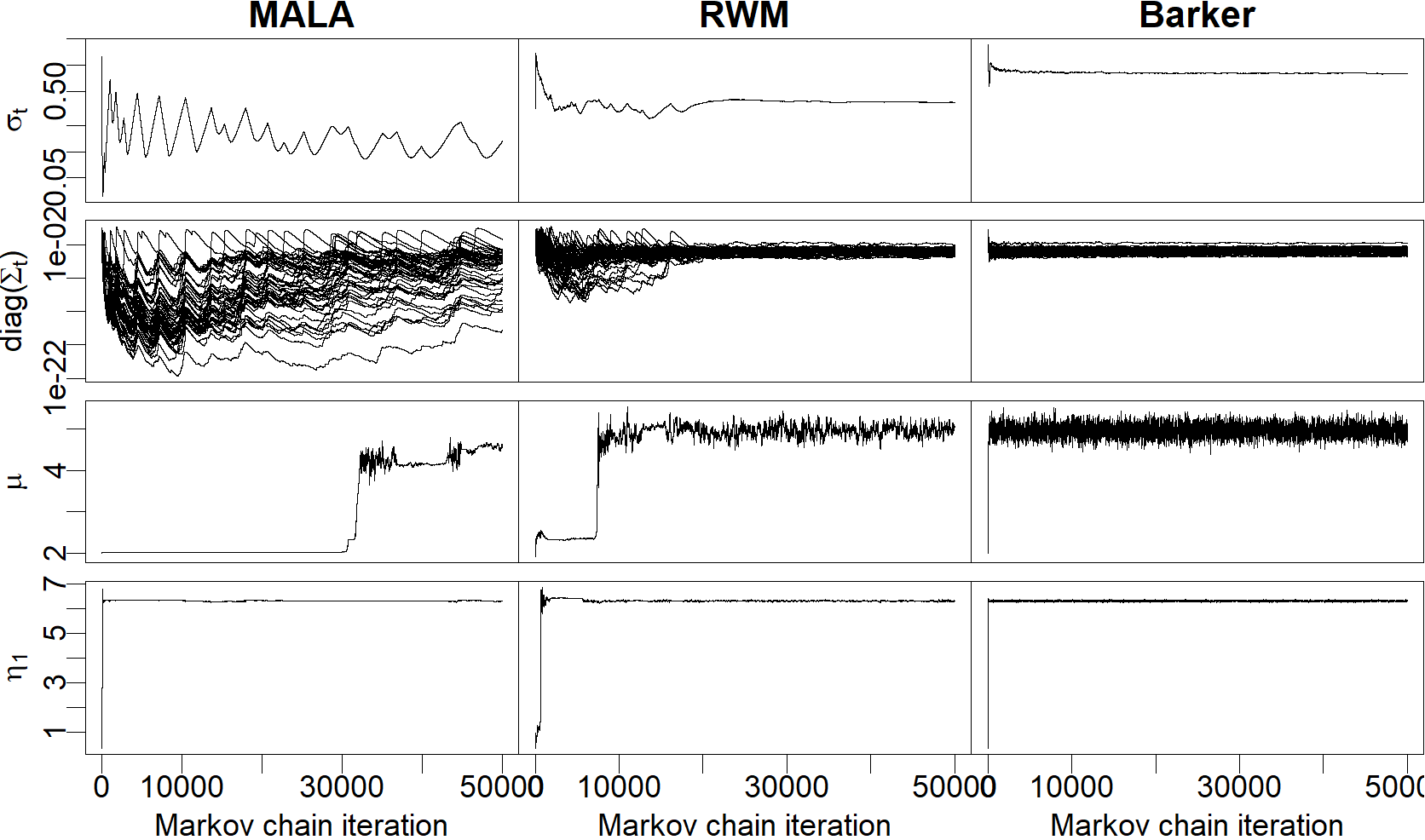}
\caption{Behavior of RWM, MALA and Barker on the posterior distribution from the Poisson hierarchical model in \eqref{eq:Poisson_model}. Data are generated as in the first scenario of Section \ref{sec:hier_poisson}.
First row: adaptation of the global scale $\sigma_t$; 
second row: adaptation of the local scales $diag(\Sigma_t)=(\Sigma_{t,ii})_{i=1}^{100}$;
third row: trace plot of the parameter $\mu$;
fourth row: trace plots of the parameter $\eta_1$.}
\label{fig:HierPoisson_mu5_sd1}
\end{figure}
Similarly to Section \ref{sec:diag_adapt}, we first provide an illustration of the behaviour of the tuning parameters and MCMC trace plots for RWM, MALA and Barker in Figure \ref{fig:HierPoisson_mu5_sd1}.
Here all algorithms are run for $5\times 10^4$ iterations, with the target defined in the first scenario.
We use the adaptation strategy of Section \ref{sec:diag_adapt} for tuning, following \eqref{eq:adapt_sigma_t}-\eqref{eq:adapt_Sigma} with $\kappa=0.6$ and $\Sigma_t$ constrained to be diagonal, and initialize the chains from a random configuration sampled from the prior distribution of the model.
In this example, the random walk converges to stationarity in roughly 10,000 iterations while the Barker scheme takes a few hundreds. 
By contrast MALA struggles to converge and exhibits unstable behaviour even after $5\times 10^4$ iterations.
Note that the first $3\times 10^4$ iterations of MALA, in which the parameter $\mu$ appears to be constant, do not correspond to rejections but rather to moves with very small increments in the $\mu$ component. 

We then provide a more systematic comparison between the algorithms under consideration in Table \ref{table:hier_poiss_table}.
In addition to RWM, MALA and Barker, we also consider a state-of-the-art implementation of adaptive Hamiltonian Monte Carlo (HMC), namely the Stan \citep{stan} implementation of the No-U-Turn Sampler (NUTS) \citep{hoffman2014no} as well as of standard HMC (referred to as ``static HMC" in the Stan package). 
The NUTS algorithm is a variant of standard HMC in which the number of leapfrog iterations, i.e.\ the parameter $L$ in \eqref{eq:transition}, is allowed to depend on the current state (using a ``No-U-Turn" criterion).
The resulting number of leapfrog steps (and thus log-posterior gradient evaluations) per iteration is not fixed in advance but rather tuned adaptively depending on the hardness of the problem.
This is also the case for the static HMC algorithm implementation in Stan, as in that case the total integration time in  \eqref{eq:transition} is fixed and the step-size and mass matrix are adapted. 
For both algorithms we use the default Stan version that learns a diagonal covariance/mass matrix during the adaptation process.
This is analogous to constraining the preconditioning matrix $\Sigma_t$ for RWM, MALA and Barker to be diagonal, as we are doing here. 

\begin{table}\label{table:hier_poiss_table}
\caption{Comparison of sampling schemes on the posterior distribution arising from the Poisson hierarchical model in \eqref{eq:Poisson_model}. 
Blocks of rows from 1 to 3 refer to the three data-generating scenarios described in Section \ref{sec:hier_poisson}.
All numbers are averaged across ten repetitions of each algorithm.
For each algorithm we report: number of iterations; number of leapfrog steps per iteration and total number of gradient evaluations (when applicable); estimated ESS (minimum and median across parameters); minimum ESS per hundred gradient evaluations (with standard deviation across the ten repetitions).
}
\centering
\fbox{%
\begin{tabular}{c*{6}{p{1.9cm}}}
 & \em Method & \em Iterations $(n)$ & \em Leapfrog steps $/n$ & \em Gradient calls $(g)$ & \em ESS & \em ESS$/g \times 100$ \\ 
\hline
\parbox[t]{5mm}{\multirow{5}{*}{\em 1}}  
& RWM & 
$5\times 10^4$ & -  & - &  (49,66) & -
\\
& MALA & 
$5\times 10^4$& -  & $5\times 10^4$ & (648,727) & 1.30 $\pm$ 2.73
\\
& Barker & 
$5\times 10^4$& - & $5\times 10^4$ & (1445,1587) & 2.89 $\pm$ 0.07
\\ 
& HMC &
 $2 \times 10^3$ & 89.5 & $1.8\times 10^5$ & (285,1954) & 0.25 $\pm$ 0.78
\\ 
& NUTS  & 
$2 \times 10^3 $ & 8.5  & $1.7\times 10^4$ & (1175,1822) & 6.95 $\pm$ 1.68
\\ 
\hline
\parbox[t]{5mm}{\multirow{5}{*}{\em 2}}  
& RWM & 
$5\times 10^4$ & -  & - &  (0.4,10.6) & -
\\
& MALA & 
$5\times 10^4$& -  & $5\times 10^4$ & (0.0,8.0) & $<$0.01
\\
& Barker & 
$5\times 10^4$& - & $5\times 10^4$ & (1365,1563) & 2.73 $\pm$ 0.13
\\ 
& HMC &
 $2 \times 10^3$ & 797 & $1.6\times 10^6$ & (25,1949) & $<$0.01
\\ 
& NUTS  & 
$2\times 10^3$ & 57.7  & $1.2\times 10^5$ & (942,1826) & 1.19 $\pm$ 1.14
\\ 
\hline
\parbox[t]{5mm}{\multirow{5}{*}{\em 3}}  
& RWM & 
$5\times 10^4$ & -  & - &  (0.0,5.3) & -
\\
& MALA & 
$5\times 10^4$& -  & $5\times 10^4$ & (0.0,0.2) & $<$0.01
\\
& Barker & 
$5\times 10^4$& - & $5\times 10^4$ & (1301,1594) & 2.60 $\pm$ 0.92
\\ 
& HMC &
 $2\times 10^3$ & 8103 & $1.6\times 10^7$ & (3.3,899) & $<$0.01
\\ 
& NUTS  & 
$2 \times 10^3$ & 179  & $3.5\times 10^5$ & (137,348) & 0.012$\pm$0.14
\end{tabular}}
\end{table}

Table \ref{table:hier_poiss_table} reports the results of the simulations for the five algorithms in each of the three scenarios.
For each algorithm, we report the number of log-posterior gradient evaluations and the minimum and median effective sample size (ESS) across the $51$ unknown parameters.
The ESS values are computed with the \texttt{effectiveSize} function from the \texttt{coda} R package \citep{coda}, discarding the first half of the samples as burn-in.
The RWM, MALA and Barker schemes are run for $5\times 10^4$ iterations, and the HMC and NUTS schemes for $2\times 10^3$ iterations.
The latter is the default value in the Stan package and in this example corresponds to a number of gradient evaluations between $1.7\times 10^4$ and $1.6\times 10^7$.
All numbers in Table \ref{table:hier_poiss_table} are averaged over ten independent replications of each algorithm.
We use the minimum ESS per gradient evaluation as an efficiency metric, of which we report the mean and standard deviation across the ten replicates (multiplied by 100 to facilitate readability).

According to Table \ref{table:hier_poiss_table}, NUTS is the most efficient scheme in scenario 1, while Barker is the most efficient one in scenarios 2 and 3.
This is in accordance with the intuition of Barker being a more robust scheme, as the target distribution becomes more challenging as we move from scenario 1 to 3.
MALA struggles to converge to stationarity in scenarios 2 and 3 (with an estimated ESS around zero), while it performs better in scenario 1, although with a high variability across different runs (shown by the large standard deviation in the last column).
The RWM displays low ESS values for all three scenarios, although with a less dramatic deterioration going from scenario 1 to 3.
Interestingly, the performances of Barker 
are remarkably stable across scenarios (with an ESS of around 1400), as well as across parameters for which ESS is computed (in all cases the minimum and median ESS are close to each other) and across repetitions (shown by the relatively small standard deviation in the last column).
We note that NUTS is also remarkably effective taking into consideration that it is not an algorithm designed with a major emphasis on robustness, but that performance does degrade when moving from scenario 1 to scenario 3.
As in the MALA case, static HMC struggles to converge in scenarios 2 and 3 and is not very efficient in scenario 1.
Note that NUTS, and in particular HMC, compensate for the increasing difficulty of the target by increasing the number of leapfrog steps per iteration. For example, the drop in efficiency of NUTS between scenarios 1 and 2 is mostly due to the increase in average number of leapfrog iterations from $8.5$ to $57.7$ rather than in a decrease in ESS.
Somewhat surprisingly, in static HMC the number of leapfrog steps per iteration is increased significantly more than NUTS, which 
could either be due to genuine algorithmic differences or to variations in the details of the adaptation strategy implemented in Stan. 
 Overall, Barker and NUTS are the two most efficient algorithms in these simulation, with a relative efficiency that depends on the scenario under consideration: NUTS being roughly 2.4 times more efficient in scenario 1, Barker 2.3 times more efficient in scenario 2 and Barker 40 times more efficient in scenario 3. 

\subsection{Additional simulations reported in the supplement}
In the supplement we report additional simulations for 
some of the above experiments.  As a sensitivity check, we also performed simulations using the tamed Metropolis-adjusted Langevin algorithm \citep{brosse2018tamed} and the truncated Metropolis-adjusted Langevin algorithm \citep{roberts1996exponential,atchade2006adaptive}, two more robust modifications to MALA in which large gradients are controlled by monitoring the size of $\|\nabla\log\pi(x)\|$.  The schemes do offer some added stability compared to MALA in terms of controlling large gradients, but ultimately are still very sensitive to heterogeneity of the target distribution and to the choice of the truncation level, and do not exhibit the same robustness observed in the case of the Barker scheme.  See the supplement for implementation details, results and further discussion.

\section{Discussion} 

We have introduced a new gradient-based MCMC method, \textit{the Barker proposal}, and have demonstrated both analytically and numerically that it shares the favourable scaling properties of other gradient-based approaches, along with an increased level of robustness, both in terms of geometric ergodicity and robustness to tuning (as defined in the present paper).  
The most striking benefit of the method appears to be in the context of adaptive Markov chain Monte Carlo.
Evidence suggests that combining the efficiency of a gradient-based proposal mechanism with a method that exhibits robustness to tuning gives a combination of stability and speed that is very desirable in this setting, and can lead to efficient sampling that requires minimal practitioner input.

The theoretical results in this paper could be extended by studying in greater depth the large $\lambda$ regime (Section \ref{sec:h_to_zero}) and the high-dimensional scaling of the Barker proposal (Section \ref{sec:scaling_barker}).
Of course, there are many other algorithms that could be considered under the robustness to tuning framework, and it is worthwhile future work to explore which features of a scheme result in either robustness to tuning or a lack of it.
Extensions to the Barker proposal that incorporate momentum and exhibit the $d^{-1/4}$ decay in efficiency with dimension enjoyed by Hamiltonian Monte Carlo may be possible, as well as the development of other methods within the first-order locally-balanced proposal framework introduced in Section \ref{sec:barker}, or indeed schemes that are exact at higher orders.

\section*{Acknowledgements}
The authors thank Marco Frangi for preliminary work in his Master's thesis related to the ergodicity arguments in the paper.
SL acknowledges support from the engineering and physical sciences research council through grant number EP/K014463/1 (i-like). GZ acknowledges support from the European research council starting grant 306406 (N-BNP), and by the Italian ministry of education, universities and research PRIN Project 2015SNS29B.

\bibliography{barker_ref}

\newpage

\appendix

\vspace{2cm}

\begin{center}
    {\LARGE Supplement to `On the robustness of gradient-based MCMC algorithms.'}
\end{center}

\vspace{2cm}

The ary material contains proofs of the theoretical results 
 and additional figures related to the simulations.
It also includes some background on the key techniques used for the proofs of Section \ref{sec:hetero}, a proof of Condition \ref{cond:gap_MALA}  for the exponential family class and details related to skew-symmetry and pre-conditioning of the Barker proposal.
In this supplement, we number equations, figures and lemmas differently from the main paper, e.g.\ (1) rather than (1.1), to avoid confusion between the two documents.

\section{Tools to bound spectral gaps}

To establish lower bounds on spectral gaps we use the following Lemma.

\begin{lemma} \label{lemma:lowerbound} Consider two Metropolis--Hastings kernels $P_1$ and $P_2$ with associated candidate kernels $Q_1(x,dy) = q_1(x,y)dy$ and $Q_2(x,dy) = q_2(x,y)dy$ and common target distribution $\pi$. 
If there is a $\gamma >0$ such that $q_1(x,y) \geq \gamma q_2(x,y)$ for all fixed $x,y$ with $x \neq y$, then
\begin{equation} \label{eq:lowerbound}
\textup{Gap}(P_1)\geq \gamma\textup{Gap}(P_2) .
\end{equation}
\end{lemma}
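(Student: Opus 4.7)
The plan is to exploit the symmetric representation of the Metropolis--Hastings Dirichlet form. Since $\delta_x(dy)$ contributes zero to $(f(y)-f(x))^2$, only the off-diagonal part of $P_i(x,dy)=\alpha_i(x,y)q_i(x,y)dy+r_i(x)\delta_x(dy)$ enters the Dirichlet form defining $\text{Gap}(P_i)$. First, I would rewrite, for any $f\in L^2_{0,1}(\pi)$,
\[
\mathcal{E}_i(f) := \frac{1}{2}\int (f(y)-f(x))^2 \pi(dx) P_i(x,dy)
= \frac{1}{2}\int (f(y)-f(x))^2 \pi(x)\alpha_i(x,y)q_i(x,y)\,dx\,dy,
\]
and then use the identity $\pi(x)\alpha_i(x,y)q_i(x,y)=\min\{\pi(x)q_i(x,y),\pi(y)q_i(y,x)\}$, which follows directly from the definition of the Metropolis--Hastings acceptance rate in \eqref{eq:arate_hastings}. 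This yields the manifestly symmetric expression
\[
\mathcal{E}_i(f)=\frac{1}{2}\int (f(y)-f(x))^2 \min\{\pi(x)q_i(x,y),\pi(y)q_i(y,x)\}\,dx\,dy.
\]

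Next, I would apply the hypothesis $q_1(x,y)\geq \gamma q_2(x,y)$ for all $x\neq y$ (which, being valid for every ordered pair, also gives $q_1(y,x)\geq \gamma q_2(y,x)$). Since $\min\{\cdot,\cdot\}$ is monotone and positively homogeneous,
\[
\min\{\pi(x)q_1(x,y),\pi(y)q_1(y,x)\}\geq \gamma \min\{\pi(x)q_2(x,y),\pi(y)q_2(y,x)\},
\]
pointwise in $(x,y)$ off the diagonal (a null set with respect to the integrating measure). Multiplying by the nonnegative integrand $(f(y)-f(x))^2$ and integrating, I obtain the pointwise-type bound on the Dirichlet forms
\[
\mathcal{E}_1(f)\geq \gamma\,\mathcal{E}_2(f)\qquad\text{for every }f\in L^2_{0,1}(\pi).
\]

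Taking the infimum over $f\in L^2_{0,1}(\pi)$ on both sides, using that $\gamma\geq 0$ so the infimum is preserved under scaling, gives $\text{Gap}(P_1)\geq \gamma\,\text{Gap}(P_2)$, as claimed. There is no real obstacle here; the only subtlety is noting that the diagonal part of the kernel is irrelevant for the Dirichlet form, so that the hypothesis on $q_1$ and $q_2$ (which says nothing about the respective rejection probabilities) is sufficient to compare the two forms.
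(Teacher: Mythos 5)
Your proof is correct and follows essentially the same route as the paper's: both rewrite the Dirichlet form in the symmetric form $\min\{\pi(x)q_i(x,y),\pi(y)q_i(y,x)\}$, apply the pointwise inequality $q_1\geq\gamma q_2$ inside the minimum, and take the infimum over $f\in L^2_{0,1}(\pi)$. Your added remarks about the diagonal part and the irrelevance of the rejection probabilities are correct and merely make explicit what the paper leaves implicit.
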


\begin{proof}

For any $f \in L_{0,1}^2(\pi)$, it holds that
\begin{align*}
&\int \{ f(y) - f(x) \}^2 \pi(dx) P_1(x,dy)
\\&=
\int \{ f(y) - f(x) \}^2 
\min\left\{ 1, \frac{\pi(y)q_1(y,x)}{\pi(x)q_1(x,y)}\right\}
\pi(x) q_1(x,y)dx\,dy
\\&=
\int \{ f(y) - f(x) \}^2 
\min\left\{ \pi(x) q_1(x,y), \pi(y)q_1(y,x)\right\}
dx\,dy \\
&\geq \gamma \int \{ f(y) - f(x) \}^2 
\min\left\{ \pi(x) q_2(x,y), \pi(y)q_2(y,x)\right\}
\\
&= \gamma \int \{ f(y) - f(x) \}^2 
\min\left\{ 1, \frac{\pi(y)q_2(y,x)}{\pi(x)q_2(x,y)}\right\}
\pi(x) q_2(x,y)dx\,dy
\\
&= \gamma \int \{ f(y) - f(x) \}^2\pi(dx) P_2(x,dy).
\end{align*}
The result follows from the Dirichlet forms characterization of spectral gaps in \eqref{eq:spectral_gap}.
\end{proof}

To find upper bounds we use the notion of \textit{conductance} for a Markov chain.  Define the conductance of a set $K \in \mathcal{B}$ with $0<\pi(K)\leq 1/2$ for a $\pi$-reversible Markov chain with transition kernel $P$ as
\[
\Phi(K):=\frac{\int_{K}\pi(dx)P(x,K^{c})}{\pi(K)},
\]
which is the conditional probability $\mathbb{P}(X^{(t+1)}\in K^{c}|X^{(t)}\in K)$
provided $X^{(t)}\sim\pi(\cdot)$. Recall the spectral gap bound for $P$
that for any such $K$
\begin{equation} \label{eq:conductance}
\text{Gap}(P)\leq 2\Phi(K).
\end{equation}
This can be seen directly by setting $g(x) = \pi(K^C)\mathbb{I}(x \in K)-\pi(K)\mathbb{I}(x \in K^c)$, letting $f(x) := g(x)/\int g(x)^2\pi(dx)$ and computing the Dirichlet form of $f$ using \eqref{eq:spectral_gap}.  Here $\mathbb{I}(\cdot)$ denotes the indicator function.
\comment{Intuitively, $\Phi(K)$ gives information about how able a Markov chain is to move between the disjoint regions $K$ and $K^c$ in a single time step, and so can be naturally related to first-order auto-correlations, and hence also spectral gaps.  Conductance arguments can also be used to establish lower bounds for spectral gaps, but these are more difficult to show in practice since one must find the infimum of $\Phi(K)$ over all appropriate $K \in \mathcal{B}$.}

\section{Change of variables and isomorphic Markov chains}\label{sec:equiv} 
In this section we provide two lemmas showing that bijective mappings do not change the spectral gaps of Markov chains, nor the Metropolis-Hastings dynamics. 
These lemmas will allow us to prove the results of Section \ref{sec:hetero} working with the equivalent formulation where the target is fixed and the proposal distribution is changing, rather than having a target that changes with $\lambda$. 
This will in turn allow us to exploit results such as Lemma \ref{lemma:lowerbound}, thus significantly simplifying the proofs.

We follow the terminology of Johnson and Geyer (2013), and introduce the notion of \emph{isomorphic} Markov chains.
Intuitively, two Markov chains $(X^{(t)})_{t\geq 1}$ and $(Y^{(t)})_{t\geq 1}$ are isomorphic if $(\phi(X^{(t)}))_{t\geq 1}$ is equal in distribution to $(Y^{(t)})_{t\geq 1}$ for some bijective map $\phi$.
More formally, let $(X^{(t)})_{t\geq 1}$ and $(Y^{(t)})_{t\geq 1}$ be Markov chains with transition kernels $P$ and $K$ and state spaces $(S,\mathcal{A})$ and $(T,\mathcal{B})$, respectively.
We say that $(X^{(t)})_{t\geq 1}$ and $(Y^{(t)})_{t\geq 1}$ are {isomorphic} if there exists a bijective 
 function $\phi$ from $S$ to $T$ such that 
\begin{align}\label{eq:isomorphic_chains}
P(x,A)=&K(\phi(x),\phi(A))
&x\in S,\,A\in\mathcal{A}.
\end{align}
Equation \eqref{eq:isomorphic_chains} means that $K(\phi(x),\cdot)$ is the push-forward of $P(x,\cdot)$ under $\phi$ for every $x\in\R^d$, which we write as $K=\phi\circ P$.
We will use $\circ$ to denote the push-forward operator for both probability distributions and transition kernels, so that $(\phi\circ\pi)(B)=\pi(\phi^{-1}(B))$ and $(\phi\circ P)(y,B)=P(\phi^{-1}(y),\phi^{-1}(B))$.

Isomorphic Markov chains share the same convergence behaviour and, in particular, they have the same $L^2$-spectral gap, as stated in the following lemma (see Lemma 1 of Papaspilioupulus et al.\ (2019) 
for a proof of analogous results).
\begin{lemma}\label{lemma:isomorphic_MC}
Isomorphic Markov chains have the same $L^2$-spectral gap.
\end{lemma}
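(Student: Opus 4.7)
The plan is to exploit the Dirichlet form characterization of the spectral gap from \eqref{eq:spectral_gap} together with the change-of-variables induced by $\phi$. The key observation is that the map $g \mapsto g\circ\phi$ establishes a bijection between square-integrable test functions on the two state spaces which preserves all the relevant quantities (mean, variance, Dirichlet form).

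First I would verify that if $\pi$ is $P$-invariant on $(S,\mathcal{A})$, then $\tilde\pi := \phi\circ\pi$ is $K$-invariant on $(T,\mathcal{B})$. This follows from \eqref{eq:isomorphic_chains} applied to sets of the form $\phi(A)$: for any $B\in\mathcal{B}$, writing $A=\phi^{-1}(B)$,
\begin{align*}
\int_T K(y,B)\,\tilde\pi(dy)
&= \int_S K(\phi(x),\phi(A))\,\pi(dx)
= \int_S P(x,A)\,\pi(dx) = \pi(A) = \tilde\pi(B).
\end{align*}
Similarly, if $P$ is $\pi$-reversible then $K$ is $\tilde\pi$-reversible, so the spectral gap of $K$ is well-defined via the analogue of \eqref{eq:spectral_gap}.

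Next I would set up the bijection $\Psi: L^2_{0,1}(\tilde\pi)\to L^2_{0,1}(\pi)$ by $\Psi(g) = g\circ\phi$. Change of variables gives $\int_S (g\circ\phi)^2\,d\pi = \int_T g^2\,d\tilde\pi$ and $\int_S (g\circ\phi)\,d\pi = \int_T g\,d\tilde\pi$, so $\Psi$ indeed maps $L^2_{0,1}(\tilde\pi)$ into $L^2_{0,1}(\pi)$; bijectivity uses $\phi^{-1}$.

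Then I would compute the Dirichlet form of $g \in L^2_{0,1}(\tilde\pi)$ under $K$ and show it equals the Dirichlet form of $f:=\Psi(g)=g\circ\phi$ under $P$. Using \eqref{eq:isomorphic_chains} twice (once for the marginal in $x$ and once for $K(\phi(x),\cdot)$), we get
\begin{align*}
\frac{1}{2}\!\int\!(g(y')-g(x'))^2\tilde\pi(dx')K(x',dy')
&= \frac{1}{2}\!\int\!(g(\phi(y))-g(\phi(x)))^2\pi(dx)P(x,dy)\\
&= \frac{1}{2}\!\int\!(f(y)-f(x))^2\pi(dx)P(x,dy).
\end{align*}

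Taking the infimum over $g\in L^2_{0,1}(\tilde\pi)$ on the left and over $f\in L^2_{0,1}(\pi)$ on the right, and using that $\Psi$ is a bijection between these function spaces, yields $\mathrm{Gap}(K)=\mathrm{Gap}(P)$.

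The main obstacle is making the double change-of-variables in the Dirichlet form rigorous, which amounts to applying the standard push-forward identity $\int h\, d(\phi\circ\mu) = \int (h\circ\phi)\, d\mu$ first to the outer integral against $\tilde\pi$ and then, pointwise in $x$, to the inner integral against $K(\phi(x),\cdot)=\phi\circ P(x,\cdot)$. Once this bookkeeping is in place, the identification of Dirichlet forms and the conclusion are essentially automatic.
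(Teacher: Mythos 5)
Your argument is correct: the paper does not give its own proof of this lemma (it only cites Lemma 1 of Papaspiliopoulos et al.\ for an analogous result), and your proof via the Dirichlet-form characterisation \eqref{eq:spectral_gap} together with the isometric bijection $g\mapsto g\circ\phi$ between $L^2_{0,1}(\phi\circ\pi)$ and $L^2_{0,1}(\pi)$ is exactly the standard argument that the cited result rests on. The only implicit hypothesis you rely on — that $\phi(A)$ is measurable so that the push-forward identities apply — is already built into the paper's definition \eqref{eq:isomorphic_chains}, so nothing is missing.
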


In the following we will exploit the fact that the Metropolis-Hastings (MH) algorithm preserves isomorphisms of Markov chains under transformations of both the target and candidate distributions, as shown by the following lemma.
\begin{lemma}\label{lemma:MH_isomorphic}
Let $\phi: S\to T$ be a bijective function and $(X^{(t)})_{t\geq 1}$ and $(Y^{(t)})_{t\geq 1}$ be Metropolis-Hastings Markov chains defined on $(S,\mathcal{A})$ and $(T,\mathcal{B})$ with target distributions $\pi$ and $\phi\circ\pi$, respectively, and proposal kernels $Q$ and $\phi\circ Q$, respectively.
Then $(X^{(t)})_{t\geq 1}$ and $(Y^{(t)})_{t\geq 1}$ are isomorphic Markov chains.
\end{lemma}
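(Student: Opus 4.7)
The plan is to verify the identity $P_Y = \phi \circ P_X$, where $P_X$ and $P_Y$ denote the Metropolis--Hastings kernels of the two chains. First I would write out both kernels in the standard form recalled in the main text: $P_X(x,A) = \int_A \alpha_X(x,x')Q(x,dx') + r_X(x)\delta_x(A)$ and similarly $P_Y(y,B) = \int_B \alpha_Y(y,y')(\phi\circ Q)(y,dy') + r_Y(y)\delta_y(B)$, with $\alpha_X,\alpha_Y$ defined via \eqref{eq:arate_hastings} using $(\pi,Q)$ and $(\phi\circ\pi,\phi\circ Q)$ respectively, and $r_X,r_Y$ the corresponding rejection probabilities.

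The central step is the identity
\[
\alpha_Y(\phi(x),\phi(x')) = \alpha_X(x,x'), \qquad x,x'\in S.
\]
To see this, note that the Hastings ratio is a Radon--Nikodym derivative between the joint measures $\pi(dx)Q(x,dx')$ and its time-reversal $\pi(dx')Q(x',dx)$ on $S\times S$. Pushing both forward through the bijection $\phi\times\phi$ yields the joint measures $(\phi\circ\pi)(dy)(\phi\circ Q)(y,dy')$ and $(\phi\circ\pi)(dy')(\phi\circ Q)(y',dy)$ on $T\times T$. Since push-forward by a bijection preserves Radon--Nikodym derivatives (evaluated at the corresponding points), the ratio inside the $\min$ in \eqref{eq:arate_hastings} for the $Y$-chain at $(\phi(x),\phi(x'))$ equals the corresponding ratio for the $X$-chain at $(x,x')$, giving the stated identity. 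Integrating this identity against $Q(x,\cdot)$ and using the defining property $(\phi\circ Q)(\phi(x),B) = Q(x,\phi^{-1}(B))$ yields also $r_Y(\phi(x)) = r_X(x)$.

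With the two identities in hand, I compute for any $x\in S$ and $B\in\mathcal{B}$,
\begin{align*}
(\phi\circ P_X)(\phi(x),B)
&= P_X(x,\phi^{-1}(B)) \\
&= \int_{\phi^{-1}(B)}\alpha_X(x,x')Q(x,dx') + r_X(x)\delta_x(\phi^{-1}(B)) \\
&= \int_B \alpha_Y(\phi(x),y')(\phi\circ Q)(\phi(x),dy') + r_Y(\phi(x))\delta_{\phi(x)}(B) \\
&= P_Y(\phi(x),B),
\end{align*}
where the third equality uses the change of variables $y' = \phi(x')$ together with the two identities above, and the observation that $\delta_x(\phi^{-1}(B)) = \delta_{\phi(x)}(B)$. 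This is exactly the relation \eqref{eq:isomorphic_chains}, so the two chains are isomorphic.

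The main subtlety is the rigorous justification of the Radon--Nikodym invariance step, since Metropolis--Hastings on general state spaces requires a careful measure-theoretic interpretation of $\pi(dx)Q(x,dx')$ (see \cite{tierney1998note}); this can be handled by expressing the acceptance ratio via the symmetric measure argument and invoking that push-forward by a bijection commutes with taking Radon--Nikodym derivatives. Once this is in place, the remainder of the argument is a direct change of variables.
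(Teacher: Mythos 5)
Your proposal is correct and follows essentially the same route as the paper's own proof: both hinge on the identity $\alpha_Y(\phi(x),\phi(x'))=\alpha_X(x,x')$, justified through Tierney's Radon--Nikodym formulation of the acceptance ratio and the fact that push-forward by the bijection $\phi\times\phi$ maps the joint measure $\pi(dx)Q(x,dx')$ and its reversal to their counterparts for $(\phi\circ\pi,\phi\circ Q)$, followed by the same change-of-variables computation for the full kernel. The paper makes the measure-theoretic step slightly more explicit by constructing the sets $R$ and $R^\phi$ from Proposition 1 of \cite{tierney1998note} and observing that $R$ is the pre-image of $R^\phi$, which is exactly the justification you flag as the remaining subtlety.
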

\begin{proof}
Let $\mu^\phi(dy,dy'):= (\phi\circ \pi)(dy')(\phi\circ Q)(y',dy)$ and $\mu^\phi_T(dy,dy'):= \mu^\phi(dy',dy)$. Then using Proposition 1 of \cite{tierney1998note} there exists a set $R^\phi \in \mathcal{B} \times \mathcal{B}$ such that $\mu^\phi$ and $\mu^\phi_T$ are mutually absolutely continuous on $R^\phi$ and mutually singular on its complement.  The Radon-Nikodym derivative $d\mu^\phi/d\mu^\phi_T (y,y')^T$ is therefore finite and positive when restricted to $R^\phi$.
Let $r^\phi(y,y'):= d\mu^\phi/d\mu^\phi_T(y,y')$ if $(y,y') \in R^\phi$ and $r^\phi(y,y'):= 0$ otherwise. Then the Metropolis--Hastings acceptance probability for the chain $(Y_t)_{t\geq 1}$ can be written $\alpha^\phi(y,y'):=\min(1,r^\phi(y,y'))$.
Similarly, letting $\mu(dx,dx'):=\pi(dx')Q(x',dx)$ and $\mu_T(dx,dx'):=\mu(dx',dx)$ the acceptance probability for $(X_t)_{t\geq 1}$ can be written $\alpha(x,x'):=\min(1,r(x,x'))$ where $r(x,x'):= d\mu/d\mu_T(x,x')$ when $(x,x') \in R \in S \times S$ and 0 otherwise, with $R$ defined analogously to $R^\phi$ for the measures $\mu$ and $\mu_T$.

Note first that from the definitions of push-forward measure and transition kernel given above that $\mu^\phi(A,B) = \mu(\phi^{-1}(A),\phi^{-1}(B))$ and $\mu_T^\phi(A,B) = \mu_T(\phi^{-1}(A),\phi^{-1}(B))$  for any $(A,B) \in \mathcal{B} \times \mathcal{B}$.  From this it follows that $R \in \mathcal{A} \times \mathcal{A}$ is the pre-image under $\phi$ of $R^\phi \in \mathcal{B} \times \mathcal{B}$, and further that
$$
\alpha^\phi(y,y') = \min(1, r^\phi(y,y')) = \min(1, r(\phi^{-1}(y),\phi^{-1}(y')) = \alpha(\phi^{-1}(y),\phi^{-1}(y')).
$$
Denoting the transition kernels of $(X^{(t)})_{t\geq 1}$ and $(Y^{(t)})_{t\geq 1}$ as $P$ and $K$ respectively, it therefore holds that
\begin{align*}
K(y,B)
=&
\delta_{B}(y)\int_{T}(1-\alpha^{\phi}(y,y'))\phi\circ Q(y,dy')
+\int_{B}\alpha^{\phi}(y,y')\phi\circ Q(y,dy')
\\=&
\delta_{\phi^{-1}(B)}(\phi^{-1}(y))\int_{S}(1-\alpha(\phi^{-1}(y),x'))Q(\phi^{-1}(y),dx')
\\
&+\int_{\phi^{-1}(B)}\alpha^{\phi}(\phi^{-1}(y),x')Q(\phi^{-1}(y),dx')
\\=&
P(\phi^{-1}(y),\phi^{-1}(B))
\end{align*}
meaning that $(X^{(t)})_{t\geq 1}$ and $(Y^{(t)})_{t\geq 1}$ are isomorphic.
\end{proof}

\section{Proofs}
Throughout the proofs we often use $\|\cdot\|$ to denote the standard euclidean norm $\|\cdot\|_2$.

\subsection{Proofs for Section \ref{sec:hetero}}

\begin{proof}[Proof of Proposition \ref{prop:rwm_condition}]

We first establish that $\mu(\delta_\lambda) \geq \mu(\delta)$ whenever $\lambda \leq \lambda_0$ for some $\lambda_0>0$.  In cases (i) and (iii) $\mu(z)$ is monotonically decreasing in $\|z\|_2^2$. So when $\lambda < 1$ it holds that
\[
\|\delta_\lambda\|_2^2 = \sum_{i=1}^d (y_i - x_i)^2 + \lambda^2(y_1 - x_1)^2 \leq \|\delta\|_2^2,
\]
which proves the condition for $\lambda_0=1$. In case (ii) $\mu(z)$ is monotonically decreasing in $\|z\|_1$, so again $\|\delta_\lambda\|_1 = \lambda|y_1 - x_1| + \sum_{i=2}^d |y_i - x_i| \leq \|\delta\|$ when $\lambda<1$. 
The statement that $\sup_{z_1 \in \R}\mu_1(z_1)<\infty$ follows by noting that in all three cases the marginal $\mu_1$ is known in closed form and is, respectively, a Gaussian, Laplace and Student's t distribution, all of which have bounded density.
\end{proof}

\begin{proof}[Proof of Theorem \ref{thm:RW_lambda_to_0}]
Instead of studying directly $P_\lambda^R$, we will study the transition kernel $\tilde{P}^R_\lambda$ corresponding to a Metropolis-Hastings (MH) algorithm with proposal $\phi\circ Q^R$ and target $\phi\circ\pi^{(\lambda)}$, for some bijective $\phi:\R^d\to\R^d$.
By Lemma \ref{lemma:MH_isomorphic}, $\tilde{P}^R_\lambda$ and $P^R_\lambda$ induce isomorphic Markov chains and thus by Lemma \ref{lemma:isomorphic_MC} we have $\text{Gap}(\tilde{P}^R_\lambda)=\text{Gap}(P^R_\lambda)$. 
We consider $\phi$ given by $\phi(x_1,\dots,x_d)=(\lambda^{-1}x_1,x_2,\dots,x_d)$.
It follows that $\phi\circ\pi^{(\lambda)}=\pi$ and that $\tilde{Q}^R_{\lambda}=\phi\circ Q$ satisfies $\tilde{Q}^R_{\lambda}(x,dy)=\tilde{q}^R_\lambda(x,y)dy$ with
\begin{equation}\label{eq:qRh_def}
\tilde{q}^R_\lambda(x,y) := \frac{\lambda}{\sigma^d}\mu\left( \frac{\delta_\lambda}{\sigma} \right),
\end{equation}
and $\delta_\lambda$ defined as in equation \eqref{eqn:delta_h}.

First we show that for all $\lambda \leq \lambda_0$ and all $x,y \in \R^d$ it holds that $\tilde{q}^R_\lambda(x,y) \geq \lambda \tilde{q}^R_1(x,y)$, where $\lambda_0>0$ is the value defined in Condition \ref{cond:rwm}. 
From \eqref{eq:qRh_def}, we have
\begin{equation}\label{eq:RW_ratio_proposals}
\frac{\tilde{q}_\lambda^R(x,y)}{\tilde{q}^R_1(x,y)} = \lambda\frac{\mu( \delta_\lambda/\sigma )}{\mu(\delta/\sigma)}.
\end{equation}
Condition \ref{cond:rwm} guarantees $\mu( \delta_\lambda/\sigma )\geq\mu(\delta/\sigma)$ for all $\lambda \leq \lambda_0$, which together with \eqref{eq:RW_ratio_proposals} gives $\tilde{q}^R_\lambda(x,y) \geq \lambda \tilde{q}^R_1(x,y)$.  Combining the latter inequality with Lemma \ref{lemma:lowerbound} gives 
$$
\text{Gap}(\tilde{P}_\lambda^R)\geq \lambda \text{Gap}(\tilde{P}^R_1)=\Theta(\lambda)\qquad \hbox{as }\lambda\downarrow 0\,.
$$

To show that $\text{Gap}(\tilde{P}_\lambda^R)\leq \Theta(\lambda)$, take $X^{(t)} \sim \pi(\cdot)$ and $X^{(t+1)} | X^{(t)} \sim \tilde{P}_\lambda^R(X^{(t)},\cdot)$.
We consider the set $K:=\{y\in\mathbb{R}^{d}:|y_{1}|>k\}$, with $k$ chosen such that $0<\pi(K)<1/2$ (since $\pi(\cdot)$ is defined on a Polish space,
it is tight, meaning this is always possible).
Recall from \eqref{eq:conductance} that $\text{Gap}(\tilde{P}_\lambda^R)\leq 2\mathbb{P}(X^{(t+1)}\in K^c\,|\,X^{(t)}\in K)$.
We have
\begin{align*}
\mathbb{P}(X^{(t+1)}\in K^c\,\big|\,X^{(t)}\in K)
\leq\,&
\mathbb{P}(|X^{(t)}_1+ \sigma \lambda^{-1} \xi_1|\leq k\,\big|\,X^{(t)}\in K)
\\
=\,&
\mathbb{P}(-X^{(t)}_1-k \leq \sigma \lambda^{-1} \xi_1\leq-X^{(t)}_1+k\,\big|\,X^{(t)}\in K)
\\
\leq\,&
\sigma^{-1}\lambda 2k \sup_{z_1 \in \R} \mu_1(z_1)\,,
\end{align*}
where $\xi_1$ is the first component of $\xi\sim\mu$.
Conditon \ref{cond:rwm} implies $\sup_{z_1\in \R}\mu_1(z_1)<\infty$, giving that $\text{Gap}(\tilde{P}_\lambda^R)\leq\Theta(\lambda)$ for $\lambda\downarrow 0$, as desired.
\end{proof}

\begin{proof}[Proof of Theorem \ref{thm:MALA_product}]
This follows directly from the proof of Theorem \ref{thm:HMC} below, by noting that setting $L=1$ in Hamiltonian Monte Carlo gives the Langevin algorithm.
\end{proof}

\begin{proof}[Proof of Theorem \ref{thm:MALA_general}]
Similarly to the proof of Theorem \ref{thm:RW_lambda_to_0}, instead of studying directly $P_\lambda^M$, we will study the MH transition kernel $\tilde{P}^M_\lambda$ with proposal $\phi\circ Q^M_\lambda$ and target $\phi\circ\pi^{(\lambda)}$.
Lemma \ref{lemma:MH_isomorphic} and Lemma \ref{lemma:isomorphic_MC} imply $\text{Gap}(\tilde{P}^M_\lambda)=\text{Gap}(P^M_\lambda)$. 
We consider the same $\phi$ as in the proof of Theorem \ref{thm:RW_lambda_to_0}, which we write as $\phi(x)=\Sigma_\lambda^{1/2}x$ with
\[
 \Sigma_\lambda = 	
\begin{pmatrix} 
    		\lambda^{-2} & (0,\dots,0) \\
            (0,\dots,0)^T & \I_{d-1}
    		\end{pmatrix}.
\]
We have $\pi=\phi\circ\pi^{(\lambda)}$ as stated above.
Also, $\tilde{Q}_\lambda^M=\phi\circ Q_\lambda^M$ satisfies $\tilde{Q}_\lambda^M(y,\cdot)=N(y+\frac{\sigma^2}{2} \Sigma_\lambda 
\nabla\log\pi(y), \sigma^2 \Sigma_\lambda)$ where $ \Sigma_\lambda $ is as above.
This is a fairly standard calculation, analogous to the derivation of the preconditioned MALA algorithm with preconditioning matrix $ \Sigma_\lambda^{1/2}$, which
we report here for completeness.
By definition of $\phi$ and $Q^M_\lambda(x,\cdot)=N(x+\frac{\sigma^2}{2}\nabla \log\pi^{(\lambda)}(x) , \sigma^2\mathbb{I}_d)$, we have 
$\phi\circ Q^M_\lambda(x,\cdot)=N\left(\phi\big(x+\frac{\sigma^2}{2}\nabla \log\pi^{(\lambda)}(x)\big),\sigma^2 \Sigma_\lambda\right)$  for each $x\in\R^d$.
Also, since $\log\pi^{(\lambda)}(x)=\log\pi(\phi(x)) +const$ and $\phi(x)=\Sigma_\lambda^{1/2}x$,
 we have $\nabla \log\pi^{(\lambda)}(x)=\Sigma_\lambda^{1/2}\nabla\log\pi(\phi(x))$.
Therefore 
\begin{align*}
\phi\left(x+\frac{\sigma^2}{2}\nabla \log\pi^{(\lambda)}(x)\right)
=\Sigma_\lambda^{1/2}\left(x+\frac{\sigma^2}{2}\Sigma_\lambda^{1/2}\nabla \log\pi(\phi(x))\right)
=
\phi(x)+\frac{\sigma^2}{2}\Sigma_\lambda\nabla\log\pi(\phi(x))\,
\end{align*}
meaning that $\tilde{Q}^M_\lambda(\phi(x),\cdot)$ is the push-forward of $Q^M_\lambda(x,\cdot)$ under $\phi$ for every $x\in\R^d$, as desired.

We now prove $\text{Gap}(\tilde{P}^M_\lambda)\leq \Theta(e^{-\lambda^{-\alpha}})$ as $\lambda\downarrow 0$ for some $\alpha>0$.
We take $\sigma=1$ for simplicity of notation (or otherwise replace $\lambda$ by $\sigma^{-1} \lambda$) and we assume $\lambda<1 $ without loss of generality (we are studying a limit $\lambda\downarrow 0$). 
Let $(X^{(t)})_{t=1}^{\infty}$ be a Markov chain with transition kernel $\tilde{P}^M_\lambda$ started in stationarity.
We consider the sets $A_\lambda:=\{y\in\mathbb{R}^{d}:|y_{1}|\leq \lambda^{-1/(2\tilde{\gamma})}\}$, where $\tilde{\gamma}=\max\{1,\gamma\}$, 
and $K:=\{y\in\mathbb{R}^{d}:|y_{1}|>k\}$, with $k$ chosen such that $0<\pi(K)<1/2$.
Given
$$\epsilon\in \left(0,\liminf_{|x_1|\to\infty} \left(\inf_{(x_2,\dots,x_d)\in\R^{d-1}}\left|\frac{\partial \log \pi(x)}{\partial x_1}\right|\|x\|^\gamma \right)\right)\,,$$
Condition \ref{cond:gap_MALA}(i) implies that we can choose $k$ large enough such that 
$$
\left|\frac{\partial \log \pi(x)}{\partial x_1}\right|\|x\|^\gamma\geq \epsilon
\qquad\hbox{ for all }x\in K\,.
$$
We will now show $\Pr(X^{(t+1)}\in K^c\,|\,X^{(t)}\in K)\leq \Theta(e^{-\lambda^{-\alpha}})$ for some $\alpha>0$ as $\lambda\downarrow 0$.
Note that
\begin{multline*}
\pi(K)\mathbb{P}(X^{(t+1)}\in K^{c}|X^{(t)}\in K)=\mathbb{P}(X^{(t+1)}\in K^{c}|X^{(t)}\in K\cap A_\lambda)\mathbb{P}(X^{(t)}\in K\cap A_\lambda)\\
+\mathbb{P}(X^{(t+1)}\in K^{c}|X^{(t)}\in K\cap A_\lambda^{c})\mathbb{P}(X^{(t)}\in K\cap A_\lambda^{c}),
\end{multline*}
meaning that
\[
\pi(K)\mathbb{P}(X^{(t+1)}\in K^{c}|X^{(t)}\in K)\leq\mathbb{P}(X^{(t+1)}\in K^{c}|X^{(t)}\in K\cap A_\lambda)+\mathbb{P}(X^{(t)}\in K\cap A_\lambda^{c}).
\]
Condition \ref{cond:gap_MALA}(ii) implies $\mathbb{P}(X^{(t)}\in K\cap A_\lambda^{c})\leq\mathbb{P}(X^{(t)}\in A_\lambda^{c})\leq \Theta(e^{-\lambda^{-\beta/(2\tilde{\gamma})}})$.

Also, given $Y=(Y_1,\dots,Y_d)$ with 
$Y|X^{(t)}\sim \tilde{Q}^M_h(X^{(t)},\cdot)$, 
we have
\begin{align*}
\Pr(X^{(t+1)}\in K^c\,|\,X^{(t)}\in K\cap A_\lambda)
\leq&
\Pr(|Y_1|\leq k\,|\,X^{(t)}\in K\cap A_\lambda)\,.
\end{align*}
Denote $\frac{\partial}{\partial x_{1}}\log\pi(x)$ by $\partial_1(x)$ for brevity.
If $X^{(t)}\in K\cap A_\lambda$ we have $|X^{(t)}_1|\leq \lambda^{-1/(2\tilde{\gamma})}\leq \lambda^{-1/2}$ and $$|\partial_1(X^{(t)})|\geq\epsilon\|X^{(t)}\|^{-\gamma}\geq\epsilon \lambda^{\gamma/(2\tilde{\gamma})}\geq \epsilon \lambda^{1/2}\,,$$
which imply
\begin{align*}
|Y_1|
&=|X^{(t)}_1+ \lambda^{-2} \partial_1(X^{(t)})+ \lambda^{-1} \xi_1|
\\&\geq
 \lambda^{-2}|\partial_1(X^{(t)})|-\lambda^{-1}|\xi_1|-|X^{(t)}_1|\\
&\geq
 \lambda^{-3/2} \epsilon -\lambda^{-1}|\xi_1|-\lambda^{-1/2}\,, 
\end{align*}
where $\xi_1\sim N(0,1)$.
It follows that 
\begin{align*}
\Pr(|Y_1|\leq k\,|\,X^{(t)}\in K\cap A_\lambda)
\leq&
\Pr( \lambda^{-3/2} \epsilon -\lambda^{-1}|\xi_1|-\lambda^{-1/2}\leq k\,|\,X^{(t)}\in K\cap A_\lambda)\\
=&
\Pr(|\xi_1|\geq \epsilon  \lambda^{-1/2} -\lambda^{1/2}-k\lambda)\,.
\end{align*}
Since $\Pr(|\xi_1|\geq t)\leq\exp(-t^2/2)$ for every $t>0$ (which follows from standard bounds on Gaussian tails and $\xi_{1}\sim N(0,1)$) and $\epsilon \lambda^{-1/2} -\lambda^{1/2}-k\lambda \geq 2\lambda^{-1/3}$ eventually as $\lambda\downarrow 0$, it follows
\begin{align*}
\Pr(|Y_1|\leq k\,|\,X^{(t)}\in K\cap A_\lambda)
\leq&
 \;\Theta(e^{-\lambda^{-2/3}})
 \qquad \hbox{as }\lambda\downarrow 0\,.
\end{align*}
Combining the inequalities above and noting that $\pi(K)$ does not depend on $\lambda$, it follows that $\mathbb{P}(X^{(t+1)}\in K^{c}|X^{(t)}\in K)\leq 
\Theta(e^{-\lambda^{-\alpha}})$ for $\alpha=\min\{\beta/(2\tilde{\gamma}),2/3\}>0$
as $\lambda\downarrow 0$.
Finally, the conductance bound in \eqref{eq:conductance} imply
$\text{Gap}(\tilde{P}^M_\lambda)\leq \Theta(e^{-\lambda^{-\alpha}})$ as $\lambda\downarrow 0$.
\end{proof}

\begin{proof}[Proof of Theorem \ref{thm:HMC}]
Similarly to the case of the random walk and Langevin schemes, we will study the MH transition kernel $\tilde{P}^H_\lambda$ with proposal $\phi\circ Q^H_\lambda$ and target $\phi\circ\pi^{(\lambda)}$, and exploit the fact that $\text{Gap}(\tilde{P}^H_\lambda)=\text{Gap}(P^H_\lambda)$ by Lemma \ref{lemma:MH_isomorphic} and Lemma \ref{lemma:isomorphic_MC}.
Considering $\phi(x)=\Sigma_\lambda^{1/2}x$ as above we have $\pi=\phi\circ\pi^{(\lambda)}$ and $\tilde{Q}_\lambda^H=\phi\circ Q_\lambda^H$ evolving according to a preconditioned HMC algorithm as follows.
Writing the current point $x \in \R^d$ as $x(0)$, as in Section \ref{sub:HMC} of the the paper, the proposal $y := x(L)\sim \tilde{Q}_\lambda^H(x,\cdot)$ is obtained using the update
\begin{equation}
x(L) = x(0)+\sigma^{2}\left( \frac{L}{2}\Sigma_\lambda\nabla \log\pi(x(0))+\sum_{j=1}^{L-1} (L-j)\Sigma_\lambda\nabla\log\pi\left(x(j)\right)\right) +L\sigma\Sigma_\lambda^{1/2} \xi(0),\label{eq:transition}
\end{equation}
where each $x(j)$ is defined recursively in the same manner, and $\xi(0)\sim N(0,\I_d)$.  
It is easy to check that $\tilde{Q}_\lambda^H=\phi\circ Q_\lambda^H$ using the same calculations as in the proof of Theorem \ref{thm:MALA_general}.

We now prove $\text{Gap}(\tilde{P}^H_\lambda)\leq \Theta(e^{-\lambda^{-\alpha}})$ as $\lambda\downarrow 0$ for some $\alpha>0$.
To simplify the notation in the following we prove the equivalent statement that $\text{Gap}(\tilde{P}^H_{h^{-1}})\leq \Theta(e^{-h^{\alpha}})$ as $h\to\infty$.
Fix $\delta\in(0,(1-q)/2)$ with $q$ defined in Condition \ref{cond:regularity_conditions} and consider the sets $A_h := \{y\in\R^d:|y_1|<k+h\}$ and $K := \{y\in\R^d:|y_1|>k\}$.
Here $k$ is chosen large enough that Lemma \ref{lemma:polynomial_tail} below is satisfied and that $0<\pi(K)<1/2$, which can always be done thanks to the tightness and positiveness of $\pi$ (see above).
Lemma \ref{lemma:polynomial_tail} implies that if $X^{(t)}\in K\cap A_{h}$, $|\xi_{1}|\leq h^{1-\delta}$ and $h\geq h_0$, where $h_0=\lambda_0^{-1}$ with $\lambda_0$ defined as in Lemma \ref{lemma:polynomial_tail}, then $X^{(t+1)}\in K$.
We now upper bound the probability $\mathbb{P}(X^{(t+1)}\in K^{c}|X^{(t)}\in K)$.
First note that
\begin{align*}
\mathbb{P}(X^{(t+1)}\in K^{c},X^{(t)}\in K)=\mathbb{P}(X^{(t+1)}\in K^{c}|X^{(t)}\in K\cap A_{h})\mathbb{P}(X^{(t)}\in K\cap A_{h})\\
+\mathbb{P}(X^{(t+1)}\in K^{c}|X^{(t)}\in K\cap A_{h}^{c})\mathbb{P}(X^{(t)}\in K\cap A_{h}^{c}),
\end{align*}
which implies
\begin{align*}
\mathbb{P}(X^{(t+1)}\in K^{c},X^{(t)}\in K)\leq\, 
&\mathbb{P}(X^{(t+1)}\in K^{c}|X^{(t)}\in K\cap A_{h})
+\mathbb{P}(X^{(t)}\in K\cap A_{h}^{c}).
\end{align*}
Breaking out the first term on the right-hand side gives
\begin{align*}
&\mathbb{P}(X^{(t+1)}\in K^{c}|X^{(t)}\in K\cap A_{h})
\leq  \\
&\mathbb{P}(X^{(t+1)}\in K^{c}|X^{(t)}\in K\cap A_{h},|\xi_{1}|\leq h^{1-\delta}) \mathbb{P}(|\xi_{1}|\leq h^{1-\delta})
+
\mathbb{P}(|\xi_{1}|>h^{1-\delta}),
\end{align*}
which, using the result of Lemma \ref{lemma:polynomial_tail},
reduces to
\[
\mathbb{P}(X^{(t+1)}\in K^{c}|X^{(t)}\in K\cap A_{h})\leq\mathbb{P}(|\xi_{1}|>h^{1-\delta})\,.
\]
Hence we obtain the overall bound
\[
\mathbb{P}(X^{(t+1)}\in K^{c},X^{(t)}\in K)\leq\mathbb{P}(|\xi_{1}|>h^{1-\delta})+\mathbb{P}(X^{(t)}\in K\cap A_{h}^{c}).
\]
Using standard bounds on Gaussian tails and $\xi_{1}\sim N(0,1)$, we have $\mathbb{P}(|\xi_{1}|>h^{1-\delta})\leq\exp(-h^{2(1-\delta)}/2)$.
Also, from Lemma \ref{lemma:density_tails}, we have $\mathbb{P}(X^{(t)}\in K\cap A_{h}^{c})\leq \Theta(e^{-\gamma h^{1+q} - q\log(h)})$ as $h\to\infty$ for some $\gamma \in(0,\infty)$.
Hence, since $\delta < (1-q)/2$ and $\mathbb{P}(X^{(t)}\in K)=\pi(K)$ is constant with respect to $h$, we obtain
\[
\mathbb{P}(X^{(t+1)}\in K^{c}|X^{(t)}\in K)\leq \Theta \left( e^{-\gamma h^{1+q} - q\log(h)} \right)
\qquad
\hbox{ as }h\to\infty\,.
\]
Finally, the conductance bound in \eqref{eq:conductance} gives
\[
\text{Gap}(\tilde{P}^H_{h^{-1}}) \leq \Theta\left( e^{-\gamma h^{1+q} - q\log(h)} \right)
\qquad
\hbox{ as }h\to\infty\,.
\]
\end{proof}

\begin{proof}[Proof of Proposition \ref{prop:h_to_zero}]
For the Langevin case, standard results on the total variation distance between two Gaussian measures with differing means reveals that
\[
\|Q^M_\lambda(x,\cdot) - Q^R(x,\cdot)\|_{TV} = 1+ \frac{1}{\sqrt{2\pi}}\int_0^t e^{-u^2/2}du.
\]
where $t = \sigma |\nabla \log\pi(x/\lambda)|/(4\lambda)$. Because $\nabla\log\pi$ is bounded in a neighbourhood of zero, for large enough $\lambda$ we can write $t \leq C/\lambda$ for some $C<\infty$. Then note that as $\lambda \uparrow \infty$
\begin{align*}
\int_0^{C/\lambda} e^{-u^2/2}du \leq \frac{C}{\lambda}.
\end{align*}
For the Barker case, note that the total variation distance here can be written
\[
\frac{1}{2}\int \mu_\sigma(z)|2g(e^{\nabla\log\pi(x/\lambda)z/\lambda}) - 1|dz,
\]
where $g(t)=1/(1+t^{-1})$. Setting $u:= \nabla\log\pi(x)z$, a Taylor series expansion about $u=0$ of $2g(u)$ is
\[
2g(u) = 1 + \frac{u}{2} + g''(\xi)u^2,
\]
for some $\xi$ satisfying $|\xi| \leq |\nabla\log\pi(x)z|$, using the Lagrange form of the remainder.  Substituting this into the integral and simplifying gives
\begin{align*}
\frac{1}{2}\int \mu_\sigma(z) \left| \frac{u}{2} + g''(\xi)u^2 \right| dz
&\leq 
\frac{1}{4\lambda}|\nabla\log\pi(x/\lambda)| \int |z|\mu_\sigma(z)dz 
\\ 
&\quad + \frac{1}{2\lambda^2} \nabla\log\pi(x/\lambda)^2\int |g''(\xi)|z^2 \mu_\sigma(z)dz.
\end{align*}
For large enough $\lambda$ the boundedness assumption allows us to write $|\nabla\log\pi(x/\lambda)| \leq c$ for some $c < \infty$.  In addition note that $g''(\xi) = 2e^{-2\xi}/(1+e^{-\xi})^3 - e^{-\xi}/(1+e^{-\xi})^2$, and so $\sup_{\xi \in \R}|g''(\xi)| = (6\sqrt{3})^{-1}$. Substituting into the bound and evaluating the two integrals gives an upper bound to the total variation distance of
\[
\left(\frac{c}{4}\sqrt{\frac{2\sigma^2}{\pi}}\right)\lambda^{-1} + \left(\frac{c^2\sigma^2}{12\sqrt{3}}\right)\lambda^{-2},
\]
which is $\Theta(1/\lambda)$ as $\lambda \uparrow \infty$, as desired.

\end{proof}

\subsubsection{Lemmas used to prove Theorem \ref{thm:HMC}}

\begin{lemma}
\label{lemma:polynomial_tail}
Assume Condition \ref{cond:regularity_conditions} and let $\delta\in(0,1)$. 
For every $L\geq1$ there exist $\lambda_0>0$ and a large enough $k$
such that for every $\lambda\leq \lambda_{0},$ $|\xi_{1}|\leq \lambda^{-(1-\delta)}$ and
$|x_{1}(0)|\in[k,k+\lambda^{-1})$ it holds that $|x_{1}(L)|\geq k$, where $x(L)$ is defined in \eqref{eq:transition}.
\end{lemma}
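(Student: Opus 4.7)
The plan is to exploit Condition \ref{cond:regularity_conditions}(i): the factorization $\pi(x)=\pi_1(x_1)\pi_{2:n}(x_2,\ldots,x_d)$ makes $\partial_1\log\pi(x)=\partial_1\log\pi_1(x_1)$ depend on $x_1$ alone, so applying \eqref{eq:transition} componentwise shows that the first-coordinate leapfrog trajectory $(x_1(j))_{j=0}^L$ evolves autonomously as
\begin{equation*}
x_1(i)=x_1(0)+\sigma^2\lambda^{-2}\Big(\tfrac{i}{2}G_0+\sum_{j=1}^{i-1}(i-j)G_j\Big)+i\sigma\lambda^{-1}\xi_1,
\qquad i=1,\ldots,L,
\end{equation*}
where $G_j:=\partial_1\log\pi_1(x_1(j))$. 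The key mechanism is that the preconditioning factor $\lambda^{-2}$ inflates the effective leapfrog step in the first coordinate, while the noise contribution is only $O(\lambda^{-(2-\delta)})$, so the gradient term dominates as $\lambda\downarrow 0$.

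I would then induct on $i$, showing simultaneously that the signs alternate, $\mathrm{sign}(x_1(i))=(-1)^i\mathrm{sign}(x_1(0))$, and that $|x_1(i)|\geq C_i\lambda^{-\beta_i}$ for constants $C_i>0$ and exponents $\beta_i$ satisfying $\beta_1=2$ and the recursion $\beta_i=2+q\beta_{i-1}$. For the base case $i=1$, Condition \ref{cond:regularity_conditions}(ii) with $k$ large yields $|G_0|\geq ck^q$ for some $c>0$, so $\tfrac{\sigma^2}{2}\lambda^{-2}|G_0|\geq\tfrac{c\sigma^2 k^q}{2}\lambda^{-2}$ dwarfs both $|x_1(0)|\leq k+\lambda^{-1}$ and $\sigma\lambda^{-1}|\xi_1|\leq\sigma\lambda^{-(2-\delta)}$ as $\lambda\downarrow 0$, pinning down $|x_1(1)|\geq c_1\lambda^{-2}$ with sign opposite to $x_1(0)$. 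For the inductive step, the hypothesis that $|x_1(i-1)|$ dominates all earlier $|x_1(j)|$ (when $q>0$), combined with Condition \ref{cond:regularity_conditions}(ii), gives $|G_{i-1}|\gg|G_j|$ for $j<i-1$, so the weighted gradient sum is dominated by its last term $G_{i-1}$; the noise remains $O(\lambda^{-(2-\delta)})$ and is negligible, and the next step flips the sign once more. Since $\beta_L\geq 2$ for all $L\geq 1$, we conclude $|x_1(L)|\to\infty$ as $\lambda\downarrow 0$, and in particular $|x_1(L)|\geq k$ once $\lambda_0$ is small enough.

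The main obstacle is the marginal case $q=0$ together with $L\geq 2$: here $|G_j|$ only lies in a fixed interval $[c_1,c_2]$, so cancellations in the weighted gradient sum cannot be ruled out by magnitude considerations alone. I would treat this case dichotomously: either the weighted sum is bounded below by a $\lambda$-independent positive constant (in which case the gradient term is again of order $\lambda^{-2}$ and dominates as in the $q>0$ case), or approximate cancellation forces $x_1(L)\approx x_1(0)+L\sigma\lambda^{-1}\xi_1$, in which case direct case analysis over $|\xi_1|\leq\lambda^{-(1-\delta)}$ shows that $|x_1(L)|$ either stays near $|x_1(0)|\geq k$ (for small $|\xi_1|$) or is driven past $k$ on the opposite side with magnitude $\Theta(\lambda^{-(2-\delta)})\gg k$ (for large $|\xi_1|$), while any intermediate regime is ruled out by combining the small but positive residual from the gradient sum with the $\lambda^{-2}$ prefactor.
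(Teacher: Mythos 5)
For $q>0$ your argument is essentially the paper's proof: the same componentwise reduction via Condition \ref{cond:regularity_conditions}(i), the same induction on the leapfrog index, the same mechanism (the $\lambda^{-2}$ prefactor makes the most recent gradient term $h^2|G_{i-1}|$ dominate $|x_1(i-1)|$, all earlier gradient terms and the noise $h|\xi_1|\le h^{2-\delta}$), and the same exponent recursion --- your $\beta_i=2+q\beta_{i-1}$ is exactly the paper's $2\sum_{j=0}^{i-1}q^j$. The sign-alternation bookkeeping is true in this regime but is not needed: the paper runs the induction purely on magnitudes via the triangle inequality (two-sided bounds $\Theta(h^{2\sum_{j<i}q^j})\le |x_1(i)|\le\Theta(h^{q^i+2\sum_{j<i}q^j})$), and the strict exponent gap $2q^{\ell-1}>q^{\ell-1}$ for $q>0$ is what makes the last gradient term win. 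So up to this point the proposal is correct and matches the paper.

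The genuine gap is your treatment of $q=0$ with $L\ge 2$, and the dichotomy you sketch does not close it. Take a density with $\frac{d}{dx_1}\log\pi_1(x_1)=-\mathrm{sign}(x_1)$ exactly for $|x_1|\ge 1$ (a $C^1$ smoothing of the Laplace density near the origin satisfies Condition \ref{cond:regularity_conditions} with $q=0$). With $L=2$, $\sigma=1$, $h=\lambda^{-1}$, $x_1(0)=k+1$ and $\xi_1=-(k+1)/(2h)$ (admissible since $|\xi_1|\le h^{1-\delta}$ for small $\lambda$), one gets $G_0=-1$, $x_1(1)\approx -h^2/2$ so $G_1=+1$, hence $G_0+G_1=0$ \emph{exactly} and $x_1(2)=x_1(0)+h^2(G_0+G_1)+2h\xi_1=0$. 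There is no ``small but positive residual'' to be amplified by $\lambda^{-2}$, and the intermediate noise regime $\xi_1\approx -x_1(0)/(L h)$ is precisely where the conclusion fails; your case analysis only covers ``small'' and ``large'' $|\xi_1|$. Note also that whether cancellation occurs depends on the trajectory, hence on $(\xi_1,x_1(0))$, so the dichotomy cannot be set up uniformly in the way you describe. To be fair, the paper's own induction also silently requires $q>0$ (at $q=0$ the exponent inequality $q^{\ell-1}+2\sum_{j=0}^{\ell-2}q^j\le 2\sum_{j=0}^{\ell-1}q^j$ becomes an equality, so the subtraction of two $\Theta(h^2)$ quantities no longer yields a positive leading term), but you should either restrict to $q>0$ or acknowledge that the $q=0$, $L\ge 2$ case remains open rather than present the sketch as a proof.
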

\comment{
\begin{proof}
First note that when $i\geq1$, \eqref{eq:transition} implies
\begin{equation}\label{eq:leap_1d}
x_{1}(i+1)=x_{1}(i)+h\xi_{1}+\frac{h^{2}}{2}\partial_{1}(0)+h^{2}\sum_{j=1}^{i}\partial_{1}(j),    
\end{equation}
where $\partial_1(x)$ stands for $\frac{\partial}{\partial x_{1}}\log\pi(x)$.
From \eqref{eq:leap_1d} we can infer that
\begin{multline}\label{eq:double_ineq_leapfrog}
|x_{1}(i)|+h|\xi_{1}|+h^{2}\sum_{j=0}^{i}|\partial_{1}(j)|
\geq\\
|x_{1}(i+1)|\geq h^{2}|\partial_{1}(i)|-|x_{1}(i)|-h^{2}\sum_{j=0}^{i-1}|\partial_{1}(j)|-h|\xi_{1}|.    
\end{multline}
We now consider the limit $h \uparrow \infty$, where $x_1(0)$ and $\xi_1$ can depend on $h$ subject to the stated constraints in the lemma's statement, and isolate the leading order term on each side of the inequality.
In doing so, we will exploit that, for any $i$, $|x_{1}(i)|\geq\Theta(f_{i}(h))$ implies
$|\partial_{1}(i)|\geq\Theta(f_{i}(h)^{q})$ by Condition \ref{cond:regularity_conditions}(ii). 

Considering $i=0$, we have $x_{1}(1)=x_{1}(0)+h\xi_{1}+(h^{2}/2)\partial_{1}(0),$ meaning
\[
\frac{h^{2}}{2}|\partial_{1}(0)|+|x_{1}(0)|+h|\xi_{1}|\geq|x_{1}(1)|\geq\frac{h^{2}}{2}|\partial_{1}(0)|-|x_{1}(0)|-h|\xi_{1}|
\]
Since $|x_{1}(0)|\in[k,k+h)$, we have $|x_{1}(0)| \leq \Theta(h)$
and thus $|\partial_{1}(0)|\leq \Theta(h^{q})$ as $h\uparrow\infty$.
Also, provided $k$ is chosen large enough that $|\partial_{1}(0)|\geq\epsilon$
whenever $|x_{1}(0)|>k$ for some fixed $\epsilon > 0$ (which
is possible by (\ref{eq:condition_1})), we have $|\partial_{1}(0)|\geq\Theta(1)$ and thus $\Theta(h^{q})\geq|\partial_{1}(0)|\geq\Theta(1)$.
Combining with the assumption that $|\xi_{1}|=o(h)$, then this implies
\begin{align*}
\Theta(h^{2+q}) & \geq|x_{1}(1)|\geq\Theta(h^{2}),\\
\Theta(h^{q(2+q)}) & \geq|\partial_{1}(1)|\geq\Theta(h^{2q}).
\end{align*}
From this we conclude that $|\partial_{1}(1)|/|\partial_{1}(0)|\uparrow\infty$ as $h\uparrow\infty$.

Considering $i=1$ in \eqref{eq:double_ineq_leapfrog}
gives 
\begin{multline*}
|x_{1}(1)|+h|\xi_{1}|+h^2|\partial_{1}(0)|+h^{2}|\partial_{1}(1)|\geq\\
|x_{1}(2)|\geq h^{2}|\partial_{1}(1)|-|x_{1}(1)|-h^{2}|\partial_{1}(0)|-h|\xi_{1}|.    
\end{multline*}
Here we note that $\Theta(h^{2+q(2+q)})\geq\Theta(h^{2}|\partial_{1}(1)|)\geq\Theta(h^{2+2q})$, $|x_{1}(1)|\leq\Theta(h^{2+q})$, $h|\xi_{1}|=o(h^{2})$
and $h^{2}|\partial_{1}(0)|\leq\Theta(h^{2+q})$ as $h\uparrow\infty$.
It follows
\begin{align*}
\Theta(h^{2+q(2+q)}) & \geq|x_{1}(2)|\geq\Theta(h^{2+2q}),\\
\Theta(h^{q(2+q(2+q))}) & \geq|\partial_{1}(2)|\geq\Theta(h^{q(2+2q)}).
\end{align*}
Note also that $q(2+q(2+q))=2q+2q^{2}+q^{3}$. Since $q(2+2q)>q(2+q)$
then we conclude that $|\partial_{1}(2)|/|\partial_{1}(1)|\uparrow\infty$ as $h\uparrow\infty$.

Continuing in this fashion gives
\begin{align*}
\Theta(h^{2+2\sum_{j=1}^{i-1}q^{j}+q^{i}}) & \geq|x_{1}(i)|\geq\Theta(h^{2+2\sum_{j=1}^{i-1}q^{j}}),\\
\Theta(h^{q(2+2\sum_{j=1}^{i-1}q^{j}+q^{i})}) & \geq|\partial_{1}(i)|\geq\Theta(h^{q(2+2\sum_{j=1}^{i-1}q^{j})})\,,
\end{align*}
as $h\uparrow\infty$ for all $i=3,\dots,L$.
The above inequalities follows from the fact that 
the leading term in the relevant
inequality for $|x_{1}(i+1)|$ will be $h^{2}|\partial_{1}(i)|$ rather
than $|x_{1}(i)|$. This will be true if $|x_{1}(i)|=\Theta(f_{i}(h))$, where $f_{i}(h)=o(h^{2}f_{i}(h)^{q})$.
When $f_{i}(h):=h^{2+2\sum_{j=1}^{i-1}q^{j}}$
(meaning the lower bound)\todo{Don't understand this}, then this is equivalent to establishing
that
\[
2+q(2+2\sum_{j=1}^{i-1}q^{j})>2+2\sum_{j=1}^{i-1}q^{j}.
\]
Substracting 2 from both sides, and adjusting the indexing on the
left-hand side gives
\[
2q+2\sum_{j=2}^{i}q^{j}>2\sum_{j=1}^{i-1}q^{j},
\]
which, upon noting that $2\sum_{j=1}^{i-1}q^{j}-2\sum_{j=2}^{i}q^{j}=2(q-q^{i})$,
becomes
\[
0>-q^{i-1}.
\]
Since $q>0$ then this condition will hold for any $i$. When $f_{i}(h):=h^{2+2\sum_{j=1}^{i-1}q^{j}+q^{i}}$,
then the same calculations lead to $0>-(q^{i}+q^{i+1})$, for which
the same conclusion holds. 
Using this we can conclude that for $h$
large enough $|x_{1}(L)|>...>|x_{1}(0)|$ directly, establishing the
result.
\end{proof}
}

\begin{proof}
Recall that, for each $i\geq 1$, $x_{1}(i)$ is implicitly a function of the starting location $x_1(0)$, the parameter $\lambda$ and the noise $\xi_1$.
For notational convenience, in the following we set $h=\lambda^{-1}$ and study the limit $h\uparrow\infty$.
In order to prove the thesis it is sufficient to show that for fixed, sufficiently large $k>0$ we have
\begin{equation}\label{eq:inf_diverging}
\inf_{\xi_1,x_1(0)}
|x_{1}(L)|
\to\infty \qquad \hbox{ as }h\uparrow\infty\,,    
\end{equation}
where $\xi_1$ and $x_1(0)$ in the infimum are restricted as in the lemma's statement, i.e.\
$\xi_1\in (-h^{1-\delta}, h^{1-\delta})$ and 
$x_1(0)\in (-(k+h),-k]\cup [k,k+h)$.
In order to prove \eqref{eq:inf_diverging} we will show that for all $i\geq 1$, as $h\uparrow\infty$ we have
\begin{align}
\Theta(h^{2\sum_{j=0}^{i-1}q^{j}})
& \leq
\inf|x_{1}(i)|
\leq
\sup|x_{1}(i)|
\leq
\Theta(h^{q^{i}+2\sum_{j=0}^{i-1}q^{j}}) 
,\label{eq:ineq_for_x}
\\
\Theta(h^{2\sum_{j=1}^{i}q^{j}})
&\leq
\inf|\partial_{1}(i)|
\leq
\sup|\partial_{1}(i)|
\leq
\Theta(h^{q^{i+1}+2\sum_{j=1}^{i}q^{j}})\,,\label{eq:ineq_for_grad}
\end{align}
where infima and suprema run over $\xi_1\in (-h^{1-\delta}, h^{1-\delta})$ and
$x_1(0)\in (-(k+h),-k]\cup [k,k+h)$ as in \eqref{eq:inf_diverging}, 
and $\partial_1(i)$ stands for $\partial\log\pi_1/\partial x_1(x_{1}(i))$.
Note that, for any $i\geq 1$, \eqref{eq:ineq_for_grad} is implied by \eqref{eq:ineq_for_x} thanks to
$\inf|x_{1}(1)|\to\infty$ as $h\to\infty$ and \eqref{eq:condition_1}.
Thus it suffices to prove that \eqref{eq:ineq_for_x} holds for all $i\geq 1$, which we will do by induction over $i$.

In the following, $k$ is chosen large enough that 
$c|x_1|^q\leq |\partial\log\pi_1/\partial x_1(x_1)|\leq C |x_1|^q$ for some $0<c\leq C<\infty$ and all $|x_1|>k$, which can be done by \eqref{eq:condition_1}.
Also, unless otherwise stated, we assume $\xi_1\in (-h^{1-\delta}, h^{1-\delta})$ and 
$x_1(0)\in (-(k+h),-k]\cup [k,k+h)$, and all infima and suprema are taken over those sets.

Considering $i=1$, we have $x_{1}(1)=x_{1}(0)+h\xi_{1}+(h^{2}/2)\partial_{1}(0)$, which implies
\[
\frac{h^{2}}{2}|\partial_{1}(0)|-|x_{1}(0)|-h|\xi_{1}|
\leq
|x_{1}(1)|
\leq
\frac{h^{2}}{2}|\partial_{1}(0)|+|x_{1}(0)|+h|\xi_{1}|\,.
\]
Then, since $|\xi_1|\in(0, h^{1-q})$, $|x_1(0)|\in[k, k+h)$ and $ck^q\leq c|x_1(0)|^q\leq |\partial_1(0)| \leq C|x_1(0)|^q\leq C(h+k)^q$,
we have
\[
\Theta(h^{2}) 
=
\frac{h^{2}}{2}ck^q-(k+h)-h^{2-\delta}
\leq
|x_{1}(1)|
\leq
\frac{h^{2}}{2}C(h+k)^q+(k+h)+h^{2-q}
=\Theta(h^{2+q}) 
\]
meaning that \eqref{eq:ineq_for_x} is satisfied for $i=1$.

We then show that if \eqref{eq:ineq_for_x} and \eqref{eq:ineq_for_grad} hold for $i=1,\dots,\ell-1$, where $\ell\geq 2$, then they also hold for $i=\ell$.
First note that when $\ell\geq 2$, \eqref{eq:transition} implies
\begin{equation}\label{eq:leap_1d}
x_{1}(\ell)=x_{1}(\ell-1)+h\xi_{1}+\frac{h^{2}}{2}\partial_{1}(0)+h^{2}\sum_{j=1}^{\ell-1}\partial_{1}(j)\,.
\end{equation}
From \eqref{eq:leap_1d} and $|\xi_1|\in(0, h^{1-q})$, we can deduce that
\begin{equation}\label{eq:double_ineq_leapfrog}
h^{2}|\partial_{1}(\ell-1)|-|x_{1}(\ell-1)|-h^{2}\sum_{j=0}^{\ell-2}|\partial_{1}(j)|
\leq
|x_{1}(\ell)|
\leq
|x_{1}(\ell-1)|+h^{2-q}+h^{2}\sum_{j=0}^{\ell-1}|\partial_{1}(j)|.    
\end{equation}
Combining the lower bound in \eqref{eq:double_ineq_leapfrog} with \eqref{eq:ineq_for_x} and \eqref{eq:ineq_for_grad} for $i=1,\dots,\ell-1$ we obtain
\begin{align*}
\inf|x_{1}(\ell)|&\geq
\inf h^{2}|\partial_{1}(\ell-1)|-
\sup
\left(|x_{1}(\ell-1)|+h^{2}\sum_{j=0}^{\ell-2}|\partial_{1}(j)|\right)
\\&\geq
\Theta(h^{2\sum_{j=0}^{\ell-1}q^{j}})
-\Theta(h^{q^{\ell-1}+2\sum_{j=0}^{\ell-2}q^{j}})
=\Theta(h^{2\sum_{j=0}^{\ell-1}q^{j}})\,,
\end{align*}
where the last equality follows from 
$q^{\ell-1}+2\sum_{j=0}^{\ell-2}q^{j}\leq  2\sum_{j=0}^{\ell-1}q^{j}$.
Thus the lower bound in \eqref{eq:ineq_for_x} holds also for $i=\ell$.
Similarly, combining the upper bound in \eqref{eq:double_ineq_leapfrog} with \eqref{eq:ineq_for_x} and \eqref{eq:ineq_for_grad} for $i=1,\dots,\ell-1$ we obtain
\begin{align*}
\sup|x_{1}(\ell)|
&\leq
\sup 
\left(|x_{1}(\ell-1)|+h^{2-q}+h^{2}\sum_{j=0}^{\ell-1}|\partial_{1}(j)|\right)
\\
&\leq
\Theta(h^{q^{\ell-1}+2\sum_{j=0}^{\ell-2}q^{j}}
+h^{2-q}+
h^{q^{i}+2\sum_{j=0}^{\ell-1}q^{j}})
=\Theta(h^{q^{i}+2\sum_{j=0}^{\ell-1}q^{j}})\,,
\end{align*}
where the last equality follows from 
$2-q\leq q^{\ell-1}+2\sum_{j=0}^{\ell-2}q^{j}\leq q^i+ 2\sum_{j=0}^{\ell-1}q^{j}$.
Thus the upper bound in \eqref{eq:ineq_for_x} holds also for $i=\ell$ and the proof is complete.
\end{proof}

\begin{lemma}
\label{lem:tail_bound}Condition \ref{cond:regularity_conditions}
(ii) implies that there exist $t$, $c$ and $C$ in $(0,\infty)$ such that
\begin{equation}
\pi_1(x_1) \leq Ce^{-c|x_1|^{1+q}}\,,\qquad \hbox{for all }|x_1| \geq t\,.
\end{equation}
\end{lemma}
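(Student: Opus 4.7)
The plan is to derive the tail bound by integrating the log-derivative. Condition \ref{cond:regularity_conditions}(ii) guarantees constants $c_1 > 0$ and $t > 0$ such that $|\frac{d}{dx_1}\log\pi_1(x_1)| \geq c_1 |x_1|^q$ for all $|x_1| \geq t$. Since $\log\pi_1 \in C^1$ and its derivative is never zero on the set $\{|x_1| \geq t\}$, continuity forces $\frac{d}{dx_1}\log\pi_1$ to have constant sign on each half-line $(t,\infty)$ and $(-\infty,-t)$.

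The first substantive step is to pin down these signs. I would argue by contradiction: if $\frac{d}{dx_1}\log\pi_1 \geq c_1 x_1^q$ on $(t,\infty)$, then integrating from $t$ to $x_1$ gives
\[
\log\pi_1(x_1) \geq \log\pi_1(t) + \frac{c_1}{1+q}\bigl(x_1^{1+q} - t^{1+q}\bigr),
\]
which grows super-polynomially and contradicts $\int_{\R}\pi_1 < \infty$. Hence the derivative must be negative on $(t,\infty)$, and by the same argument it must be positive on $(-\infty,-t)$.

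Once the signs are fixed, the proof reduces to a one-line integration. For $x_1 > t$,
\[
\log\pi_1(x_1) - \log\pi_1(t) = \int_t^{x_1}\frac{d}{ds}\log\pi_1(s)\,ds \leq -\frac{c_1}{1+q}\bigl(x_1^{1+q} - t^{1+q}\bigr),
\]
yielding $\pi_1(x_1) \leq C_+ e^{-c|x_1|^{1+q}}$ with $c := c_1/(1+q)$ and $C_+ := \pi_1(t)\exp(c\,t^{1+q})$. The symmetric argument for $x_1 < -t$ produces a corresponding constant $C_-$, and setting $C := \max(C_+, C_-)$ delivers the claim.

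The only genuinely delicate point is the sign argument in the first paragraph; everything else is a direct application of the fundamental theorem of calculus. I would not expect any technical obstacle here, since the integrability of $\pi_1$ together with the pointwise lower bound on $|\frac{d}{dx_1}\log\pi_1|$ already rules out any scenario other than the one stated.
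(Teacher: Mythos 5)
Your proof is correct and follows essentially the same route as the paper's: use continuity of $\tfrac{d}{dx_1}\log\pi_1$ to fix the sign of the derivative on each half-line, rule out the growing sign via integrability of $\pi_1$, and integrate the bound $|\tfrac{d}{dx_1}\log\pi_1(x_1)|\geq c_1|x_1|^q$ to obtain the $e^{-c|x_1|^{1+q}}$ tail. Your write-up is if anything slightly more careful with the constants and the contradiction argument, but there is no substantive difference.
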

\begin{proof}
Condition \ref{cond:regularity_conditions} implies that there exists $t,c \in (0,\infty)$ such that 
\[
\left| \frac{d}{dx_1}\log\pi_1(x_1) \right| \geq c|x_1|^{1+q}
\,,\qquad \hbox{for all }|x_1| \geq t\,.
\]
Since $\log\pi_{1}\in C_{1}(\R)$, the above implies that either 
\[
\frac{d}{dx_{1}}\log\pi_{1}(x_{1})> c x_1^{1+q} \quad\text{or}\quad\frac{d}{dx_{1}}\log\pi_{1}(x_{1})<- c x_1^{1+q}\,,
\]
holds for all $|x_1| \geq t$.
Since $\int \pi_1(x_1) dx_1 = 1$ the latter option must be true.
Computing the anti-derivative gives
\[
\log\pi_1(x_1) \leq -c x_1^{1+q} + \log C,
\]
for some constant $\log C$. An analogous argument can be used in the case $x_{1}\downarrow-\infty$,
and the two combined give the result.
\end{proof}

\begin{lemma}\label{lemma:density_tails}
If Condition \ref{cond:regularity_conditions}
holds and $X_1 \sim \pi_1(\cdot)$, then there exists $\gamma \in(0,\infty)$ such that
\[
\mathbb{P}(|X_1|>k+h) \leq \Theta \left( e^{-\gamma h^{1+q} -q\log(h)} \right)\qquad \hbox{ as }h\to\infty.
\]
\end{lemma}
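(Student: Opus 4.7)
The plan is to reduce this to a straightforward tail integral using the pointwise bound already established in Lemma \ref{lem:tail_bound}. That lemma gives constants $t, c, C \in (0,\infty)$ such that $\pi_1(x_1) \leq C e^{-c|x_1|^{1+q}}$ whenever $|x_1| \geq t$. For all $h$ sufficiently large (specifically $h > t - k$) the region of integration $\{|x_1| > k+h\}$ lies in the regime where this bound applies, so I can write
\[
\mathbb{P}(|X_1| > k+h) \;=\; \int_{|x_1| > k+h} \pi_1(x_1)\, dx_1 \;\leq\; 2C \int_{k+h}^{\infty} e^{-c x^{1+q}}\, dx.
\]

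The core step is then to estimate $\int_A^\infty e^{-c x^{1+q}} dx$ as $A \to \infty$. I would do this via a single integration by parts, writing $e^{-c x^{1+q}} = \frac{1}{c(1+q)x^q}\cdot \frac{d}{dx}\bigl(-e^{-c x^{1+q}}\bigr)$ and integrating. The boundary term contributes $\frac{e^{-cA^{1+q}}}{c(1+q) A^q}$, while the remainder integral is manifestly nonnegative and can be dropped to yield the clean upper bound
\[
\int_A^{\infty} e^{-c x^{1+q}}\, dx \;\leq\; \frac{e^{-c A^{1+q}}}{c(1+q) A^q}.
\]

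Applying this with $A = k+h$ and using the elementary inequalities $(k+h)^{1+q} \geq h^{1+q}$ and $(k+h)^q \geq h^q$ (valid for $k,h>0$), I obtain
\[
\mathbb{P}(|X_1| > k+h) \;\leq\; \frac{2C}{c(1+q)}\cdot\frac{e^{-c h^{1+q}}}{h^q} \;=\; \Theta\!\left(e^{-c h^{1+q} - q\log h}\right)
\quad\text{as }h\to\infty,
\]
which gives the claim with $\gamma = c$. There is no real obstacle here: all the work is in Lemma \ref{lem:tail_bound}, and the remaining estimate is a standard Laplace-type bound that falls out of one integration by parts. The only mild care required is tracking that the $k$ in $k+h$ only improves the constants asymptotically, so that dropping it in the exponent still yields a valid upper bound of the stated order.
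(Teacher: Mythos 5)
Your proof is correct, and it reaches the same bound as the paper by a genuinely more elementary route. The paper also starts from Lemma \ref{lem:tail_bound} and reduces the problem to estimating $\int_{k+h}^{\infty} e^{-cx^{1+q}}\,dx$, but it then substitutes to express this as an incomplete Gamma function $\Gamma\bigl(\tfrac{1}{1+q}, c(k+h)^{1+q}\bigr)$, invokes Gautschi's upper bound for $\Gamma(a,x)$ with $a\in(0,1)$, and finally Taylor-expands the resulting expression to extract the leading $\Theta(h^{-q})$ prefactor; it also has to treat $q=0$ as a separate case since Gautschi's bound requires $a<1$. Your single integration by parts, dropping the nonnegative remainder integral, delivers the bound $\int_A^\infty e^{-cx^{1+q}}dx \leq e^{-cA^{1+q}}/\bigl(c(1+q)A^q\bigr)$ directly, handles $q=0$ uniformly (the remainder term simply vanishes), and avoids the special-function machinery entirely. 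Both arguments yield $\gamma=c$ and the same $-q\log(h)$ correction in the exponent, so nothing is lost; your version is shorter and self-contained, while the paper's has the minor advantage of quoting a named, citable inequality rather than re-deriving the estimate.
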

\begin{proof}
Using Lemma \ref{lem:tail_bound}, provided $k+h>t$ we have 
\begin{align*}
\mathbb{P}(|X_1|>k+h)= & \int_{k+h}^{\infty}\pi_{1}(x_{1})dx_{1}+\int_{-\infty}^{-(k+h)}\pi_{1}(x_{1})dx_{1}\\
\leq & 2C\int_{k+h}^{\infty}e^{-cx_{1}^{1+q}}dx_{1} \\
 = &2C \frac{c^{-1/(q+1)}}{q+1}\Gamma \left( \frac{1}{1+q}, c(k+h)^{1+q} \right),
\end{align*}
where $\Gamma (a,b) := \int_b^\infty u^{a-1}e^{-u}du$ is the incomplete Gamma function. 
For the case $q>0$ the upper bound of \cite{gautschi1959some}, which is described on pages 771-772 of  \cite{alzer1997some}, states that for fixed $a\in (0,1)$ and $x>0$ we have
\[
\Gamma (a,x^{-a}) \leq e^{-x^{-a}} \frac{c_a}{a} \left( (x^{-a}+c_a^{-1})^a - x \right),
\]
where $c_a := \Gamma(1+a)^{1/(1-a)}$.  Setting $C_2 := 2C c^{-1/(q+1)}/(q+1)$, $a := 1/(1+q)$ and using this upper bound gives
\[
\mathbb{P}(|X_1|>k+h) \leq e^{-c(k+h)^{1+q} }  C_2 \frac{c_a}{a} \left[ \left(c(k+h)^{\frac{1}{a}}+c_a^{-1} \right)^{a} - c^a(k+h) \right].
\]
We use a Taylor series expansion of $f(c(k+h)^{1/a} + c_a^{-1})$ about $f(c(k+h)^{1/a})$, where $f(x) = x^a$. The terms each have a different power of $h$.  This gives
\[
(c(k+h)^{\frac{1}{a}}+c_a^{-1})^{a} = c^a(k+h) + c_a^{-1} a (c(k+h)^{\frac{1}{a}})^{a-1} + O(h^{(a-2)/a})
\]
Since $a = 1/(1+q) < 1$ then $(a - 1)/a = -q$ and $(a-2)/a = -(1+2q)$, and therefore
\[
(c(k+h)^{\frac{1}{a}}+c_a^{-1})^{a} - c^a(k+h) = \Theta((c(k+h)^{\frac{1}{a}})^{a-1}) = \Theta(h^{-q}).
\]
Combining with the above, we can write that for any fixed $k$ and fixed $q > 0$, there exists $ \gamma \in (0,\infty)$ such that as $h\uparrow \infty$ 
\[
\mathbb{P}(|X_1|>k+h) \leq \Theta \left( e^{-\gamma h^{1+q} -q\log(h)} \right).
\]
In the case $q=0$ the integral $\int_{k+h}^\infty e^{-cx_1}dx_1 = e^{-c(k+h)}/c$ and the result is immediate.
\end{proof}

\subsection{Proofs for Section \ref{sec:barker}}

\begin{proof}[Proof of Proposition
\ref{prop:barker_normalising}]
Setting $y-x = z$, then $t(z) = e^{z \nabla \log\pi(x)}$ and $1/t(z) = e^{-z \nabla\log\pi(x)} = t(-z)$, meaning 
\begin{align*}
Z(x) 
&= \int_{\R} \frac{t(z)}{1+t(z)} \mu_{\sigma}(z)dz 
\\
&= \int_0^\infty \left( \frac{t(z)}{1+t(z)} \mu_{\sigma}(z) + \frac{t(-z)}{1+t(-z)} \mu_{\sigma}(-z) \right) dz.
\end{align*}
Noting that $\mu_{\sigma}(z) = \mu_{\sigma}(-z)$ and $t(-z) = 1/t(z)$ then gives
\[
Z(x) = \int_0^\infty \mu_{\sigma}(z)dz = \frac{1}{2}
\]
which completes the proof.
\end{proof}

\begin{proof}[Proof of Proposition \ref{prop:barker_sample}]
Assume $y = x + b(x,z) \times z$ is generated using Algorithm \ref{alg:barker_1d}.  Then for any $A \in \mathcal{B}(\R)$
\begin{align*}
\mathbb{P}[y \in A] 
&= 
\mathbb{P} \left[ \{z \in A - x \} \cap \{b(x,z) = 1\} \right] + \mathbb{P} \left[ \{-z \in A - x \} \cap \{b(x,z) = -1\} \right].
\end{align*}
Note that the second term on the right-hand side can be re-written 
\[
\mathbb{P} \left[ \{z \in A - x \} \cap \{b(x,-z) = -1\}\right],
\]
owing to the symmetry of $\mu_{\sigma}$. Because of this, we can write
\begin{align*}
\mathbb{P}[y \in A] &= \int_{A-x} \frac{e^{z\nabla\log\pi(x)}}{1+ e^{z\nabla\log\pi(x)}}\mu_{\sigma}(z)dz + \int_{A-x} \frac{1}{1+ e^{-z\nabla\log\pi(x)}}\mu_{\sigma}(z)dz \\
&= 2 \int_{A-x} \frac{e^{z\nabla\log\pi(x)}}{1+ e^{z\nabla\log\pi(x)}}\mu_{\sigma}(z)dz \\
&= Q^B(x,A)\,
\end{align*}
which completes the proof.
\end{proof}

\begin{proof}[Proof of Proposition \ref{prop:barkermod}]

We establish a point-wise bound on the candidate transition densities of the two algorithms.  Combining this with Lemma \ref{lemma:lowerbound} gives an equivalent bound on the spectral gaps.  To reach this point-wise bound, first note that the candidate transition density associated with the Random Walk algorithm is $q^R(x,x+z) = \mu_\sigma(z)$ for any $x,z\in \mathbb{R}^d$. Now, for the modified Barker proposal, the candidate density can be written
\begin{align*}
\check{q}^B(x,x+z) 
&= 
\mu_\sigma(z)\check{p}(x,z) + \mu_\sigma(-z)(1-\check{p}(x,-z)) 
\\
&= 
\mu_\sigma(z) \left( \check{p}(x,z) - \check{p}(x,-z) + 1 \right) \\
&=
2 \check{p}(x,z)\mu_\sigma(z),
\end{align*}
where on the last line we have used that $\check{p}(x,-z) = 1-\check{p}(x,z)$. Noting that $\tilde{p}(x,z) \leq 1$ establishes that $q^R(x,x+z) \geq \check{q}^B(x,x+z)/2$ for any $x,z\in\R^d$, and upon combining this with Lemma \ref{lemma:lowerbound} the result follows.
\end{proof}

\subsection{Proofs for Section \ref{sec:barker_proofs}}
Interestingly, the proof of the lower bound of Theorem \ref{thm:Barker_lambda_to_0} is analogous to the one of Theorem \ref{thm:RW_lambda_to_0},
providing further insight into the similarity between the Barker scheme and random walk in terms of robustness to scales.
\begin{proof}[Proof of Theorem \ref{thm:Barker_lambda_to_0}]
As in the proof of Theorem \ref{thm:RW_lambda_to_0}, we write $Q_\lambda^B$ to denote the Barker candidate kernel targeting $\pi^{(\lambda)}$, and $\tilde{Q}_\lambda^B(x,dy):=\tilde{q}_\lambda^B(x,y)dy$ to denote the isomorphic kernel defined as $\tilde{Q}_\lambda^B=\phi\circ Q_\lambda^B$, where $\phi$ is the same function used in the proof of Theorem \ref{thm:RW_lambda_to_0}. 
Also, we denote by $P^B_\lambda$ and $\tilde{P}_\lambda^B$ the Metropolis-Hastings kernels with candidate kernels $Q^B_\lambda$ and $\tilde{Q}_\lambda^B$, respectively, and target distributions $\pi^{(\lambda)}$ and $\pi$, respectively.

From \eqref{eq:barker_prop} and \eqref{eq:barker_multi}
it follows that
\begin{equation}\label{eq:Barker_prop_het}
\tilde{q}_\lambda^B(x,y)=
2^d\;\frac{\lambda}{\sigma^d}\mu\left( \frac{\delta_\lambda}{\sigma} \right)
\prod_{i=1}^d \big(1+e^{-\partial_i\log\pi(x)(y_i-x_i)}\big)^{-1}\,.
\end{equation}
Here we are using $\mu$ to denote the $d$-dimensional distribution obtained by proposing each coordinate independently as in Section \ref{sec:Barker_multi_d}.
We therefore have
\begin{equation}\label{eq:barker_ratio_proposals}
\frac{\tilde{q}_\lambda^B(x,y)}{\tilde{q}_1^B(x,y)}=
\lambda\frac{\mu( \delta_\lambda/\sigma )}{\mu(\delta/\sigma)}\,,
\end{equation}
which holds after noting that $(1+e^{-\partial_i\log\pi(x)(y_i-x_i)})$ 
does not depend on $\lambda$, and hence cancels in the ratio.  Note that the expression above coincides with the expression for the random walk proposals in \eqref{eq:RW_ratio_proposals}.
Thus, arguing as in the proof of Theorem \ref{thm:RW_lambda_to_0}, we have that $\tilde{q}^B_\lambda(x,y) \geq \lambda \tilde{q}^B(x,y)$ for all $\lambda \leq \lambda_0$ and all $x,y \in \R^d$, where $\lambda_0 \leq 1$ is the value defined in Condition \ref{cond:rwm}.
Combining the latter inequality with Lemma \ref{lemma:lowerbound} and using the isomorphism property between $\tilde{P}_\lambda^B$ and $P^B_\lambda$ given in Lemmas \ref{lemma:isomorphic_MC} and \ref{lemma:MH_isomorphic}, we obtain
$$
\text{Gap}(P_\lambda^B) \geq \lambda \text{Gap}(P^B) = \Theta(\lambda) \qquad \hbox{as }\lambda\downarrow 0\,.
$$
To show that $\text{Gap}(P_\lambda^B) \leq \Theta(\lambda)$, note that 
$\tilde{q}^B_\lambda(x,y)\leq 2^{d}\tilde{q}^R_\lambda(x,y)$ for all $x,y\in\R^d$ by \eqref{eq:Barker_prop_het} and \eqref{eq:rwm}.
Thus, Lemma \ref{lemma:lowerbound} and Theorem \ref{thm:RW_lambda_to_0} give $\text{Gap}(P_\lambda^B)\leq 2^d \text{Gap}(P_\lambda^R)= \Theta(\lambda)$ as $\lambda\downarrow 0$.
\end{proof}

\subsubsection{Proof of Theorem \ref{thm:barker_ergodicity_multi_d}}
The following lemma, which is an extension of Theorem 4.1 of \citep{roberts1996exponential}, provides generic sufficient conditions for the geometric ergodicity of Metropolis--Hastings algorithms.

\begin{lemma}
\label{lemma:M_H_geom_erg_short}
Let $P$ be a $\phi$-irreducible and aperiodic 
Metropolis--Hastings kernel on $\R^d$ with proposal $Q$ such that compact sets are small under $P$.
If there exist a function $V:\R^d\to (0,\infty)$ such that
$\sup_{x\in\R^d} \dfrac{QV(x)}{V(x)}<\infty$
and
\begin{equation}\label{eq:condition_for_ergodicity}
\liminf_{\| x \|\rightarrow +\infty} \, \int_{R^d} q(x,y)\alpha(x,y)\, dy > \limsup_{\| x\|\to\infty} \dfrac{QV(x)}{V(x)}\,,
\end{equation}
then $P$ is $\pi$-a.e.\ geometrically ergodic.
\end{lemma}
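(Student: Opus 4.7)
The plan is to deduce geometric ergodicity from a standard geometric drift condition of the form $PV(x) \leq \rho V(x) + b\,\mathbf{1}_C(x)$ with $\rho\in(0,1)$, $b<\infty$ and $C$ a small set, invoking the $\phi$-irreducibility and aperiodicity of $P$ together with the classical drift-to-a-small-set theorem (e.g.\ Theorem 15.0.1 of Meyn and Tweedie).

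The key computation is to exploit the Metropolis--Hastings decomposition
\[
P(x,dy) = \alpha(x,y)Q(x,dy) + (1-a(x))\delta_x(dy), \qquad a(x) := \int q(x,y)\alpha(x,y)\,dy,
\]
together with the trivial bound $\alpha(x,y) \leq 1$ to obtain
\[
PV(x) = \int \alpha(x,y) q(x,y) V(y)\, dy + (1-a(x))V(x) \leq QV(x) + (1-a(x))V(x),
\]
and hence, dividing by $V(x) > 0$,
\[
\frac{PV(x)}{V(x)} \leq \frac{QV(x)}{V(x)} + 1 - a(x).
\]
Hypothesis \eqref{eq:condition_for_ergodicity} then guarantees that the right-hand side is bounded strictly below $1$ for $\|x\|$ large: choosing $\epsilon>0$ smaller than the gap between the liminf and the limsup, I can find $R<\infty$ such that $QV(x)/V(x) + 1 - a(x) \leq 1 - \epsilon =: \rho$ whenever $\|x\| > R$, which yields $PV(x) \leq \rho V(x)$ outside the compact ball $C := \{x : \|x\| \leq R\}$.

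Inside $C$, the global hypothesis $M := \sup_x QV(x)/V(x) < \infty$ gives $PV(x) \leq (M+1)V(x)$, so with $V$ bounded on the compact $C$ the drift inequality holds there with $b := (M+1-\rho)\sup_{x\in C} V(x) < \infty$. Since compact sets are small by assumption, the drift-to-a-small-set theorem then delivers geometric ergodicity, and the $\pi$-a.e.\ qualifier in the conclusion comes from the fact that the convergence bound \eqref{eq:geom_erg_defi} is informative precisely where $V$ is finite. The only mildly subtle point is the local boundedness of $V$ on compact sets, an implicit regularity condition on the Lyapunov function rather than a real obstacle; for the specific $V$ used to establish Theorem \ref{thm:barker_ergodicity_multi_d} this will be immediate.
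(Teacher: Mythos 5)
Your proof is correct and follows essentially the same route as the paper's: the central inequality $PV(x)/V(x)\leq QV(x)/V(x)+1-\int q(x,y)\alpha(x,y)\,dy$, obtained from the Metropolis--Hastings decomposition and $\alpha\leq 1$, is exactly the one the paper derives, and the hypotheses are then used in the same way to control the behaviour at infinity and globally. The only cosmetic difference is the last step: the paper stops at the two Foster--Lyapunov conditions $\sup_x PV(x)/V(x)<\infty$ and $\limsup_{\|x\|\to\infty}PV(x)/V(x)<1$ and cites \cite{jarner2000geometric}, whereas you unwind these into the explicit drift inequality $PV\leq\rho V+b\mathbf{1}_C$ and cite Meyn--Tweedie; the local boundedness of $V$ on compacts that you flag is indeed needed on either route (and holds for the $V_s(x)=e^{s\|x\|_\infty}$ used later), so this is not a gap.
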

\begin{proof}
We show that \eqref{eq:condition_for_ergodicity} implies the following Foster-Lyapunov drift conditions:
$$\sup_{x\in\R^d}\frac{PV(x)}{V(x)}<\infty
\hbox{ and } \limsup_{\|x\|\to\infty}\frac{PV(x)}{V(x)}<1\,,
$$
which imply $\pi$-a.e.\ geometric ergodicity (see e.g.\ Theorem 3.1 and Lemma 3.5 of \cite{jarner2000geometric}).
First note that
\begin{align*}
\dfrac{PV(x)}{V(x)} &= 
\int_{\R^d} \left(\frac{V(y)}{V(x)}\alpha(x,y)+1-\alpha(x,y)\right) q(x,y)dy
\\&\leq
\int_{\R^d} \frac{V(y)}{V(x)}q(x,y)dy+\int_{\R^d} \left(1-\alpha(x,y)\right) q(x,y)dy
\leq\frac{QV(x)}{V(x)}+1
\,,
\end{align*}
which implies $\sup_{x\in\R^d}\frac{PV(x)}{V(x)}\leq \sup_{x\in\R^d}\frac{QV(x)}{V(x)}+1< \infty$.
Also, the inequalities above imply
\begin{equation}\label{eq:PV}
\dfrac{PV(x)}{V(x)}\leq
1-\left(
\int_{\R^d}\alpha(x,y) q(x,y)dy-
\frac{QV(x)}{V(x)}
\right)
\,.
\end{equation}
From \eqref{eq:condition_for_ergodicity} we have 
\begin{align}
0&<
\liminf_{\| x \|\rightarrow +\infty} \, \int_{\R^d} q(x,y)\alpha(x,y)\, dy - \limsup_{\| x\|\to\infty} \dfrac{QV(x)}{V(x)}\nonumber\\
&\leq
\liminf_{\| x \|\rightarrow +\infty} \left( \int_{\R^d} q(x,y)\alpha(x,y)\, dy -\dfrac{QV(x)}{V(x)}\right)\,.\label{eq:great_than_0}
\end{align}
Combining \eqref{eq:PV} and \eqref{eq:great_than_0} we obtain $\limsup_{\|x\|\to\infty}\frac{PV(x)}{V(x)}<1$, as desired.
\end{proof}

We will show that the conditions of Lemma \ref{lemma:M_H_geom_erg_short} are satisfied when considering a Lyapunov function $V_s(x)=\exp(s\|x\|_\infty)$ based on the sup norm, $\|x\|_\infty=\sup_i|x_i|$.

In the following results we denote $\sup_{t>0}g(t)$ by $M$.
We denote the log-target and its derivatives as $U(x)=\log\pi(x)$ and $U_i(x)=\frac{\partial}{\partial x_i}U(x)$, respectively.
Condition \ref{assumption:spherical} implies that $\nabla U(x)=f'(\|x\|) \frac{x}{\|x\|}$ and $U_i(x)=f'(\|x\|) \frac{x_i}{\|x\|}$ for $\|x\|>R$.
Also, we denote the kernel $Q^{(g)}$ in \eqref{eq:balanced_multi} as $Q$ for brevity and its density function as
\begin{equation}
\label{eq:proposal_multi_d}
q(x,y)=
\prod_{i=1}^d 
\dfrac{g(e^{w_iU_i(x)}) \mu_\sigma(w)}{Z_i(x)}
=
\prod_{i=1}^d q_i(w_i;x)
\,,
\end{equation}
where $w_i=y_i-x_i$ and $q_i(w_i;x)=g(e^{w_iU_i(x)}) \mu_\sigma(w)/Z_i(x)$.

First, we provide some simple results on the behaviour of $g$, $Z_i$ and $q_i$ that will be useful afterwards.
\begin{lemma}\label{lemma:g_prop}
Let $g:(0,\infty)\to(0,\infty)$ be bounded, non-decreasing and such that $g(t)=tg(1/t)$ for all $t>0$.
Then  $g(t)\geq g(1)\min\{1,t\}$ and $\frac{g(1)}{2}\leq Z_i(x)\leq M$, where $M=\sup_{t>0}g(t)$.
\end{lemma}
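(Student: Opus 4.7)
The plan is to treat the three inequalities separately, each following from either monotonicity, boundedness, or the balancing identity $g(t)=t\,g(1/t)$ combined with the symmetry of $\mu_\sigma$.

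For the pointwise bound $g(t)\geq g(1)\min\{1,t\}$, I would split into two cases. If $t\geq 1$, monotonicity of $g$ gives $g(t)\geq g(1)=g(1)\min\{1,t\}$. If $t<1$, then $1/t>1$, and by monotonicity $g(1/t)\geq g(1)$; combining this with $g(t)=t\,g(1/t)$ yields $g(t)\geq t\,g(1)=g(1)\min\{1,t\}$. This is the only place the balancing identity is needed, and it works precisely because it transfers the problem from the region $t<1$ (where monotonicity goes the wrong way) to the region $t>1$.

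The upper bound $Z_i(x)\leq M$ is immediate: since $g(e^{w_iU_i(x)})\leq M$ pointwise in $w_i$ and $\mu_\sigma$ is a probability density on $\R$, integration gives $Z_i(x)\leq M\int_{\R}\mu_\sigma(w_i)\,dw_i=M$.

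For the lower bound $Z_i(x)\geq g(1)/2$, write $c:=U_i(x)$. If $c=0$, the integrand is constantly $g(1)$ and the bound is trivial. Otherwise, applying the first assertion with $t=e^{w_ic}$ gives $g(e^{w_ic})\geq g(1)$ whenever $w_ic\geq 0$ and $g(e^{w_ic})\geq g(1)e^{w_ic}\geq 0$ otherwise. Discarding the non-negative contribution from $\{w_i:w_ic<0\}$ leaves
\[
Z_i(x)\;\geq\;g(1)\int_{\{w_i:\,w_ic\geq 0\}}\mu_\sigma(w_i)\,dw_i\;=\;\frac{g(1)}{2},
\]
where the last equality uses that $\mu_\sigma$ is symmetric about $0$, so either half-line carries mass $1/2$. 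There is no real obstacle in this lemma; the only subtlety is recognising that one should bound $g$ from below on the half-line where $\mu_\sigma$ accumulates mass and simply throw away the other half.
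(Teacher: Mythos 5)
Your proof is correct and follows essentially the same route as the paper: the same two-case argument via monotonicity and the balancing identity for the pointwise bound, and the same idea of restricting the integral for $Z_i(x)$ to the half-line where the exponent is non-negative and invoking the symmetry of $\mu_\sigma$. The only cosmetic difference is that you handle both signs of $U_i(x)$ uniformly via the set $\{w_i : w_i c \geq 0\}$, whereas the paper treats $U_i(x)\leq 0$ and notes the other case is analogous.
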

\begin{proof}
If $t\geq 1$ then $g(t)\geq g(1)= g(1)\min\{1,t\}$ by the monotonicity of $g$. 
If $t< 1$ then $g(t)=tg(1/t)\geq t g(1)= g(1)\min\{1,t\}$ by $g(t)=tg(1/t)$ and the monotonicity of $g$.
From $Z_i(x)=\int_{\R} g(e^{w_iU_i(x)}) \mu_\sigma(w)dw$ and $g(t)\leq M$ it follows $Z_i(x)\leq M$.
If $U_i(x)\leq 0$, then $g(e^{w_iU_i(x)})\geq g(1)$ for all $w\leq 0$ and thus $Z_i(x)\geq \int_{-\infty}^0 g(1) \mu_\sigma(w)dw=\frac{g(1)}{2}$. The case $U_i(x)\geq 0$ is analogous.
\end{proof}

\begin{lemma}
\label{lemma:q_conv}
If g is bounded and non-decreasing, then $Z_i(x) \to \frac{M}{2}$ as $U_i(x)\rightarrow-\infty$ or $U_i(x)\rightarrow+\infty$ and for all $w_i\in\R$ it holds
\begin{align*}
q_i(w_i;x)&
\rightarrow 2\mu_{\sigma}(w_i)\,\mathbb{I}_{\left(-\infty,0\right]}(w_i)
\qquad\hbox{ as }U_i(x)\rightarrow-\infty\;\hbox{ and}\\
q_i(w_i;x)&
\rightarrow 2\mu_{\sigma}(w_i)\,\mathbb{I}_{\left[0,+\infty\right)}(w_i)
\qquad\hbox{ as }U_i(x)\rightarrow+\infty
\,.
\end{align*}
\end{lemma}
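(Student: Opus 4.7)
The plan is to establish the two claims by first extracting pointwise limits of the integrand $g(e^{w_iU_i(x)})$ in $w_i$ as $U_i(x) \to \pm\infty$, then appeal to dominated convergence for $Z_i$, and finally combine these with the formula $q_i(w_i;x) = g(e^{w_iU_i(x)})\mu_\sigma(w_i)/Z_i(x)$.

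First I would record two elementary facts about $g$. Since $g$ is non-decreasing and bounded, the monotone limit $\lim_{t\to\infty}g(t) = \sup_{t>0}g(t) = M$ exists. Using the functional equation $g(t) = tg(1/t)$ together with $g \le M$, for $t \in (0,1]$ one has $g(t) = tg(1/t) \le Mt$, so $\lim_{t\to 0^+}g(t) = 0$. Hence, for fixed $w_i \neq 0$,
\begin{equation*}
g(e^{w_iU_i(x)}) \longrightarrow M\,\mathbb{I}_{\{w_iU_i(x)\to+\infty\}}
\end{equation*}
in the obvious sense: as $U_i(x)\uparrow+\infty$ the exponent $w_iU_i(x) \to +\infty$ iff $w_i>0$ (and $\to -\infty$ iff $w_i<0$), and symmetrically when $U_i(x)\downarrow-\infty$.

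Next I would apply dominated convergence to $Z_i(x) = \int_{\R} g(e^{w_iU_i(x)})\mu_\sigma(w_i)\,dw_i$, using the uniform bound $g \le M$ and the integrable dominating function $M\mu_\sigma$. Taking $U_i(x) \to +\infty$, the integrand converges pointwise to $M\mathbb{I}_{(0,\infty)}(w_i)\mu_\sigma(w_i)$, and symmetry of $\mu_\sigma$ gives $Z_i(x) \to \int_0^\infty M\mu_\sigma(w_i)\,dw_i = M/2$; the case $U_i(x) \to -\infty$ is identical with $(0,\infty)$ replaced by $(-\infty,0)$.

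Finally, for fixed $w_i \neq 0$ the pointwise limit of the numerator $g(e^{w_iU_i(x)})\mu_\sigma(w_i)$ combined with $Z_i(x) \to M/2$ yields the claimed limit of $q_i(w_i;x)$ on $\R\setminus\{0\}$: in the case $U_i(x)\uparrow+\infty$ one gets $2\mu_\sigma(w_i)$ for $w_i>0$ and $0$ for $w_i<0$, matching $2\mu_\sigma(w_i)\mathbb{I}_{[0,+\infty)}(w_i)$, and symmetrically for $U_i(x)\downarrow-\infty$. The lone point $w_i=0$ is harmless because every subsequent use of this lemma integrates $q_i(\,\cdot\,;x)$ against Lebesgue measure, and the main technical ingredient — verifying the hypotheses needed for dominated convergence — is already in hand through the boundedness of $g$ and integrability of $\mu_\sigma$, so no further obstacles arise.
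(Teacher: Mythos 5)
Your proof is correct and follows essentially the same route as the paper's: deduce $g(t)\to 0$ as $t\downarrow 0$ from $g(t)=tg(1/t)\le Mt$ and $g(t)\to M$ as $t\uparrow\infty$ from monotone boundedness, then pass the pointwise limits through the integral for $Z_i$ by dominated (bounded) convergence and divide. You are in fact slightly more careful than the paper at the single point $w_i=0$, where the stated pointwise limit fails unless $g(1)=M$ (it does not for the Barker choice), but as you note this Lebesgue-null point is immaterial for every subsequent use of the lemma.
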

\begin{proof}
Consider the case $U_i(x)\rightarrow-\infty$.
From $g(t)=t \, g(1/t)\leq t M$ it follows $g(t)\to 0$ as $t\to 0$.
Also, from the boundedness and monotonicity of $g$ it holds $g(t)\to M$ as $t\to \infty$.
Therefore, for all $w_i\in\R$,
\begin{equation}
\label{eqn: g_convergence}
g(\exp(w_iU_i(x)))\rightarrow M\,\mathbb{I}_{\left(-\infty,0\right]}(w_i)
\qquad 
\hbox{as }U_i(x)\rightarrow-\infty\,.
\end{equation}
Thus, from the bounded convergence theorem 
$Z_i(x)\to \int_{-\infty}^0 M\; \mu_{\sigma}(w_i)dw_i = \frac{M}{2}$ as $U_i(x)\rightarrow-\infty$ and, consequently, $q_i(w_i;x)\rightarrow2\mu_{\sigma}(w_i)\,\mathbb{I}_{\left(-\infty,0\right]}(w_i)$ as $U_i(x)\rightarrow-\infty$.
The case $U_i(x)\rightarrow+\infty$ is analogous.
\end{proof}

We now provide two lemmas that will be used to prove the inequality in \eqref{eq:condition_for_ergodicity}.
\begin{lemma}\label{lemma:rhs_barker_d}
Suppose Condition \ref{assumption:spherical} holds.
Let $V_s(x)=\exp(s\|x\|_\infty)$ and $Q$ the kernel with density $q$ as in \eqref{eq:proposal_multi_d}. Then
$$
\inf_{s>0}\limsup_{\| x\|\to\infty} \dfrac{QV_s(x)}{V_s(x)}=0\,.
$$
\end{lemma}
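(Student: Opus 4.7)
The plan is to prove the (stronger) bound $\limsup_{\|x\|\to\infty}QV_s(x)/V_s(x)\leq d\,\alpha(s)$ for every $s>0$, where
$$\alpha(s)\;:=\;2\int_{-\infty}^0 e^{sw}\mu_\sigma(w)\,dw,$$
and then conclude by noting that $\alpha(s)\downarrow 0$ as $s\uparrow\infty$ via dominated convergence (with dominator $\mu_\sigma$, which integrates to $1$, since $e^{sw}\to 0$ pointwise for $w<0$). The starting observation is the coordinate-wise union bound $\exp(s\|y\|_\infty)=\max_i\exp(s|y_i|)\leq \sum_{i=1}^d \exp(s|y_i|)$, which together with the product form $q(x,y)=\prod_i q_i(y_i-x_i;x)$ reduces the question to controlling each one-dimensional integral
$$I_i(x;s)\;:=\;\int_{\R} e^{s|x_i+w|}q_i(w;x)\,dw.$$

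To handle these integrals I would fix an arbitrary $\eta\in(0,1)$ and split the indices into ``large'' $S_\eta=\{i:|x_i|\geq \eta\|x\|_\infty\}$ and ``small'' $S_\eta^c$. For $i\in S_\eta^c$, the triangle inequality gives $|x_i+w|\leq \eta\|x\|_\infty+|w|$, so combining this with the uniform pointwise bound $q_i(w;x)\leq (2M/g(1))\mu_\sigma(w)$ from Lemma \ref{lemma:g_prop} and the finiteness of $C_s:=(2M/g(1))\int e^{s|w|}\mu_\sigma(w)\,dw$ (which follows from the hypothesis $\int e^{sw}\mu_\sigma(w)\,dw<\infty$ for all $s>0$ together with the symmetry of $\mu_\sigma$) produces $I_i(x;s)/\exp(s\|x\|_\infty)\leq C_s e^{-s(1-\eta)\|x\|_\infty}\to 0$ as $\|x\|_\infty\to\infty$.

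The main obstacle lies in the large-coordinate case $i\in S_\eta$, which I would treat as follows. Assuming $x_i>0$ without loss of generality (by the symmetry of $\mu_\sigma$), I split $I_i$ at $w=-x_i$ to obtain
$$I_i(x;s)\;=\;e^{sx_i}\int_{-x_i}^{\infty}e^{sw}q_i(w;x)\,dw\;+\;e^{-sx_i}\int_{-\infty}^{-x_i}e^{-sw}q_i(w;x)\,dw.$$
Condition \ref{assumption:spherical} together with $|x_i|\geq \eta\|x\|_\infty\geq \eta\|x\|/\sqrt{d}$ gives $|U_i(x)|=|f'(\|x\|)|\,|x_i|/\|x\|\geq |f'(\|x\|)|\eta/\sqrt{d}\to\infty$ uniformly on $S_\eta$, so Lemma \ref{lemma:q_conv} combined with dominated convergence (dominator $(2M/g(1))e^{s|w|}\mu_\sigma(w)\in L^1$) shows that the first integral tends to $\alpha(s)$, while the second is a tail integral of an $L^1$ function and vanishes as $x_i\to\infty$. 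This yields $I_i(x;s)/e^{sx_i}\to\alpha(s)$, and since $e^{sx_i}\leq e^{s\|x\|_\infty}$, the contribution of each $i\in S_\eta$ to the sum is at most $\alpha(s)+o(1)$. Summing over $i$ gives $\limsup_{\|x\|\to\infty}QV_s(x)/V_s(x)\leq d\,\alpha(s)$; letting $s\to\infty$ sends this to $0$ and completes the argument.
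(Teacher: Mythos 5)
Your argument is correct and follows essentially the same route as the paper's proof: the coordinate-wise bound $V_s(y)\leq\sum_i e^{s|y_i|}$, a split into small and large coordinates (the paper fixes the threshold at $\tfrac12\|x\|_\infty$ rather than an arbitrary $\eta$), the uniform bound $q_i\leq (2M/g(1))\mu_\sigma$ from Lemma \ref{lemma:g_prop} for the small ones, and the convergence $q_i(\cdot;x)\to 2\mu_\sigma\,\I_{(-\infty,0]}$ from Lemma \ref{lemma:q_conv} for the large ones, arriving at the same bound $d\cdot 2\int_{-\infty}^0 e^{sw}\mu_\sigma(w)\,dw\to 0$ as $s\to\infty$. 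Your explicit split of the integral at $w=-x_i$ is just a more detailed bookkeeping of the dominated-convergence step that the paper carries out implicitly.
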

\begin{proof}
Let $x\in\R^d$ and $Y\sim Q(x,\cdot)$.
Since $V_s(y)\leq \sum_{i=1}^d\exp(s |y_i|)$ we have
$$
\mathbb{E}\left[\frac{V_s(Y)}{V_s(x)}\right]
\leq
\sum_{i=1}^d
\mathbb{E}\left[\frac{e^{s|Y_i|}}{e^{s\|x\|_\infty}}\right]\,.
$$
We now bound $\mathbb{E}\left[e^{s(|Y_i|-\|x\|_\infty)}\right]$ differently depending on whether $|x_i|\leq \frac{1}{2}\|x\|_\infty$ or $\frac{1}{2}\|x\|_\infty<|x_i|\leq \|x\|_\infty$.

If $|x_i|\leq \frac{1}{2}\|x\|_\infty$ it follows from the triangle inequality that $|x_i+w|-\|x\|_\infty\leq |x_i|+|w|-\|x\|_\infty\leq |w| -\|x\|_\infty/2$ for any $w\in\R$.
Also, from \eqref{eq:proposal_multi_d} and Lemma \ref{lemma:g_prop} we have $q_i(w_i;x)\leq \frac{2M}{g(1)}\mu_\sigma (w_i)$.
It follows
\begin{equation*}
\mathbb{E}\left[e^{s(|Y_i|-\|x\|_\infty)}\right]
\mathbb{I}\left(|x_i|\leq \frac{\|x\|_\infty}{2}\right)\leq
\frac{2M}{g(1)} e^{-s\|x\|_\infty/2}\int_{\R} e^{s|w|}\mu_\sigma(w)dw\,,
\end{equation*}
and thus
\begin{equation}\label{eq:limsup_case1}
\limsup_{\|x\|\to\infty}\mathbb{E}\left[e^{s(|Y_i|-\|x\|_\infty)}\right]
\mathbb{I}\left(|x_i|\leq \frac{\|x\|_\infty}{2}\right)
=0\,.
\end{equation}

If $\frac{1}{2}\|x\|_\infty<|x_i|\leq \|x\|_\infty$ we have
\begin{multline*}
\mathbb{E}\left[e^{s(|Y_i|-\|x\|_\infty)}\right]\mathbb{I}\left(|x_i|> \frac{\|x\|_\infty}{2}\right)
\leq\\
\mathbb{I}\left(|x_i|> \frac{\|x\|_\infty}{2}\right)
\int_{\R}
e^{s(|x_i+w|-|x_i|)}q_i(w;x)
dw
\,.
\end{multline*}
If $\|x\|\to\infty$ and $|x_i|> \frac{\|x\|_\infty}{2}$ it follows $|x_i|\to\infty$.
Moreover, by Condition \ref{assumption:spherical} and $|x_i|> \frac{\|x\|_\infty}{2}$, we have $U_i(x)\leq \frac{f(\|x\|)}{2}\to-\infty$ as $x_i\to+\infty$ and $U_i(x)\geq - \frac{f(\|x\|)}{2}\to+\infty$ as $x_i\to-\infty$.
Therefore, by Lemma \ref{lemma:q_conv} 
\begin{align*}
\limsup_{\|x\|\to\infty}
\mathbb{I}\left(|x_i|> \frac{\|x\|_\infty}{2}\right)
\int_{\R}
e^{s(|x_i+w|-|x_i|)}q_i(w;x)
dw
\leq
2\int_{-\infty}^0
e^{sw}\mu_\sigma (w)dw
\,.
\end{align*}
Combining the last two displayed equations we get 
\begin{align}\label{eq:limsup_case2}
&\limsup_{\|x\|\to\infty}
\mathbb{E}\left[e^{s(|Y_i|-\|x\|_\infty)}\right]\mathbb{I}\left(|x_i|> \frac{\|x\|_\infty}{2}\right)
\leq 2\int_{-\infty}^0
e^{sw}\mu_\sigma (w)dw\,.
\end{align}
From \eqref{eq:limsup_case1}, \eqref{eq:limsup_case2} and basic properties of the $\limsup$ we get
\begin{align*}
&\limsup_{\|x\|\to\infty}
\mathbb{E}\left[e^{s(|Y_i|-\|x\|_\infty)}\right]
\leq 2\int_{-\infty}^0
e^{sw}\mu_\sigma (w)dw\,.
\end{align*}

Thus
$$
\limsup_{\|x\|\to\infty}
\mathbb{E}\left[\frac{V_s(Y)}{V_s(x)}\right]
\leq
d\left(
\int_{-\infty}^0 e^{sw}2\mu_\sigma(w)dw
\right)
$$
which goes to $0$ as $s\to\infty$.
\end{proof}

\begin{lemma}\label{lemma:lhs_barker_d}
Assume that $\inf_{w\in(-\delta,\delta)}\mu_\sigma(w)>0$ for some $\delta>0$.
Under Condition \ref{assumption:spherical} it holds
\begin{equation}\label{eq:non_zero_acceptance}
\liminf_{\|x\|\to\infty}\int_{\R^d}
q(x,y)\alpha(x,y)dy>0\,.
\end{equation}
\end{lemma}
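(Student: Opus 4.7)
The plan is to lower bound the integral $\int q(x,y)\alpha(x,y)dy$ by restricting the integration to a set $A_x$ of ``favourable'' proposal moves $w = y-x$ and showing that on this set the integrand is bounded below by a positive constant uniformly in $x$ as $\|x\|\to\infty$.

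I would construct $A_x$ so as to move $y$ toward the origin along the dominant coordinate of $x$. Let $i^*$ be an index maximising $|x_i|$, so $|x_{i^*}|\geq \|x\|/\sqrt{d}$, and set
\[
A_x = \{w\in\R^d : w_{i^*}\in J_{i^*},\ |w_i|\leq \eta \text{ for } i\neq i^*\},
\]
where $J_{i^*}$ is a fixed-length interval with sign opposite to that of $x_{i^*}$ (say $|w_{i^*}|\in[\delta/4,\delta/2]$) and $\eta > 0$ is a small constant depending on $d$. On $A_x$ every factor $q_i(w_i;x)$ is at least $g(1)\mu_\sigma(w_i)/M$, since $w_{i^*} U_{i^*}(x)\geq 0$ by the sign choice (using $\nabla U(x) = f'(\|x\|)x/\|x\|$ with $f'(\|x\|)<0$ for large $\|x\|$) and the remaining $|w_i|$ are small; by Lemma \ref{lemma:g_prop} together with $\inf_{w\in(-\delta,\delta)}\mu_\sigma(w)>0$ this gives $Q(x, A_x)\geq c_1>0$ uniformly.

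The main work is then to show $\alpha(x,y)\geq c_2>0$ on $A_x$, i.e.\ a uniform lower bound on $r(x,y)$. Using the balancing identity $g(t)=tg(1/t)$, equivalently $\log g(e^a)=a+\log g(e^{-a})$, one rewrites
\[
\log r(x,y) = \bigl[U(y)-U(x)-w\cdot\nabla U(y)\bigr] + \sum_i\bigl[\log g(e^{w_iU_i(y)})-\log g(e^{w_iU_i(x)})\bigr] + \sum_i\log\frac{Z_i(x)}{Z_i(y)},
\]
where $U=\log\pi$. Taylor's theorem identifies the bracketed term as $-\frac{1}{2}w^{\top}H(\zeta)w$ for some $\zeta$ on the segment from $x$ to $y$, and under Condition \ref{assumption:spherical} the Hessian $H$ is negative semi-definite outside $B(0,R)$, so this contribution is non-negative. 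The third sum is bounded using the uniform bounds on $Z_i$ in Lemma \ref{lemma:g_prop}.

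The main obstacle is the middle sum, which can contribute negatively when some $|x_i|$ is small so that $|U_i(y)|$ is large of order $|f'(\|y\|)|/\|y\|$. This is addressed by taking $\eta$ sufficiently small in $d$ so that the ``radial'' part of the Taylor remainder, $\frac{1}{2}|f''(\|\zeta\|)|\bigl((w\cdot\zeta)/\|\zeta\|\bigr)^2$---which is bounded below by a positive multiple of $|f''(\|\zeta\|)|$ because the projection $(w\cdot\zeta)/\|\zeta\|$ is bounded away from zero on $A_x$ by the choice of $J_{i^*}$---dominates the transverse corrections from the middle sum, which are of size $O\bigl(\eta^{2}\,|f'(\|y\|)|/\|y\|\bigr)$ after applying the Lipschitz property of $t\mapsto\log g(e^t)$. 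This yields $\log r\geq -C$ uniformly on $A_x$ for $\|x\|$ sufficiently large, hence $\alpha(x,y)\geq c_{2}>0$ on $A_x$, and therefore $\int q(x,y)\alpha(x,y)dy\geq c_1 c_2>0$, as required.
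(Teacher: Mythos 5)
Your overall strategy --- restrict to a set of inward-pointing moves $w=y-x$, bound the proposal mass of that set from below, and bound the acceptance probability from below on it --- is the same as the paper's, but the execution has gaps that I do not think can be closed under the stated hypotheses. The decisive difference is the choice of favourable set: you constrain the sign of $w_{i^*}$ only, whereas the paper's set $A(x)$ additionally imposes $x_iw_i\le 0$ for \emph{every} coordinate (together with $\|x+w\|\le\|x\|-\epsilon$ and $\|w\|\le2\epsilon$). Without that, two things break. First, the claim that every factor $q_i(w_i;x)$ is at least $g(1)\mu_\sigma(w_i)/M$ on $A_x$ is false for $i\neq i^*$: even though $|w_i|\le\eta$, the derivative $U_i(x)=f'(\|x\|)x_i/\|x\|$ can diverge, so $w_iU_i(x)$ can be arbitrarily negative and $g(e^{w_iU_i(x)})$ arbitrarily small (the weaker conclusion $Q(x,A_x)\ge c_1$ survives by integrating only over the favourable half of each interval, but the pointwise bound as written is wrong). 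Second, and more seriously, the adverse-sign coordinates are exactly what force you into the second-order analysis of $\log r$, and that analysis does not go through: (i) Condition \ref{assumption:spherical} only provides $f$ differentiable with $f'$ non-increasing, so the Hessian $H(\zeta)$ and the radial remainder $\tfrac12|f''(\|\zeta\|)|\bigl((w\cdot\zeta)/\|\zeta\|\bigr)^2$ need not exist; (ii) the Lipschitz property of $t\mapsto\log g(e^{t})$ is not implied by $g$ being bounded, non-decreasing and balancing, which is all the enclosing theorem assumes; (iii) the ``transverse corrections'' are not $O\bigl(\eta^{2}|f'(\|y\|)|/\|y\|\bigr)$: writing $U_i(y)-U_i(x)=\frac{f'(\|y\|)}{\|y\|}w_i+x_i\bigl(\frac{f'(\|y\|)}{\|y\|}-\frac{f'(\|x\|)}{\|x\|}\bigr)$, the second term involves the increment of $f'$ over a step of fixed length multiplied by a coordinate $|x_i|$ that may be comparable to $\|x\|$; it is of the same order in the (putative) $f''$ as the radial remainder you want to dominate it with, so ``take $\eta$ small in $d$'' is not obviously sufficient and is in any case not proved.

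For contrast, the paper's sign constraint makes everything first-order: $x_iw_i\le0$ gives $w_iU_i(x)\ge0$, hence $g(e^{w_iU_i(x)})\ge g(1)$ directly for the forward proposal; and in the reverse direction the elementary bound $g(t)\ge g(1)\min\{1,t\}$ reduces the acceptance ratio to showing $U(x+w)-U(x)+\sum_i\min\{-w_iU_i(x+w),0\}\to+\infty$, which follows from $f(\|x+w\|)-f(\|x\|)\ge -f'(\|x+w\|)\epsilon$ (monotonicity of $f'$ only) plus cross terms of order $\epsilon^{2}|f'|/\|x\|$ that are negligible relative to $\epsilon|f'|$. Adopting the all-coordinates sign constraint removes the need for second derivatives of $f$ and for any regularity of $g$ beyond boundedness and monotonicity.
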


\begin{proof}
Let $w=y-x$ and $\mu_\sigma(w)=\prod_{i=1}^d\mu_\sigma(w_i)$. Also, denote by
$\alpha(w;x)=\alpha(x,y)$ the MH acceptance rate when moving from $x$ to $y$.
We write $f(w;x)\gtrsim g(w;x)$ if the function $f(w;x)$ is greater or equal than $g(w;x)$ 
up to positive constants independent of $x$ and $w$.
From Lemma \ref{lemma:g_prop} we have $\frac{g(1)}{2}\leq Z_i(x)\leq M$ and thus
\begin{align*}
&q(w;x)\alpha(w;x)\\
&=
\frac{\mu_\sigma(w)}{\prod_{i=1}^dZ_i(x)}
\min\left\{
\prod_{i=1}^d g(e^{w_iU_i(x)})
,
e^{U(x+w)-U(x)}
\prod_{i=1}^d\frac{g(e^{-w_iU_i(x+w)})Z_i(x)}{Z_i(x+w)}
\right\}
\\&\gtrsim\;
\mu_\sigma(w)
\min\left\{
\prod_{i=1}^d g(e^{w_iU_i(x)})
,
e^{U(x+w)-U(x)}
\prod_{i=1}^d g(e^{-w_iU_i(x+w)})
\right\}
\,.
\end{align*}
Then, using $g(t)\geq g(1)\min\{1,t\}$ from Lemma \ref{lemma:g_prop} we obtain
\begin{align*}
&q(w;x)\alpha(w;x)\\
&\gtrsim\;
\mu_\sigma(w)
\min\left\{
\prod_{i=1}^d g(e^{w_iU_i(x)})
,
g(1)^d
e^{U(x+w)-U(x)+\sum_{i=1}^d\min\{-w_iU_i(x+w),0\}}
\right\}
\\
&\gtrsim\;
\mu_\sigma(w)
\min\left\{
\prod_{i=1}^d g(e^{w_iU_i(x)})
,
e^{U(x+w)-U(x)+\sum_{i=1}^d\min\{-w_iU_i(x+w),0\}}
\right\}
\,.
\end{align*}
Assume $\|x\|$ large and $w\in A(x)$, where $A(x)=\{w\in\R^d\,:\, \|x+w\|\leq \|x\|-\epsilon,\;\|w\|\leq 2\epsilon \hbox{ and }x_iw_i\leq 0 \hbox{ for all } i\}$ for some fixed $\epsilon>0$.
From $x_iw_i\leq 0$ it follows $w_iU_i(x)\geq 0$ and thus, from the monotonicity of $g$, $g(e^{w_iU_i(x)})\geq g(1)$. 
Combining the latter with the last displayed equation we have
\begin{align}\label{eq:gtrapprox}
q(w;x)\alpha(w;x)
\gtrsim&\;
\mu_\sigma(w)
\min\left\{
g(1)^d
,
e^{U(x+w)-U(x)+\sum_{i=1}^d\min\{-w_iU_i(x+w),0\}}
\right\}
\,.
\end{align}
We now lower bound $U(x+w)-U(x)+\sum_{i=1}^d\min\{-w_iU_i(x+w),0\}$.
For $\|x\|>R$, from Condition \ref{assumption:spherical}
\begin{align*}
&U(x+w)-U(x)+\sum_{i=1}^d\min\{-w_iU_i(x+w),0\}
\\&=
f(\|x+w\|)-f(\|x\|)+\frac{f'(\|x+w\|)}{\|x+w\|}\sum_{i=1}^d\min\{-w_i(x_i+w_i),0\}
\,.
\end{align*}
Using the non-increasingness of $f'$ and $w\in A(x)$ we have
$f(\|x+w\|)-f(\|x\|)
\geq
-f'(\|x+w\|)(\|x\|-\|x+w\|)
\geq
-f'(\|x+w\|)\epsilon$.
Thus 
\begin{align*}
&U(x+w)-U(x)+\sum_{i=1}^d\min\{-w_iU_i(x+w),0\}
\\&\geq
-f'(\|x+w\|)\left(\epsilon+\frac{\sum_{i=1}^d\min\{-w_i(x_i+w_i),0\}}{\|x+w\|}\right)\,.
\end{align*}
Since $w\in A(x)$ it follows $x_iw_i\leq 0$ and $\min\{-w_i(x_i+w_i),0\}\geq -w_i^2\geq -(2\epsilon)^2$. Thus
\begin{align*}
&
\inf_{w\in A(x)}\frac{\sum_{i=1}^d\min\{-w_i(x_i+w_i),0\}}{\|x+w\|}
\geq
-\frac{(2\epsilon)^2}{\|x\|-2\epsilon}\,,
\end{align*}
which goes to 0 as $\|x\|\to\infty$.
It follows that
\begin{align*}
&\liminf_{\|x\|\to\infty}\inf_{w\in A(x)}
U(x+w)-U(x)+\sum_{i=1}^d\min\{-w_iU_i(x+w),0\}
\geq\\
&\liminf_{\|x\|\to\infty} -f'(\|x\|-2\epsilon)\epsilon
=\infty\,.
\end{align*}
Combining the last displayed equation with \eqref{eq:gtrapprox} we have
\begin{align*}
\liminf_{\|x\|\to\infty}\inf_{w\in A(x)}q(w;x)\alpha(w;x)
\gtrsim&\;
\liminf_{\|x\|\to\infty}\inf_{w\in A(x)}
\mu_\sigma(w)
>0
\,,
\end{align*}
where the last inequality holds for sufficiently small $\epsilon$ because of the assumption $\inf_{w\in(-\delta,\delta)}\mu_\sigma(w)>0$.
Therefore
\begin{align*}
\liminf_{\|x\|\to\infty}\int_{\R^d}
q(x,y)\alpha(x,y)dy
&\geq
\liminf_{\|x\|\to\infty}\int_{A(x)}
q(w;x)\alpha(w;x)dw\\
&\gtrsim
\liminf_{\|x\|\to\infty}\int_{A(x)}1\,dw
\,.
\end{align*}
The proof is completed noting that 
$\liminf_{\|x\|\to\infty}\int_{A(x)}1\,dw>0$ by the construction of $A(x)$.
\end{proof}

\begin{proof}[Proof of Theorem \ref{thm:barker_ergodicity_multi_d}]
Lemmas \ref{lemma:rhs_barker_d} and \ref{lemma:lhs_barker_d} imply that there exist an $s>0$ such that $V_s$ satisfy \eqref{eq:condition_for_ergodicity}.
The thesis then follows from Lemma \ref{lemma:M_H_geom_erg_short}, 
noting that compact sets are small for $P$ (which can be deduced from the fact that $\inf\pi(x)>0$ on compact sets) and that that $\sup_{x} QV_s(x)/V_s(x)<\infty$ because
$$
\frac{QV_s(x)}{V_s(x)}
\leq
\sum_{i=1}^d\int_{\R}e^{s|w_i|}q_i(w_i;x)dw_i
\leq
2d\int_{\R}e^{s|w_i|}\mu_\sigma(w_i)dw_i<\infty
$$
where we used 
$e^{\|y\|_\infty-\|x\|_\infty}\leq e^{\|y-x\|_\infty}\leq\sum_i e^{|y_i-x_i|}$,
$q_i(w_i;x)\leq 2\mu_\sigma(w_i)$ and $\int_{\R}\exp(s|w|)\mu_\sigma(w)dw\leq 2\int_{\R}\exp(sw)\mu_\sigma(w)dw<\infty$ for every $s>0$.
\end{proof}

\subsection{Proof of Proposition \ref{prop:scaling_heuristic}}
Proposition \ref{prop:scaling_heuristic} follows directly from Lemmas \ref{lemma:ar_expansion} and \ref{lemma:Z_expansion} below.

\begin{lemma}\label{lemma:ar_expansion}
Under the assumptions of Proposition \ref{prop:scaling_heuristic} we have
\begin{align}\label{eq:ar_expansion}
\log\left(
\frac{f(x_i+\sigma u_i)}{f(x_i)}
\frac{
g\big( e^{-\phi'(x_i+\sigma u_i)\sigma u_i} \big)
}{
g\big( e^{\phi'(x_i)\sigma u_i} \big)
}
\right)
&=
\mathcal{O}\left(\sigma^3\right)
&\hbox{as }\sigma\to 0
\,,
\end{align}
for all $x_i,w_i\in\R$.
\end{lemma}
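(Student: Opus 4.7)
The plan is to exploit the functional equation $g(t)=tg(1/t)$ to decompose $h(a):=\log g(e^a)$ into a linear piece plus an even remainder, and then to observe that the crucial cancellations come from matching this linear piece against the Taylor expansion of $\phi(x_i+\sigma u_i)-\phi(x_i)$.

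Concretely, applying $g(t)=tg(1/t)$ with $t=e^a$ gives $\log g(e^{-a})=\log g(e^a)-a$, i.e.\ $h(-a)=h(a)-a$. Setting $\tilde h(a):=h(a)-a/2$, one checks $\tilde h(-a)=\tilde h(a)$, so $\tilde h$ is even. Under the assumed smoothness of $g$, $\tilde h$ is $C^3$ near the origin, and evenness forces $\tilde h'(0)=0$ and $\tilde h'''(0)=0$; in particular $\tilde h'(a)=\mathcal{O}(a)$ as $a\to 0$.

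Next, abbreviating $a:=\phi'(x_i)\sigma u_i$ and $b:=-\phi'(x_i+\sigma u_i)\sigma u_i$, I Taylor expand in $\sigma$:
\begin{align*}
a&=\phi'(x_i)\sigma u_i,\\
b&=-\phi'(x_i)\sigma u_i-\phi''(x_i)\sigma^2 u_i^2+\mathcal{O}(\sigma^3),\\
\phi(x_i+\sigma u_i)-\phi(x_i)&=\phi'(x_i)\sigma u_i+\tfrac{1}{2}\phi''(x_i)\sigma^2 u_i^2+\mathcal{O}(\sigma^3),
\end{align*}
so in particular $a+b=-\phi''(x_i)\sigma^2 u_i^2+\mathcal{O}(\sigma^3)=\mathcal{O}(\sigma^2)$. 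Using the decomposition $h=a/2+\tilde h(a)$ I write
\[
\log g(e^b)-\log g(e^a)=h(b)-h(a)=\tfrac{b-a}{2}+\tilde h(b)-\tilde h(a).
\]
The linear piece gives $\tfrac{b-a}{2}=-\phi'(x_i)\sigma u_i-\tfrac{1}{2}\phi''(x_i)\sigma^2 u_i^2+\mathcal{O}(\sigma^3)$, which exactly cancels the first two terms in the expansion of $\phi(x_i+\sigma u_i)-\phi(x_i)$, leaving an $\mathcal{O}(\sigma^3)$ residual.

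It remains to control $\tilde h(b)-\tilde h(a)$. Here the key trick is to Taylor expand $\tilde h(b)$ around $-a$ (not around $a$): since $b-(-a)=a+b=\mathcal{O}(\sigma^2)$,
\[
\tilde h(b)=\tilde h(-a)+\tilde h'(-a)(a+b)+\mathcal{O}\bigl((a+b)^2\bigr).
\]
Evenness gives $\tilde h(-a)=\tilde h(a)$ and $\tilde h'(-a)=-\tilde h'(a)=\mathcal{O}(\sigma)$, so $\tilde h(b)-\tilde h(a)=\mathcal{O}(\sigma)\cdot\mathcal{O}(\sigma^2)+\mathcal{O}(\sigma^4)=\mathcal{O}(\sigma^3)$. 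Combining everything yields the claimed $\mathcal{O}(\sigma^3)$ bound.

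The main obstacle, such as it is, is the bookkeeping of which order each cancellation happens at; the temptation to expand $\tilde h(b)-\tilde h(a)$ naively around $0$ (giving only an $\mathcal{O}(\sigma)$ bound via $\tilde h'(0)=0$ is not enough because $b-a=\mathcal{O}(\sigma)$) is what makes the correct centring at $-a$ essential. The integrability hypotheses on $g^{(j)}(e^{sw})$ are not needed for this pointwise bound but will be used in the companion lemma controlling $Z_i$.
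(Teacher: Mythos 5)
Your proof is correct. It proves the same statement as the paper's Lemma \ref{lemma:ar_expansion} but by a genuinely different route. The paper's proof brute-forces a Taylor expansion of the whole log-ratio in powers of $\sigma$, computes the first two coefficients explicitly as $c_1(x_i)=(1-2b'(0))\phi'(x_i)$ and $c_2(x_i)=(1-2b'(0))\phi''(x_i)$ with $b(s)=\log g(e^s)$, and then kills them by showing $b'(0)=g'(1)/g(1)=1/2$, which is extracted from the balancing identity by differentiating $g(1+\epsilon)=(1+\epsilon)g((1+\epsilon)^{-1})$ at $\epsilon=0$. You instead upgrade the balancing identity to the exact functional equation $h(-a)=h(a)-a$, peel off the linear part $a/2$ so that the remainder $\tilde h$ is even, and obtain the cancellation structurally: the linear piece exactly cancels the first two orders of $\phi(x_i+\sigma u_i)-\phi(x_i)$, and the even remainder contributes only $\mathcal{O}(\sigma^3)$ because of the recentred expansion of $\tilde h(b)$ at $-a$ combined with $a+b=\mathcal{O}(\sigma^2)$ and $\tilde h'(-a)=\mathcal{O}(\sigma)$. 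Each approach buys something: the paper's explicit coefficient computation also produces the third-order coefficient $c_3(x_i)$, which is used in Remark \ref{rmk:third} to argue that the bound is $\Theta(\sigma^3)$ and not better in general — your argument does not directly yield this. Conversely, your argument uses the full strength of the functional equation rather than only its first derivative at $t=1$, needs marginally less smoothness of $g$ (a locally bounded second derivative of $\tilde h$ suffices, so $C^2$ rather than $C^3$), and makes the source of the cancellation more transparent; your closing observation that the integrability hypotheses are only needed for the companion normalising-constant lemma (Lemma \ref{lemma:Z_expansion}) is also accurate.
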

\begin{proof}
Define the function $b$ as $b(s)=\log(g(\exp(s)))$ for all $s\in\R$.
For any $x_i,u_i$ in $\R$, we have 
\begin{align}
&\log\left(
\frac{f(x_i+\sigma u_i)}{f(x_i)}
\frac{
g\big( e^{-\phi'(x_i+\sigma u_i)\sigma u_i} \big)
}{
g\big( e^{\phi'(x_i)\sigma u_i} \big)
}
\right)
\nonumber\\
&=
\phi(x_i+\sigma u_i)-\phi(x_i)+b(-\phi'(x_i+\sigma u_i)\sigma u_i)-b(\phi'(x_i)\sigma u_i)
\nonumber\\&=
c_1(x_i)\phi'(x_i)u_i\sigma +
c_2(x_i)\frac{u_i^2\sigma^2}{2}+
c_3(x_i)
\frac{u_i^3\sigma^3}{6}+
\mathcal{O}\left(\sigma^4\right)\quad\hbox{as }\sigma\to 0
\,,\label{eq:taylor_exp_2}
\end{align}
where $c_1(x_i)$ and $c_2(x_i)$ are the coefficients of the second order Taylor expansion about $\sigma= 0$, and are given by
$c_1(x_i)=\left(1-2b'(0)\right)\phi'(x_i)$ and $c_2(x_i)=\left(1-2b'(0)\right)\phi''(x_i)$.
To conclude, we now show that the assumptions on $g$ imply $b'(0)=1/2$ and $c_1(x_i)=c_2(x_i)=0$.
By definition of $b$ it holds that $b'(0)=g'(1)/g(1)$. 
From $g(t)=t\,g(1/t)$ it follows $g(1+\epsilon)=(1+\epsilon)\,g((1+\epsilon)^{-1})$ and thus $\frac{g(1+\epsilon)-g((1+\epsilon)^{-1})}{2\epsilon}=\frac{g((1+\epsilon)^{-1})}{2}\,$.
Taking the limit $\epsilon \downarrow 0$ and using $(1+\epsilon)^{-1}=1-\epsilon+\mathcal{O}(\epsilon^2)$ it follows that $g'(1)=\frac{g(1)}{2}$ and thus $b'(0)=1/2$ and $c_1(x_i)=c_2(x_i)=0$.
Combining the latter with \eqref{eq:taylor_exp_2} we obtain \eqref{eq:ar_expansion}.
\end{proof}
\begin{remark}\label{rmk:third}
For general $\phi$, $x_i$ and $u_i$, we have $\log(\alpha_i(x_i,x_i+\sigma u_i))=\Theta(\sigma^3)$ because the third coefficient in the Taylor expansion in \eqref{eq:taylor_exp_2}, which is given by
$$
c_3(x_i)=
6 b''(0) \phi'(x_i) \phi''(x_i)
-2 b'''(0)  \phi'(x_i)^3+
(1-3 b'(0))\phi'''(x_i)\,,
$$
is non-zero in general.
\end{remark}

\begin{lemma}\label{lemma:Z_expansion}
Under the assumptions of Proposition \ref{prop:scaling_heuristic} we have
\begin{align*}
\log\left(\frac{Z_i(x_i)}{Z_i(x_i+\sigma u_i)}\right)
&=
\mathcal{O}\left(\sigma^3\right)
&\hbox{ as }\sigma\to 0\,,
\end{align*}
for all $x_i,w_i\in\R$.
\end{lemma}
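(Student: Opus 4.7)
The plan is to reduce everything to a Taylor expansion of a single scalar function. Writing $\mu_\sigma(w)=\sigma^{-1}\mu(w/\sigma)$ and changing variables $v=w/\sigma$ in the definition of $Z_i$, one sees that $Z_i(x_i)$ depends on $x_i$ only through $\phi'(x_i)\sigma$: specifically, $Z_i(x_i)=H(\phi'(x_i)\sigma)$ where $H(s):=\int_{\mathbb{R}} g(e^{sv})\mu(v)\,dv$. Thus setting $\psi:=\log H$, $a:=\phi'(x_i)\sigma$ and $b:=\phi'(x_i+\sigma u_i)\sigma$, the target quantity becomes $\psi(a)-\psi(b)$.

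The next step is to verify that $H$ is three times continuously differentiable in a neighbourhood of $0$. The integrability hypothesis $\int g^{(j)}(e^{sw})\mu(w)dw<\infty$ for $j\in\{0,1,2,3\}$ and all $s>0$ supplies the dominating functions needed to differentiate under the integral sign, yielding explicit formulas such as $H'(s)=\int v e^{sv}g'(e^{sv})\mu(v)dv$, and analogously for $H''$ and $H'''$. In particular $H(0)=g(1)>0$, so $\psi$ is $C^3$ near $0$. The crucial consequence of the symmetry of $\mu$ is that $H'(0)=g'(1)\int v\mu(v)dv=0$, hence $\psi'(0)=0$.

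Now I would control the size of $a$ and $b-a$. Clearly $a=\mathcal{O}(\sigma)$, and a first-order Taylor expansion of $\phi'$ gives
\[
b-a=\sigma\bigl(\phi'(x_i+\sigma u_i)-\phi'(x_i)\bigr)=\sigma^2 u_i\phi''(x_i)+\mathcal{O}(\sigma^3)=\mathcal{O}(\sigma^2).
\]
A first-order Taylor expansion of $\psi$ about $a$ then yields
\[
\psi(b)-\psi(a)=\psi'(a)(b-a)+\mathcal{O}((b-a)^2),
\]
and since $\psi'(0)=0$ we have $\psi'(a)=\mathcal{O}(a)=\mathcal{O}(\sigma)$. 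Combining these bounds gives $\psi(b)-\psi(a)=\mathcal{O}(\sigma)\cdot\mathcal{O}(\sigma^2)+\mathcal{O}(\sigma^4)=\mathcal{O}(\sigma^3)$, which is the desired conclusion.

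The only subtlety is the routine justification of differentiation under the integral sign, which is handled uniformly on a bounded neighbourhood of $s=0$ by the stated integrability conditions; everything else is bookkeeping with Taylor remainders. Note that it is only the vanishing of $\psi'(0)$ (and not the identity $g'(1)=g(1)/2$ used in Lemma \ref{lemma:ar_expansion}) that is needed here; the balancing property $g(t)=tg(1/t)$ enters only implicitly through the structure of the proposal.
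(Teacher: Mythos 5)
Your proof is correct, and it takes a genuinely cleaner route than the paper's. The paper changes variables to write $Z_i(x_i)=\int g(e^{\phi'(x_i)\sigma s})\mu(s)\,ds$ and then Taylor-expands both $Z_i(x_i)$ and $Z_i(x_i+\sigma u_i)$ directly in powers of $\sigma$, computing the coefficients explicitly (the odd moments of $\mu$ kill the linear terms, and the $\sigma^2$ coefficients are both $\mom{2}(g'(1)+g''(1))\phi'(x_i)^2/2$, so they cancel in the log-ratio); for the second expansion this requires reorganising a bivariate expansion in $\sigma$ ``as a polynomial in $s$'' and keeping even powers, which is the least transparent step of their argument. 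You instead isolate the structural fact that $Z_i$ depends on its argument only through the scalar $\phi'(\cdot)\sigma$, i.e.\ $Z_i(x_i)=H(\phi'(x_i)\sigma)$ with $H(s)=\int g(e^{sv})\mu(v)\,dv$, and then the whole lemma reduces to $\psi(a)-\psi(b)$ with $\psi=\log H$, $|b-a|=\mathcal{O}(\sigma^2)$ from smoothness of $\phi'$, and $\psi'(a)=\mathcal{O}(\sigma)$ because $H'(0)=0$ by symmetry of $\mu$. This makes the cancellation mechanism explicit without computing any second-order coefficients, needs only $\psi\in C^2$ near $0$ rather than third derivatives (you claim $C^3$ but use less), and correctly observes that the balancing identity $g(t)=tg(1/t)$ plays no role in this lemma — which is also true of the paper's proof, where it is never invoked. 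Both arguments share the same routine caveat about differentiating under the integral sign (the stated hypothesis $\int g^{(j)}(e^{sw})\mu(w)\,dw<\infty$ does not literally supply the polynomial-in-$w$ factors needed for domination, and both proofs wave at this), and both implicitly assume $\mu$ has enough finite moments; since the paper treats this as routine, so may you.
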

\begin{proof}
Without loss of generality, assume $g(1)=1$ throughout the proof.
First consider $\log\left(Z_i(x_i)\right)$, which can be written as
\begin{equation}\label{eq:change_of_var}
Z_i(x_i)
=
\int_\R
g\left(e^{\phi'(x_i)(y_i-x_i)}\right)\sigma^{-1}\mu\left(\frac{y_i-x_i}{\sigma}\right)dy_i
=
\int_\R
g\left(e^{\phi'(x_i)\sigma s}\right)\mu(s)ds\,.
\end{equation}
For every non-negative integer $j$, denote by $\mom{j}$ the $j$-th moment of the distribution $\mu(\cdot)$.
Note that, since $\mu$ is a symmetric pdf, $\mom{0}=1$, $\mom{j}=0$ if $j$ is odd and $\mom{j}>0$ if $j$ is even.
For $j\in\{1,2,3\}$, we have
\begin{multline}\label{eq:Z_deriv}
\frac{\partial^j}{\partial\sigma^j}Z_i(x_i)\at{\sigma=0}
=
\int_\R
\frac{\partial^j}{\partial\sigma^j}g\left(e^{\phi'(x_i)\sigma s}\right)\at{\sigma=0}\mu(s)ds
=\\
\int_\R
\frac{\partial^j}{\partial\sigma^j}g\left(e^{\phi'(x_i)\sigma}\right)\at{\sigma=0}s^j\mu(s)ds
=
\frac{\partial^j}{\partial\sigma^j}g\left(e^{\phi'(x_i)\sigma}\right)\at{\sigma=0}\,\mom{j}\,,
\end{multline}
where the exchange of integration and derivation is justified by the assumptions on $g$ and $\mu$.
Using the Taylor expansion of the function $\sigma\mapsto \log(h(\sigma))$ for general $h$ about $\sigma=0$, and the fact that $Z_i(x_i)\at{\sigma=0}=1$ and $\frac{\partial^j}{\partial\sigma^j}Z_i(x_i)\at{\sigma=0}=0$ if $j$ is odd, we have
\begin{align}
\log(Z_i(x_i))
&=
\mom{2}\frac{\partial^2}{\partial\sigma^2}g\left(e^{\phi'(x_i)\sigma}\right)\at{\sigma=0}\frac{\sigma^2}{2}+\mathcal{O}(\sigma^4)
\nonumber\\&=
\mom{2}(g'(1)+g''(1))\phi'(x_i)^2\frac{\sigma^2}{2}+\mathcal{O}(\sigma^4)
\quad \hbox{ as }\sigma\to 0\,. \label{eq:expansion_Z_i_sigma}
\end{align}

Set $y_i=x_i+\sigma u_i$, then from \eqref{eq:change_of_var}
and \eqref{eq:Z_deriv}
\begin{equation*}
\frac{\partial^j}{\partial\sigma^j}Z_i(x_i+\sigma u_i)\at{\sigma=0}
=
\int_\R
\frac{\partial^j}{\partial\sigma^j}
g\left(e^{\phi'(x_i+\sigma u_i)\sigma s}\right)\at{\sigma=0}\mu(s)ds
\end{equation*}
Reordering the Taylor expansion of $g\left(e^{\phi'(x_i+\sigma u_i)\sigma s}\right)$ about $\sigma=0$ as a polynomial of $s$ and keeping only even powers in $s$ we get
\begin{align*}
Z_i(x_i+\sigma u_i)
=
1+
\mom{2}(g'(1)+g''(1))\phi'(x_i)^2\frac{\sigma^2}{2}
+\mathcal{O}(\sigma^3)\,.
\end{align*}
Using the expansion of $\log(h(\sigma))$ for general $h$ about $\sigma=0$, and the fact that $Z_i(x_i+\sigma u_i)\at{\sigma=0}=1$ and $\frac{\partial}{\partial\sigma}Z_i(x_i+\sigma u_i)\at{\sigma=0}=0$, we have
\begin{align*}
\log(Z_i(x_i+\sigma u_i))
=
\mom{2}(g'(1)+g''(1))\phi'(x_i)^2\frac{\sigma^2}{2}
+\mathcal{O}(\sigma^3)\,.
\end{align*}
Combining the latter equation with \eqref{eq:expansion_Z_i_sigma} we have
\begin{equation*}
\log\left(\frac{Z_i(x_i)}{Z_i(x_i+\sigma u_i)}\right)
=
\log\left(Z_i(x_i)\right)-\log\left(Z_i(x_i+\sigma u_i)\right)
=
\mathcal{O}(\sigma^3)
\end{equation*}

\end{proof}
\begin{remark}
For the Barker proposal, the normalization term  $Z_i(x_i)$ is constant over $x_i$ and thus Lemma \ref{lemma:Z_expansion} is trivially satisfied.
\end{remark}

\section{Condition \ref{cond:gap_MALA} for the exponential family class}

\begin{proposition}
Condition \ref{cond:gap_MALA} holds in the case in which there are $\alpha,\beta>0$ such that
\[
\pi(x) \propto \exp\{  - \alpha \|x\|^\beta \}
\]
\end{proposition}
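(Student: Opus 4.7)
The plan is to verify the two parts of Condition \ref{cond:gap_MALA} separately by direct computation, since the exponential family $\pi(x)\propto\exp\{-\alpha\|x\|_2^\beta\}$ yields fully explicit formulas for both the gradient and the tail of $\|X\|_2$ under $X\sim\pi$. Note that the symbol $\beta$ appearing in Condition \ref{cond:gap_MALA}(ii) is unrelated to the exponent in the proposition; to avoid confusion I will denote the former by $\beta'$ below.

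For part (i), I would differentiate to obtain $\partial_{x_1}\log\pi(x) = -\alpha\beta\|x\|_2^{\beta-2}x_1$, so that
\[
\left|\frac{\partial \log\pi(x)}{\partial x_1}\right|\|x\|_2^{\gamma} \;=\; \alpha\beta\,|x_1|\,\|x\|_2^{\beta-2+\gamma}.
\]
The infimum over $(x_2,\dots,x_d)\in\R^{d-1}$ is controlled by the sign of $\beta-2+\gamma$: if it is negative, the infimum vanishes because one can send the transverse coordinates to infinity; if it is non-negative, then $\|x\|_2^{\beta-2+\gamma}\geq |x_1|^{\beta-2+\gamma}$ and the infimum is attained at $(x_2,\dots,x_d)=0$, equalling $\alpha\beta|x_1|^{\beta-1+\gamma}$, which diverges as $|x_1|\to\infty$. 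Choosing any $\gamma$ with $\gamma>0$ and $\gamma\geq 2-\beta$ (for instance $\gamma=\max\{2-\beta,1\}$) therefore suffices.

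For part (ii), I would pass to spherical coordinates and write
\[
\Pr(\|X\|_2>t) \;=\; \frac{\omega_{d-1}}{Z}\int_t^\infty r^{d-1}e^{-\alpha r^\beta}\,dr,
\]
with $\omega_{d-1}$ and $Z$ constants depending only on $d,\alpha,\beta$. The substitution $u=\alpha r^\beta$ reduces the right-hand side to a rescaled upper incomplete gamma function $\Gamma(d/\beta,\alpha t^\beta)$, whose standard large-argument asymptotics yield $\Pr(\|X\|_2>t) \leq C\,t^{d-\beta}e^{-\alpha t^\beta}$ for all sufficiently large $t$, with $C<\infty$ a constant. Picking any $\beta'\in(0,\beta)$, the polynomial prefactor is absorbed by the exponential difference $\exp(-\alpha t^\beta+t^{\beta'})\to 0$, so $\Pr(\|X\|_2>t)\leq e^{-t^{\beta'}}=\Theta(e^{-t^{\beta'}})$ eventually, verifying the condition.

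Neither step should present a genuine obstacle: part (i) is essentially a one-line calculation, with the only subtle point being that $\gamma$ must be chosen large enough to prevent the transverse coordinates from killing the infimum, while part (ii) is a standard tail estimate in which one sacrifices a small amount in the exponent to absorb the polynomial prefactor cleanly.
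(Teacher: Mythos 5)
Your proof is correct and follows essentially the same route as the paper's: a direct computation of the gradient for part (i), with the infimum over the transverse coordinates attained at the origin once $\gamma$ is chosen so that the exponent $\beta-2+\gamma$ is non-negative (the paper simply takes $\gamma=2$, which gives exponent $\beta>0$), and a standard tail estimate for part (ii), which the paper dismisses as immediate. Your explicit treatment of the sign of $\beta-2+\gamma$ in (i) and your sacrifice of a small amount in the exponent (taking $\beta'<\beta$ to absorb the polynomial prefactor $t^{d-\beta}$) in (ii) are both correct and, if anything, more careful than the paper's one-line argument.
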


\begin{proof}
Condition (ii) is immediate.  For (i), first note that here
\begin{equation} \label{eqn:efex}
\left| \frac{\partial \log\pi(x)}{\partial x_1} \right|\|x\|^\gamma = -\alpha \beta x_1\|x\|^{\gamma + \beta - 2}.
\end{equation}
Note that $\|x\| = \surd (\sum_i x_i^2)$ is a monotonically increasing function in each $|x_i|$, so the infimum over $(x_2,...,x_d)$ of \eqref{eqn:efex} is realised at $x_2 = ... = x_d = 0$.  Choosing $\gamma = 2$ condition (i) is satisfied because
\[
\liminf_{|x_1| \to \infty} \alpha\beta|x_1|^{1+\beta}=\infty.
\]
\end{proof}

\section{First-order exact Metropolis-Hastings proposals}
Intuitively, we would like any method that uses gradient information to be exact at the first order.
In a Metropolis-Hastings context, this means a proposal that incorporates gradient information should be reversible with respect to measures that possess a log-linear density function, i.e.\ $\pi(x)=\exp(ax+b)$ for some $a,b\in\R$.
In such cases the gradient at any location encompasses full information and this would therefore seem to be a sensible minimal goal for well-designed gradient-based methods.
The Langevin and Hamiltonian schemes both satisfy this stipulation.
As the following proposition shows, for any instance of the class defined in \eqref{eq:gradient_based}, the condition $g(t)=tg(1/t)$ is both sufficient and necessary for the proposal distribution to satisfy such a requirement.

\begin{proposition}\label{prop:first_order_exact}
Let $\mu_\sigma$ be a symmetric probability density function on $\R$ and $\pi(x)=\exp(ax+b)$ for some $a,b\in\R$, with $a\neq 0$. Then a transition kernel of the form in \eqref{eq:gradient_based} is $\pi$-reversible if and only if $g(t)=tg(1/t)$ for every $t>0$.
\end{proposition}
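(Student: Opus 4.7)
The plan is to show that for a log-linear target, the $\pi$-reversibility equation collapses algebraically to exactly the functional equation $g(t)=tg(1/t)$, so the two conditions become equivalent.

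First I would note that since $\pi(x)=\exp(ax+b)$ we have $\nabla\log\pi(x)=a$, a constant. Substituted into the normalizer in \eqref{eq:gradient_based}, this gives
\begin{equation*}
Z(x)=\int g\!\left(e^{a(y-x)}\right)\mu_{\sigma}(y-x)\,dy=\int g\!\left(e^{au}\right)\mu_{\sigma}(u)\,du,
\end{equation*}
which is independent of $x$; denote the common value by $Z$. Hence the proposal density is $q^{(g)}(x,y)=Z^{-1}g(e^{a(y-x)})\mu_{\sigma}(y-x)$.

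Next I would write out the detailed balance condition $\pi(x)q^{(g)}(x,y)=\pi(y)q^{(g)}(y,x)$. Using the symmetry $\mu_{\sigma}(y-x)=\mu_{\sigma}(x-y)$ to cancel the base-kernel factor and the common $Z^{-1}e^{b}$ on both sides, this is equivalent to
\begin{equation*}
e^{ax}\,g\!\left(e^{a(y-x)}\right)=e^{ay}\,g\!\left(e^{a(x-y)}\right), \qquad x,y\in\R.
\end{equation*}
Setting $t:=e^{a(y-x)}$, and observing that $e^{a(x-y)}=1/t$ and $e^{ay}/e^{ax}=t$, this reduces to $g(t)=t\,g(1/t)$.

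The final step is to check that the range of $t$ is all of $(0,\infty)$: since $a\neq 0$ and $y-x$ is arbitrary in $\R$, the map $(x,y)\mapsto e^{a(y-x)}$ is surjective onto $(0,\infty)$. Hence the detailed balance equation being valid for all $(x,y)\in\R^2$ is equivalent to $g(t)=t\,g(1/t)$ holding for all $t>0$, giving both directions simultaneously. There is no real obstacle here beyond this surjectivity check and the fact that $Z(x)$ is constant for a log-linear target (which is what kills the otherwise problematic dependence of $Z$ on $x$).
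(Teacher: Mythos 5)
Your proof is correct and follows essentially the same route as the paper's: both arguments rest on the observation that $Z(x)$ is constant for a log-linear target, cancel the symmetric base kernel in the detailed-balance identity, and reduce to the functional equation $g(t)=tg(1/t)$ via the substitution $t=e^{a(y-x)}$ together with the surjectivity of this map onto $(0,\infty)$ when $a\neq 0$. Your presentation handles both implications in one chain of equivalences rather than the paper's two separate directions, but the substance is identical.
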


\begin{proof}
Since $\nabla\log\pi(x)=a$ for every $x\in\R$, it follows that the normalizing constant, $Z$, of $q(x,y)$ in \eqref{eq:gradient_based} is independent of $x$.
First we show that $g(t)=tg(1/t)$ implies reversibility.
From the symmetry of $\mu_{\sigma}$ and $g(t)=tg(1/t)$ it follows 
\begin{align*}
\pi(x)&q(x,y)= 
\exp(ax+b)Z^{-1}g\left(\exp\left(a(y-x)\right)\right)\mu_{\sigma}(x,y)\\
&=
\exp(ax+b)Z^{-1}\exp\left(a(y-x)\right)g\left(\exp\left(-a(y-x)\right)\right)\mu_{\sigma}(y,x)\\
&=
\pi(y)q(y,x)\,,
\end{align*}
which implies that $q$ is $\pi$-reversible.
Conversely, if $q$ is $\pi$-reversible, then  
\begin{align*}
1=\frac{\pi(x)q(x,y)}{\pi(y)q(y,x)}&
=
\frac{\exp(a(x-y))g\left(1/\exp\left(a(x-y)\right)\right)}{g\left(\exp\left(a(x-y)\right)\right)}
=\frac{tg(1/t)}{g(t)}
\,,
\end{align*}
for $t=\exp(a(x-y))$. For $a\neq 0$, $\exp(a(x-y))$ takes any positive value as $x,y\in \R^d$ and thus we have $g(t)=tg(1/t)$ for every $t>0$.
\end{proof}

\begin{remark}
Note that $\pi(x)=\exp(ax+b)$ is an improper density function because $\int_{\R}\exp(ax+b)dx=\infty$ for any choice of $a$ and $b$.
This, however, does not pose any issue in defining $\pi$-reversible kernels as usual. \comment{: a transition kernel $P$ is $\pi$-reversible if $\pi(dx)P(x,dy)=\pi(dy)P(y,dx)$ as measures on $\R^d\times\R^d$.}
\end{remark}

\section{Locally balanced proposals and skew-symmetric distributions}
In this section we show that the only balancing function $g$ leading to a skew-symmetric distribution is $g(t) = t/(1+t)$.
Following \citep{azzalini2013skew}, a skew-symmetric distribution on $\mathbb{R}$ is distribution for which the probability density can be written
\[
f(z) = 2f_0(z)G(z),
\]
for any $z \in \mathbb{R}$, where $f_0(z)=f_0(-z)$, $G(z)\geq 0$ and
\begin{equation} \label{eq:skew-sym}
G(z) + G(-z) = 1.
\end{equation}
In the first-order locally-balanced framework, if the current point is $x$ then the proposal has density
\[
f_x(z) = Z(x)^{-1}\mu_\sigma(z)g(e^{\nabla\log\pi(x)z}),
\]
where, setting $t=e^{\nabla\log\pi(x)z}$ the balancing function $g$ satisfies
\begin{equation} \label{eq:balance_supp}
    g(t) = tg(1/t).
\end{equation}
Equating \eqref{eq:skew-sym} and \eqref{eq:balance_supp} gives $G(z) = g(e^{\nabla\log\pi(x)z})=g(t)$, implying that in this case
\[
G(-z) = g(1/t).
\]
Therefore, dividing \eqref{eq:skew-sym} by $G(1/z)$, using the above and combining with \eqref{eq:balance_supp} gives
\[
t + 1 = \frac{1}{g(1/t)},
\]
and combining with \eqref{eq:balance_supp} gives
\[
g(t) = \frac{t}{1+t}.
\]
as required.

\section{Pre-conditioning the Barker proposal}\label{sec:barker_prec}
The diagonal non-isotropic version of the Barker scheme (corresponding to using a diagonal preconditioning matrix) is a simple variation of Algorithm \ref{alg:barkerMH} from the paper and is described in Algorithm \ref{alg:barker_diag}. 
The acceptance probability related to Algorithm \ref{alg:barker_diag} is exactly the same $\alpha^B(x,y)$ defined in \eqref{eq:Bark_accept_diag}.
\begin{algorithm}
\caption{Diagonal Barker proposal on $\R^d$}
\label{alg:barker_diag}
\raggedright \textbf{Require:} current point $x \in \R^d$ and local scales $(\sigma_1,\dots,\sigma_d)\in(0,\infty)^d$

Independently for each $i\in\{1,\dots,d\}$ do:
\begin{enumerate}
\item Draw $z_i\sim \mu_{\sigma_i}$
\item Calculate $p_i(x,z_i) = 1/(1+e^{-z_i\partial_i\log\pi(x)})$ 
\item Set $b_i(x,z_i) = 1$ with probability $p_i(x,z_i)$, and $b_i(x,z_i) = -1$ otherwise
\item Set $y_i = x_i + b_i(x,z_i) \times z_i$
\end{enumerate}
\textbf{Output:} the resulting proposal $y$.
\end{algorithm}

The general pre-conditioned version of the Barker algorithm is obtained by defining an appropriate linear transformation to the target variables $x$ and then applying the standard Barker algorithm (Algorithm \ref{alg:barkerMH} from the paper) in the transformed space.
More precisely, given a target $\pi$ and a covariance matrix $\Sigma$ with Cholesky factor $C$, define the transformed variables $\tilde{x}=(C^T)^{-1}x$ with distribution $\tilde{\pi}(\tilde{x})\propto \pi(C^T \tilde{x})$ and log-gradient $\nabla\log\tilde{\pi}(\tilde{x})=\nabla\log\pi(C^T \tilde{x}) C^T$.
One then applies the standard (isotropic) Barker scheme described in Algorithm \ref{alg:barkerMH} to the pre-conditioned target $\tilde{\pi}$.
As typically done with pre-conditioned MALA, the resulting preconditioned Barker scheme can be implemented without explicitly defining the auxiliary variables $\tilde{x}$ and transformed target $\tilde{\pi}$, but rather keeping the original target $\pi$ and modifying the proposal distribution. 
The resulting pre-conditioned Barker proposal distribution and corresponding Metropolis-Hastings scheme are described in Algorithms \ref{alg:barker_prec} and \ref{alg:MH_barker_prec}, respectively.

\begin{algorithm}
\caption{Preconditioned Barker proposal on $\R^d$}
\label{alg:barker_prec}
\raggedright \textbf{Require:} current point $x \in \R^d$ and preconditioning matrix $C=chol(\Sigma)$.
\begin{enumerate}
\item Draw $z_i\sim \mu$ independently for each $i\in\{1,\dots,d\}$
\item Calculate $p_i(x,z) = 1/(1+e^{-z_ic_i(x)})$ where $c_i(x)=(\nabla\log\pi(x)\cdot C^T)_i$ 
\item For each $i$, set $\tilde{z}_i = z_i$ with probability $p_i(x,z)$, and $\tilde{z}_i = -z_i$ otherwise
\item Set $y = x + C^T \tilde{z}$ where $\tilde{z}=(\tilde{z}_1,\dots,\tilde{z}_d)$
\end{enumerate}
\textbf{Output:} the resulting proposal $y$.
\end{algorithm}

\begin{algorithm}
\caption{Metropolis--Hastings with preconditioned Barker proposal}
\label{alg:MH_barker_prec}
\raggedright \textbf{Require:} starting point for the chain $x^{(0)} \in \R^d$, and preconditioning matrix $C=chol(\Sigma)$.

Set $t=0$ and do the following:
\begin{enumerate}
\item Given $x^{(t)}=x$, draw $y$ using Algorithm \ref{alg:barker_prec} and compute
\[
\alpha^B(x,y) = \min\left( 1, \frac{\pi(y)}{\pi(x)}\times 
\prod_{i=1}^d \frac{1+e^{-z_ic_i(x)}}
{1+e^{z_i c_i(y)}} \right).
\]
where $z_i=((C^T)^{-1}(y-x))_i$ and $c_i(x)=(\nabla\log\pi(x)\cdot C^T)_i$
\item Set $x^{(t+1)} = y$ with probability $\alpha^B(x,y)$, and $x^{(t+1)} = x$ otherwise 
\item If $t+1<N$, set $t \leftarrow t+1$ and return to step 1, otherwise stop.
\end{enumerate}
\textbf{Output:} the Markov chain $\{x^{(0)},\dots,x^{(N)}\}$.
\end{algorithm}

\section{Additional simulation studies}
In this section we provide various additional details on the simulation studies presented in the paper.

\subsection{Additional example for Section \ref{sec:sim_tuning}}
In Figure \ref{fig:20d} we display a phenomenon analogous to Figure \ref{fig:1d} on a 20-dimensional example in which each component of $\pi(\cdot)$ is an independent $N(0,\eta_i^2)$ random variable, with $\eta_1=0.01$ and $\eta_i=1$ for $i=2,...,20$.  
Here the performance of MALA starts deteriorating drastically as soon as the step-size exceeds the scale of the first component as we would expect from the theory developed in Section \ref{sec:hetero}.
On the other hand both the random walk and Barker schemes can function adequately with larger than optimal step-sizes, and as a result achieve a much higher expected squared jump distance on all the other coordinates. 
\begin{figure}[h!]
\centering
\includegraphics[width=0.9\linewidth]{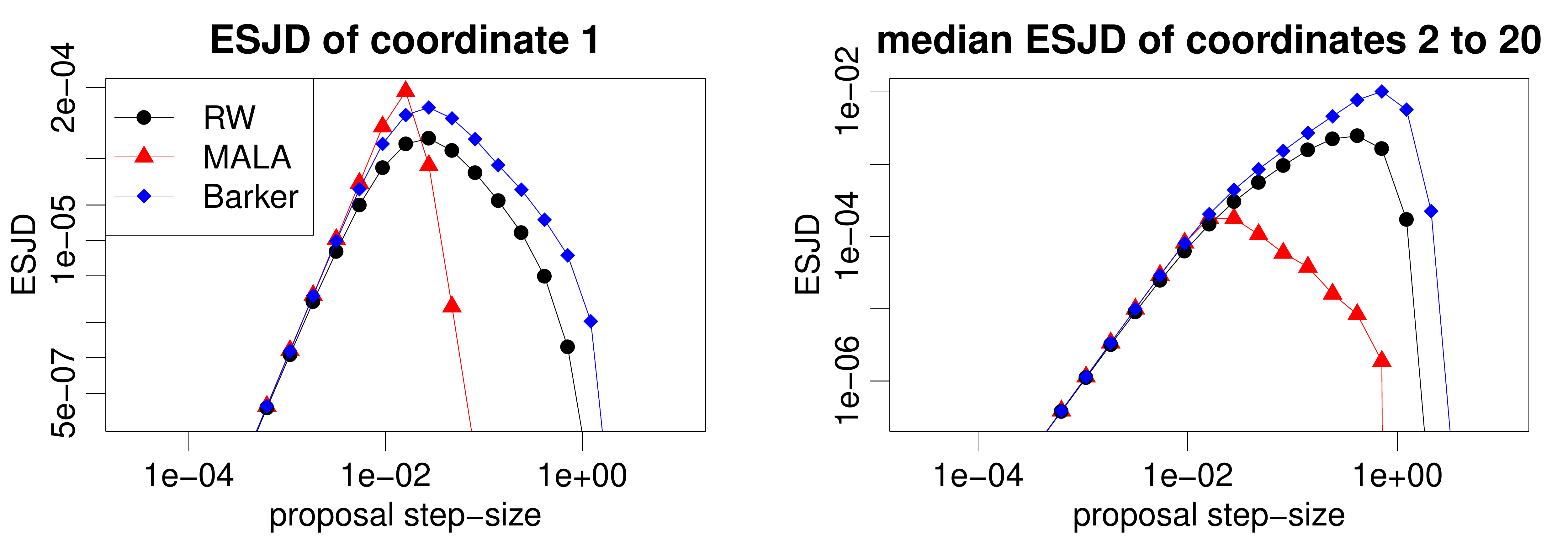}
\caption{Expected squared jump distance (ESJD) against proposal step-size for RW, MALA and Barker on a 20-dimensional target in which one component has a smaller scale than all others.}
\label{fig:20d}
\end{figure}

\subsection{Traceplots for Scenarios 2-4 from Section \ref{sec:diag_adapt}}\label{sec:traceplots}
Figure \ref{fig:k=1_learn06} displays the evolution of tuning parameters and MCMC trajectories when targeting the distribution described in Scenario 1 of that section.
\begin{figure}[h!]
\centering
\includegraphics[width=\linewidth]{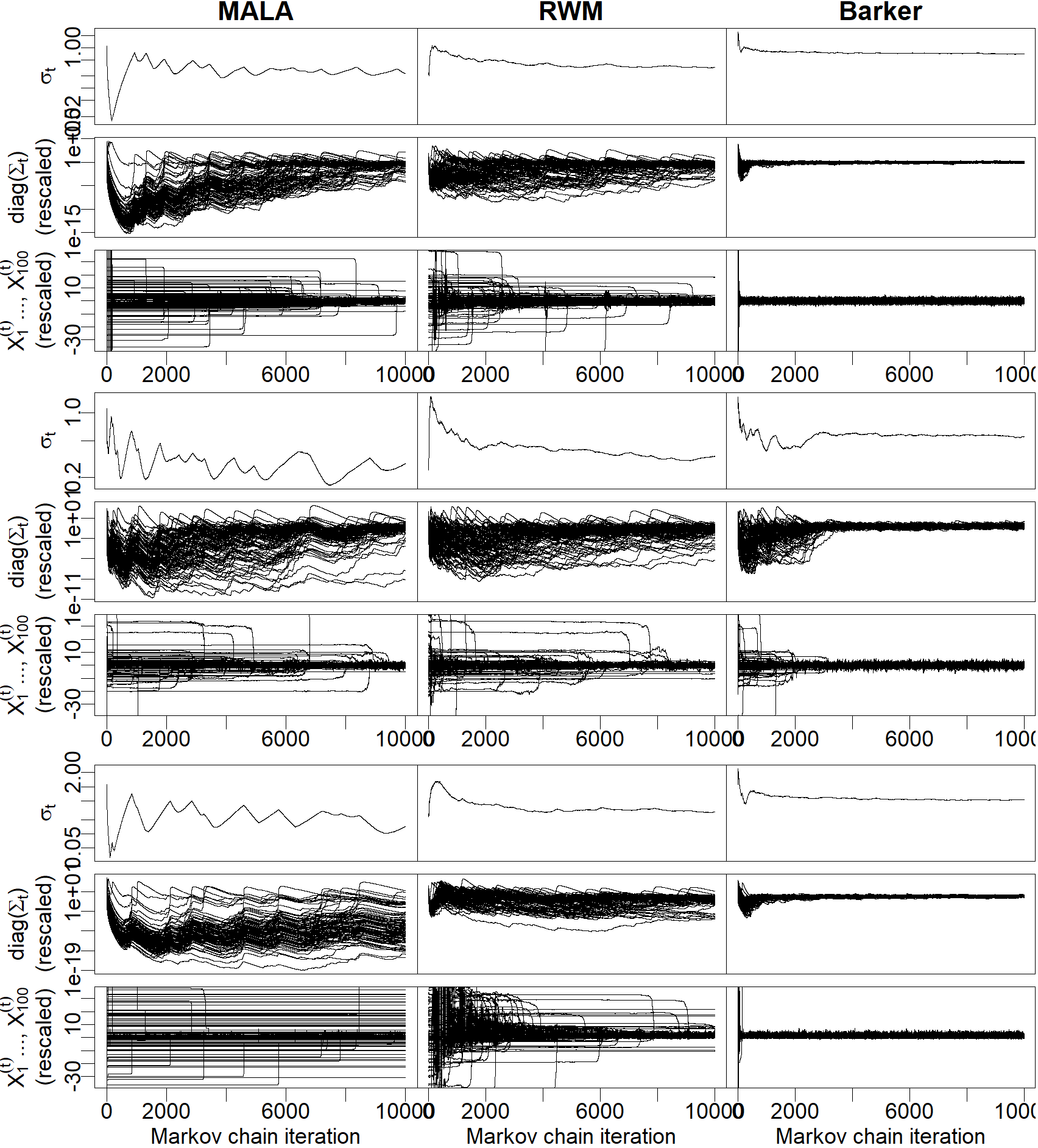}\caption{Same as Figure \ref{fig:k=1_learn06} for the target distributions of Scenario 2 (rows 1 to 3), Scenario 3 (rows 4 to 6) and Scenario 4 (rows 7 to 9).
For each scenario, the first row displays the traceplot of the global scale $\sigma_t$; the second row the ones of the \emph{normalized} local scales $\Sigma_{t,ii}/\Sigma_{ii}$ for $i=1,\dots,100$; and the third row the ones of the \emph{normalized} coordinates $X^{(t)}_i/\Sigma_{ii}^{1/2}$ for $i=1,\dots,100$.
See Section \ref{sec:diag_adapt} for more details.
}
\label{fig:scenarios_234}
\end{figure}
Here we provide analogous illustrations for Scenarios 2, 3 and 4.
Figure \ref{fig:scenarios_234} displays, for each scenario and each algorithm, the traceplot of the global scale $(\sigma_t)_{t\geq 1}$, the ones of the normalized local scales $(\Sigma_{t,ii}/\Sigma_{ii})_{t\geq 1}$ for $i\in\{1,\dots,100\}$ and the ones of the normalized Markov chains coordinates $(X^{(t)}_i/\Sigma_{ii}^{1/2})_{t\geq 1}$ for $i\in\{1,\dots,100\}$.
Here $\Sigma$ is the covariance of the target distribution and normalization is used to facilitate readability, so that all normalized local scales converge to $1$  as $t\to \infty$ and all normalized coordinates have a $N(0,1)$ limiting distribution as $t\to \infty$.
Overall, the traceplots for Scenarios 2-4 display a qualitatively similar behaviour to the ones of Scenario 1 in Figure \ref{fig:k=1_learn06}. 
See Section \ref{sec:diag_adapt} for more discussion.

\subsection{Comparison to truncated or tamed gradients}
Consider Metropolis--Hastings proposals of the form
$$
y  = x + \frac{\sigma^2}{2}G(x) + \sigma \xi\,,
$$
for some $\sigma>0$, $G:\R^d\to\R^d$ and $\xi \sim N(0,I)$.
Setting $G(x)=\nabla \log\pi(x)$ leads to the MALA proposal.
A common way to improve the stability of MALA in the literature is to truncate or tame the gradient $\nabla \log\pi(x)$. 
For example, in the truncated MALA algorithm (MALTA) \citep{atchade2006adaptive} we have
$$
G(x)  = \frac{\delta}{\max\left\{\delta,\|\nabla \log\pi(x)\|\right\}}\nabla \log\pi(x)\,,
$$
for some $\delta>0$,
while in the component-wise tamed MALA (MALTAc) \citep[eq.(4)]{brosse2018tamed} the function $G(x)=(G_1(x),\dots,G_d(x))$ is defined component-wise as
$$
G_i(x)  = \frac{\partial_i(x)}{1+\sigma^2 |\partial_i(x)|}\,.
$$
The above taming is defined in such a way that $|G_i(x)|$ converges to $\sigma^{-2}$ as $|\partial_i(x)|\to\infty$, meaning that in this case the upper bound for tamed gradients is authomatically chosen in a way that depends on $\sigma$.

These schemes are effective in achieving geometric ergodicity also for light tails \citep{atchade2006adaptive}. 
They are less effective, however, in terms of being robust to tuning and they are very sensitive to the choice of truncation parameter (respectively $\delta$ and $\sigma^{-2}$).
\begin{figure}[h!]
\centering
\includegraphics[width=\linewidth]{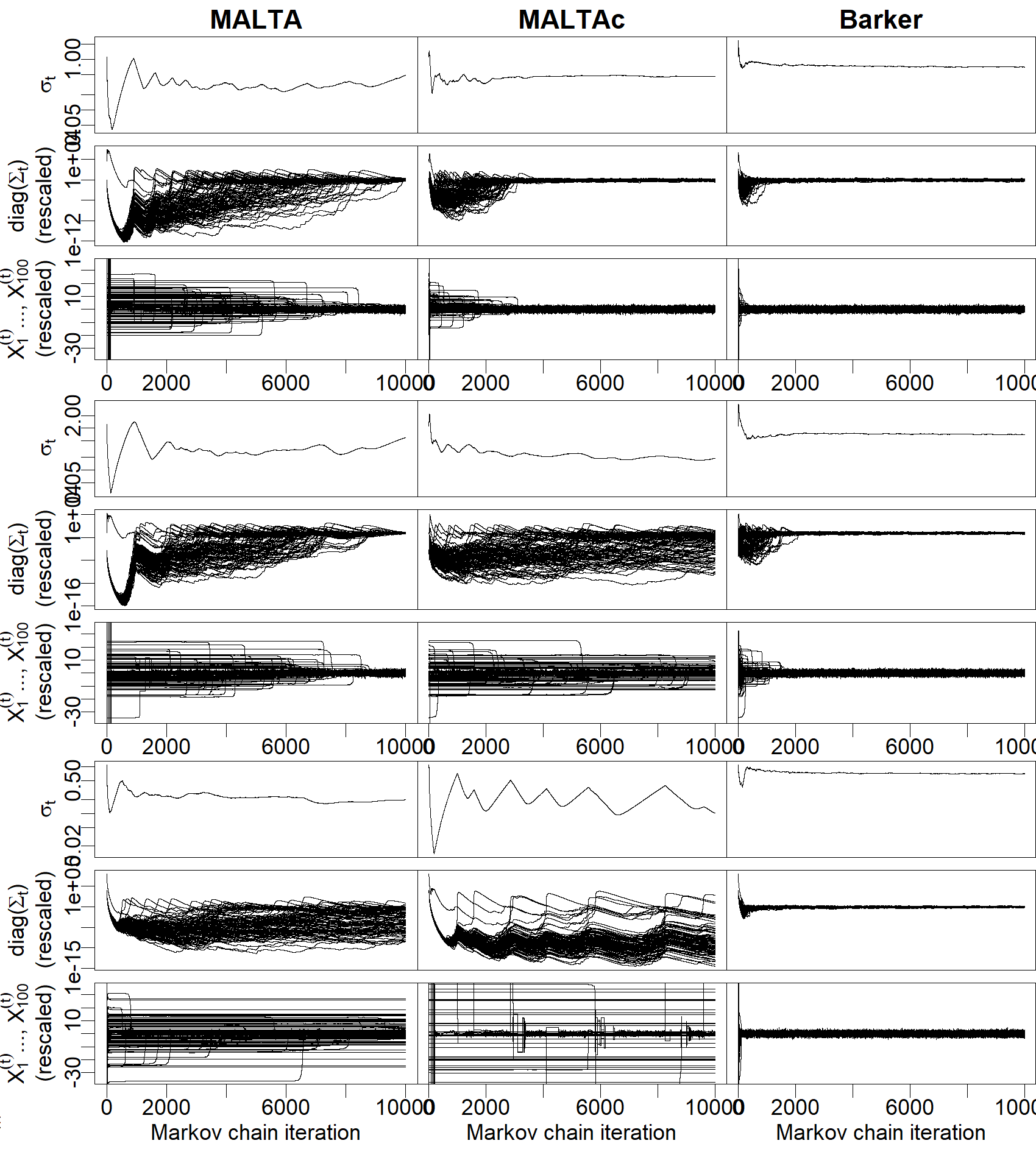}
\caption{Comparison of MALTA, MALTAc and Barker on target distributions with one small component.
Rows 1-3: same target considered in Figure \ref{fig:k=1_learn06} (100-dimensional Gaussian with first component standard deviation equal to $0.01$ and all others standard deviations equal to 1).
Rows 4-6 and rows 7-9: re-scaled versions where the scales of all coordinates are either multiplied or divided by 100.
See Figure \ref{fig:scenarios_234} for an explanation of each parameter plotted.
}
\label{fig:MALTA_k=1_scales}
\end{figure}
We illustrate this point in Figure \ref{fig:MALTA_k=1_scales}. 
There we compare MALTA, MALTAc and Barker on targets being 100-dimensional Gaussian distributions with one component much smaller than the others, analogously to the first scenario of Section \ref{sec:diag_adapt}. 
For MALTA we set $\delta=1000$, as is done for example, in \citet{atchade2006adaptive}. 
We also tried setting $\delta=100$ without observing major differences.
Rows 1-3 of Figure \ref{fig:MALTA_k=1_scales} consider exactly the same target of Figure \ref{fig:k=1_learn06}, which is a 100-dimensional Gaussian where the first component standard deviation is equal to $0.01$ and all others standard deviations are equal to 1.
In this case both MALTA and MALTAc improve over MALA, and in particular that MALTAc manages to converge to stationarity in around 4000 iterations, although this is still significantly slower than Barker (which only requires few hundred iterations).
We then consider modifying the target distribution by either multiplying the scales of all coordinates by 100, resulting in the first component standard deviation equal to $1$ and all others standard deviations equal to $100$, or dividing all scales by 100, resulting in the first component standard deviation equal to $10^{-4}$ and all others standard deviations equal to $10^{-2}$.
The results are reported in rows 4-6 and rows 7-9 of Figure \ref{fig:MALTA_k=1_scales}, respectively.
We observe a dramatic deterioration in the performances of both MALTA and MALTAc, while the performance of the Barker scheme are much less affected.
The underlying reason is that MALTA and MALTAc are highly sensitive to the choice of truncation parameter (respectively $\delta$ and $\sigma^{-2}$), which needs to be tuned appropriately depending on the scales of the target distribution.

These illustrative simulations suggest that ad-hoc strategies to improve the robustness of gradient-based MCMC, such as truncating or taming gradients, are intrinsically more fragile and sensitive to heterogeneity and scales compared to a more principled solution such as the Barker algorithm, in which robustness arises naturally from the proposal mechanism.
In addition to this, truncating and taming can be thought of as introducing a `bias' into the proposal mechanism, in the sense that the resulting proposal is no longer first-order exact.
Depending on how the truncation level $\delta$ is scaled with the dimensionality $d$, this can compromise the $d^{-1/3}$ scaling behaviour discussed in the paper.

\end{document}